\documentclass[12pt, onecolumn,a4paper,accepted=2022-09-31]{quantumarticle}
\pdfoutput=1 
\usepackage[T1]{fontenc}

\usepackage{tikz}
\usetikzlibrary{positioning, calc}
\usetikzlibrary{calc}
\usetikzlibrary{arrows}
\usepackage{tikz-3dplot}
\usepackage{graphicx}
\usepackage{mdwlist}
\usepackage{color}
\usepackage{thmtools}
\usepackage{amssymb}
\usepackage{physics}
\usepackage{amsmath}
\usepackage{array}
\usepackage{doi}
\usepackage{url,hyperref}
\usepackage[numbers,sort&compress]{natbib}
\usetikzlibrary{fadings}
\usetikzlibrary{decorations.pathmorphing}
\usepackage{mathrsfs}
\usepackage{authblk}
\usepackage{amsmath}
\usepackage[mathscr]{euscript}
\usepackage{enumitem}

\interfootnotelinepenalty=10000

\DeclareMathAlphabet\mathbfcal{OMS}{cmsy}{b}{n}

\usepackage{cleveref}

\usetikzlibrary{decorations.pathreplacing}
\tikzset{snake it/.style={decorate, decoration=snake}}

\usetikzlibrary{decorations.pathreplacing,decorations.markings}

\tikzset{
	%Define standard arrow tip
    >=stealth',
    %Define style for boxes
    punkt/.style={
           rectangle,
           rounded corners,
           draw=black, very thick,
           text width=6.5em,
           minimum height=2em,
           text centered},
    % Define arrow style
    pil/.style={
           ->,
           thick,
           shorten <=2pt,
           shorten >=2pt,},
    % style to apply some styles to each segment of a path
  on each segment/.style={
    decorate,
    decoration={
      show path construction,
      moveto code={},
      lineto code={
        \path [#1]
        (\tikzinputsegmentfirst) -- (\tikzinputsegmentlast);
      },
      curveto code={
        \path [#1] (\tikzinputsegmentfirst)
        .. controls
        (\tikzinputsegmentsupporta) and (\tikzinputsegmentsupportb)
        ..
        (\tikzinputsegmentlast);
      },
      closepath code={
        \path [#1]
        (\tikzinputsegmentfirst) -- (\tikzinputsegmentlast);
      },
    },
  },
  % style to add an arrow in the middle of a path
  mid arrow/.style={postaction={decorate,decoration={
        markings,
        mark=at position .5 with {\arrow[#1]{stealth'}}
      }}}
}

\usepackage{hyperref}
\usepackage{slashed}

\usepackage{float}		% this is to place figures where requested!
\usepackage{graphicx}		% needed for the figures
\usepackage{subcaption}
\usepackage{amsmath}

\DeclareMathOperator{\Mod}{mod}

%some quick commands for common symbols

%\newcommand{\rank}{\text{rank}}

\newcommand{\defi}{:=}
\usepackage{bbold}

\newcommand{\moddl}{\text{Mod}_d\text{L}}

\newcommand{\spacet}{\text{SPACE}_{(2)}}

%comments

%math environments
\newtheorem{theorem}{Theorem}

\newtheorem{conjecture}[theorem]{Conjecture}

\newtheorem{definition}[theorem]{Definition}

\newtheorem{lemma}[theorem]{Lemma}

\newtheorem{remark}[theorem]{Remark}

\newtheorem{protocol}[theorem]{Protocol}
\newenvironment{proof}[1][Proof]{\noindent\textbf{#1.}}{\ \rule{0.5em}{0.5em}}

\begin{document} 

\title{Complexity and entanglement in non-local computation and holography}

\author[1]{Alex May}
\orcid{0000-0002-4030-5410}

\affil[1]{Stanford University}

\begin{abstract}
Does gravity constrain computation? 
We study this question using the AdS/CFT correspondence, where computation in the presence of gravity can be related to non-gravitational physics in the boundary theory.
In AdS/CFT, computations which happen locally in the bulk are implemented in a particular non-local form in the boundary, which in general requires distributed entanglement. 
In more detail, we recall that for a large class of bulk subregions the area of a surface called the ridge is equal to the mutual information available in the boundary to perform the computation non-locally. 
We then argue the complexity of the local operation controls the amount of entanglement needed to implement it non-locally, and in particular complexity and entanglement cost are related by a polynomial. 
If this relationship holds, gravity constrains the complexity of operations within these regions to be polynomial in the area of the ridge. 
\end{abstract}

\vfill

\maketitle
%\flushbottom

\pagebreak

\tableofcontents

%%%%%%%%%%%%%%%%%%%%%%%%%%%%%%%%%%%%%%%%%%%%%%%%
\section{Introduction}
%%%%%%%%%%%%%%%%%%%%%%%%%%%%%%%%%%%%%%%%%%%%%%%%

Quantum information theory addresses the fundamental limits and capabilities of information processing given the rules of quantum mechanics. 
However, to fully understand information processing we need to go beyond quantum mechanics: our universe is described more completely by a theory of quantum gravity.
How can we understand the theory of quantum-gravitational information processing?

Progress in understanding information processing in the presence of gravity has been made on a number of fronts. 
Most famously, a basic question is whether unitarity is preserved in the context of gravity, and recent progress has supported that it is \cite{penington2020entanglement,almheiri2019entropy}. 
Others have asked about the computational properties of quantum field theories, with an eventual goal of understanding the computational power of quantum gravity \cite{jordan2018bqp}. 
As we discuss in more detail later on, a well-known claim is that gravity should constrain computation \cite{lloyd2000ultimate}. However, existing arguments for this idea have flaws \cite{jordan2017arbitrarily}.

A precise framework within which we can address questions about quantum gravity is the AdS/CFT correspondence \cite{maldacena1999large,witten1998anti}. 
There, a theory of quantum gravity in asymptotically $d+1$ dimensional anti de Sitter space (``the bulk'') is given an alternative, non-gravitational description as a conformal field theory, which lives in the boundary of that spacetime. 
See e.g. \cite{harlow2018tasi} for a review. 
In this setting, we can ask questions about information processing in the bulk, gravitating, theory by translating them into non-gravitational questions in the boundary. 

In the AdS/CFT correspondence quantum-gravitational physics is captured by a purely quantum mechanical description in the boundary. 
Because of this, we could expect that quantum-gravitational information processing is captured by standard quantum information theory --- we need only note the conformal field theory is quantum mechanical, so described by the usual theory of quantum information, and we are apparently done. 
However, the true situation is more subtle. 
The gravitational and non-gravitational descriptions have additional structure beyond quantum mechanics alone: they have causal structure, and this causal structure constrains information processing. 
Further, the same information processing task viewed in the bulk and boundary perspectives will have different causal structures. 
We argue that consequently gravity implies constraints on information processing, beyond those that a naive application of quantum mechanics would suggest. 

\begin{figure*}
\begin{center}
\begin{subfigure}[b]{.45\textwidth}
\begin{center}
\begin{tikzpicture}[scale=0.55]
    
    \draw[postaction={on each segment={mid arrow}}] (-4,0) -- (0,4);
    \draw[postaction={on each segment={mid arrow}}] (4,0) -- (0,4);
    \draw[postaction={on each segment={mid arrow}}] (0,4) -- (4,8);
    \draw[postaction={on each segment={mid arrow}}] (0,4) -- (-4,8);
    
    \node[below left] at (-4,0) {$\mathcal{C}_0$};
    \draw[fill=black] (-4,0) circle (0.15);

    \node[below right] at (4,0) {$\mathcal{C}_1$};
    \draw[fill=black] (4,0) circle (0.15);

    \node[above right] at (4,8) {$\mathcal{R}_1$};
    \draw[fill=blue] (4,8) circle (0.15);

    \node[above left] at (-4,8) {$\mathcal{R}_0$};
    \draw[fill=blue] (-4,8) circle (0.15);
    
    \draw[fill=yellow] (0,4) circle (0.30);
    
    \node[below] at (0,-0.53) {$ $};
    
    \node[above left] at (-3,1) {$A_0$};
    \node[above right] at (3,1) {$A_1$};
    
    \node[below left] at (-3,7) {$B_0$};
    \node[below right] at (3,7) {$B_1$};
    
    \end{tikzpicture}
\end{center}
\caption{}
\label{subfig:localschematic}
\end{subfigure}
\hfill
\begin{subfigure}[b]{.45\textwidth}
\begin{center}
\begin{tikzpicture}[scale=0.55]

    \node[below left] at (-4,0) {$\mathcal{C}_0$};
    \draw[fill=black] (-4,0) circle (0.15);

    \node[below right] at (4,0) {$\mathcal{C}_1$};
    \draw[fill=black] (4,0) circle (0.15);

    \node[above right] at (4,8) {$\mathcal{R}_1$};
    \draw[fill=blue] (4,8) circle (0.15);

    \node[above left] at (-4,8) {$\mathcal{R}_0$};
    \draw[fill=blue] (-4,8) circle (0.15);
    
    \draw[postaction={on each segment={mid arrow}}] (-4,0) -- (-2,2) -- (-2,6) -- (-4,8);
    \draw[postaction={on each segment={mid arrow}}] (4,0) -- (2,2) -- (2,6) -- (4,8);
    \draw[postaction={on each segment={mid arrow}}] (-2,2) -- (0,4) -- (2,6);
    \draw[postaction={on each segment={mid arrow}}] (2,2) -- (0,4) -- (-2,6);
    
    \draw[dashed] (2,2) -- (0,0) -- (-2,2);
    \node[below] at (0,0) {$\Psi_{LR}$};
    
    \draw[fill=yellow] (-2,2) circle (0.3);
    \draw[fill=yellow] (2,2) circle (0.3);
    \draw[fill=yellow] (-2,6) circle (0.3);
    \draw[fill=yellow] (2,6) circle (0.3);
    
    \node[above left] at (-3,1) {$A_0$};
    \node[above right] at (3,1) {$A_1$};
    
    \node[below left] at (-3,7) {$B_0$};
    \node[below right] at (3,7) {$B_1$};
    
\end{tikzpicture}
\end{center}
\caption{}
\label{subfig:nonlocalschematic}
\end{subfigure}
\begin{subfigure}[b]{0.45\textwidth}
\centering
\tdplotsetmaincoords{15}{0}
\begin{tikzpicture}[scale=1.05,tdplot_main_coords]
\tdplotsetrotatedcoords{0}{35}{0}
\draw[gray] (-2,0.5,0) -- (-2,6.25,0);
\draw[gray] (2,0.5,0) -- (2,6.25,0);
    
    \begin{scope}[tdplot_rotated_coords]
    
    %R2 filling
    \draw[domain=0:45,variable=\x,smooth, fill=black!60!,opacity=0.8] plot ({-2*sin(\x)}, {1+\x/45}, {2*cos(\x)}) -- plot ({-2*sin((45-\x))}, {3-(45-\x)/45}, {2*cos(45-\x)}) --  plot ({2*sin(\x)}, {3-\x/45}, {2*cos(\x)}) -- plot ({2*sin(45-\x)}, {1+(45-\x)/45}, {2*cos(45-\x)});
    
    %R2 diamond
    \draw[domain=0:45,variable=\x,smooth,thick] plot ({-2*sin(\x)}, {1+\x/45}, {2*cos(\x)});
    \draw[domain=0:45,variable=\x,smooth,thick] plot ({2*sin(\x)}, {1+\x/45}, {2*cos(\x)});
    \draw[domain=0:45,variable=\x,smooth,thick] plot ({-2*sin(\x)}, {3-\x/45}, {2*cos(\x)});
    \draw[domain=0:45,variable=\x,smooth,thick] plot ({2*sin(\x)}, {3-\x/45}, {2*cos(\x)});
    
    \begin{scope}[canvas is xz plane at y=0.5]
    \draw[gray] (0,0) circle[radius=2] ;
    \end{scope}
    
    \begin{scope}[canvas is xz plane at y=6.25]
    \draw[gray] (0,0) circle[radius=2] ;
    \end{scope}
    
    %bottom parts of red arrows going into p, split so that they look behind the entanglement wedge shading
    \draw[red] (0,1,-2) -- (0,2,-1);
    \draw[red] (0,1,2) -- (0,2,1);
    
    \begin{scope}[canvas is xz plane at y=2]
    
    \draw[gray] (0,0) circle (2);
    
    %lightgray shading of entanglement wedge
    %\draw [domain=-45:45,fill=lightgray,opacity=0.8] plot ({2*cos(\x+90)}, {2*sin(\x+90)}) -- (-1.41,1.41) to [out=-45,in=45] (-1.41,-1.41) -- plot ({2*cos(\x-90)}, {2*sin(\x-90)}) -- (1.41,-1.41) to [out=135,in=-135] (1.41,1.41);
    
    %Regions
    \draw [green,ultra thick,domain=-45:45] plot ({2*cos(\x+90)}, {2*sin(\x+90)});
    
    %RT surfaces
    %\draw[blue,thick] (1.41,1.41) to [out=-135,in=+135] (1.41,-1.41);
    %\draw[blue,thick] (-1.41,1.41) to [out=-45,in=45] (-1.41,-1.41);
    
    \end{scope}
    
    %dashed lines to J's
    \draw[domain=0:90,variable=\x,smooth,dashed] plot ({2*sin(\x+180)}, {3+\x/45}, {2*cos(\x+180)});
    \draw[domain=0:90,variable=\x,smooth,dashed] plot ({2*sin(\x+180)}, {3+\x/45}, {-2*cos(\x+180)});
    \draw[domain=0:90,variable=\x,smooth,dashed] plot ({-2*sin(\x+180)}, {3+\x/45}, {2*cos(\x+180)});
    \draw[domain=0:90,variable=\x,smooth,dashed] plot ({-2*sin(\x+180)}, {3+\x/45}, {-2*cos(\x+180)});
    
    %null rays from p
    \draw[thick, red,-triangle 45] (0,3,0) -- (2,5,0);
    \draw[thick, red,-triangle 45] (0,3,0) -- (-2,5,0);
    
    %top ends of red arrows going into p
    \draw[thick,red,-triangle 45] (0,2,-1) -- (0,3,0);
    \draw[thick,red,-triangle 45] (0,2,1) -- (0,3,0);
    
    %markers and labels for the J's
    \draw plot [mark=*, mark size=1.5] coordinates{(2,5,0)};
    \node[above] at (2,5,0) {$\mathcal{R}_0$};
    \draw plot [mark=*, mark size=1.5] coordinates{(-2,5,0)};
    \node[above] at (-2,5,0) {$\mathcal{R}_1$};
    
    %R1 filling
    \draw[domain=0:45,variable=\x,smooth, fill=black!50!,opacity=0.8] plot ({-2*sin(\x+180)}, {1+\x/45}, {2*cos(\x+180)}) -- plot ({-2*sin((45-\x)+180)}, {3-(45-\x)/45}, {2*cos(45-\x+180)}) --  plot ({2*sin(\x+180)}, {3-\x/45}, {2*cos(\x+180)}) -- plot ({2*sin(45-\x+180)}, {1+(45-\x)/45}, {2*cos(45-\x+180)});
    
    \begin{scope}[canvas is xz plane at y=2]
    %R1 region
    \draw [green,ultra thick,domain=-45:45] plot ({2*cos(\x-90)}, {2*sin(\x-90)});
    \end{scope}
    
    %R1 diamond
    \draw[domain=0:45,variable=\x,smooth,thick] plot ({-2*sin(\x+180)}, {1+\x/45}, {2*cos(\x+180)});
    \draw[domain=0:45,variable=\x,smooth,thick] plot ({2*sin(\x+180)}, {1+\x/45}, {2*cos(\x+180)});
    \draw[domain=0:45,variable=\x,smooth,thick] plot ({-2*sin(\x+180)}, {3-\x/45}, {2*cos(\x+180)});
    \draw[domain=0:45,variable=\x,smooth,thick] plot ({2*sin(\x+180)}, {3-\x/45}, {2*cos(\x+180)});
    
    %markers and labels for p, c_1,c_2
    \draw plot [mark=*, mark size=1.5] coordinates{(0,1,-2)};
    \node[left] at (0,1,-2) {$\mathcal{C}_0$};
    \draw plot [mark=*, mark size=1.5] coordinates{(0,1,2)};
    \node[below] at (0,1,2) {$\mathcal{C}_1$};
    \draw plot [mark=*, mark size=1.5] coordinates{(0,3,0)};
    \node[left] at (0,3,0) {$J_{01\rightarrow 01}$};
    
    \end{scope}
    
    \begin{scope}[canvas is xz plane at y=2]
    \draw[lightgray,domain=10:45] plot ({2*cos(\x-90)}, {2*sin(\x-90)});;
    \end{scope}
    
    \end{tikzpicture}
    
\caption{}
\label{subfig:cylinder}
\end{subfigure}
\hfill
\begin{subfigure}[b]{.45\textwidth}
\begin{center}
\begin{tikzpicture}[scale=1.5]

 \draw (-2,0) -- (2,0) -- (2,2) -- (-2,2) -- (-2,0);
    
    \draw[gray,fill=gray,opacity=0.25] (-2,2) -- (0,0) -- (2,2);
    
    \draw[gray,fill=gray,opacity=0.25] (-2,2) -- (-2,0) -- (0,2) -- (-2,2);
    
    \draw[gray,fill=gray,opacity=0.25] (2,2) -- (2,0) -- (0,2) -- (2,2);
    
    \draw[blue,fill=blue,opacity=0.25] (-1,2) -- (-2,1) -- (-2,0) -- (1,0) -- (-1,2);
    
    \draw[blue,fill=blue,opacity=0.25] (1,2) -- (2,1) -- (2,0) -- (-1,0) -- (1,2);
    
    \draw[blue,fill=blue,opacity=0.25] (2,1) -- (2,0) -- (1,0) -- (2,1);
    \draw[blue,fill=blue,opacity=0.25] (-2,1) -- (-2,0) -- (-1,0) -- (-2,1);
    
    \draw[black] plot [mark=*, mark size=2] coordinates{(-2,0)};
    %\draw[->] (-2,-0.5)--(-2,-0.05);
    %\node[below] at (-2,-0.5) {$A_2$};
    \node[below left] at (-2,0) {$\mathcal{C}_1$};
    
    \draw[black] plot [mark=*, mark size=2] coordinates{(2,0)};
    \node[below right] at (2,0) {$c_1$};
    
    \draw[black] plot [mark=*, mark size=2] coordinates{(0,0)};
    %\draw[->] (0,-0.5)--(0,-0.05);
    %\node[below] at (0,-0.5) {$A_1$};
    \node[below left] at (0,0) {$\mathcal{C}_0$};
    
    %\draw[->] (-1,2) -- (-1,2.5);
    %\node[above] at (-1,2.5) {$B_1$};
    \draw[blue] plot [mark=*, mark size=2] coordinates{(-1,2)};
    \node[above right] at (-1,2) {$\mathcal{R}_0$};
    
    %\draw[->] (1,2) -- (1,2.5);
    %\node[above] at (1,2.5) {$B_2$};
    \draw[blue] plot [mark=*, mark size=2] coordinates{(1,2)};
    \node[above right] at (1,2) {$\mathcal{R}_1$};
    
    \node at (0,-1) {$ $};

\end{tikzpicture}
\end{center}
\caption{}
\label{subfig:boundary}
\end{subfigure}
\caption{(a) A computation on systems $A_0$, $A_1$ happening locally. The yellow circle represents a channel acting on input systems $A_0$ and $A_1$, and producing output systems $B_0$ and $B_1$. (b) A computation happening non-locally. $A_0$ is interacted with the $L$ system, and $A_1$ with the $R$ system, where $\Psi_{LR}$ is entangled. Then, a round of communication is exchanged, and a second round of operations on each side are performed. If carried out correctly, any local computation can be reproduced in this non-local form. See also \cref{eq:non-localform}. (c) A view of the bulk of AdS. Quantum systems $A_0,A_1$ travel inward from $\mathcal{C}_0$ and $\mathcal{C}_1$ to the region $J_{01\rightarrow 01}$, where they interact under the channel $\mathcal{N}_{A_0A_1\rightarrow B_0B_1}$. Systems $B_0$, $B_1$ travel outward to $r_0$ and $r_1$. (d) The same set-up viewed in the boundary, showing the regions $\hat{J}_{i\rightarrow 01}$ and $\hat{J}_{01\rightarrow i}$. Because there is no location where $A_0$ and $A_1$ can meet while still being in the past of $r_0,r_1$, the channel implemented in the bulk must be reproduced in the non-local form shown above.}
\label{fig:cylinder1}
\end{center}
\end{figure*}
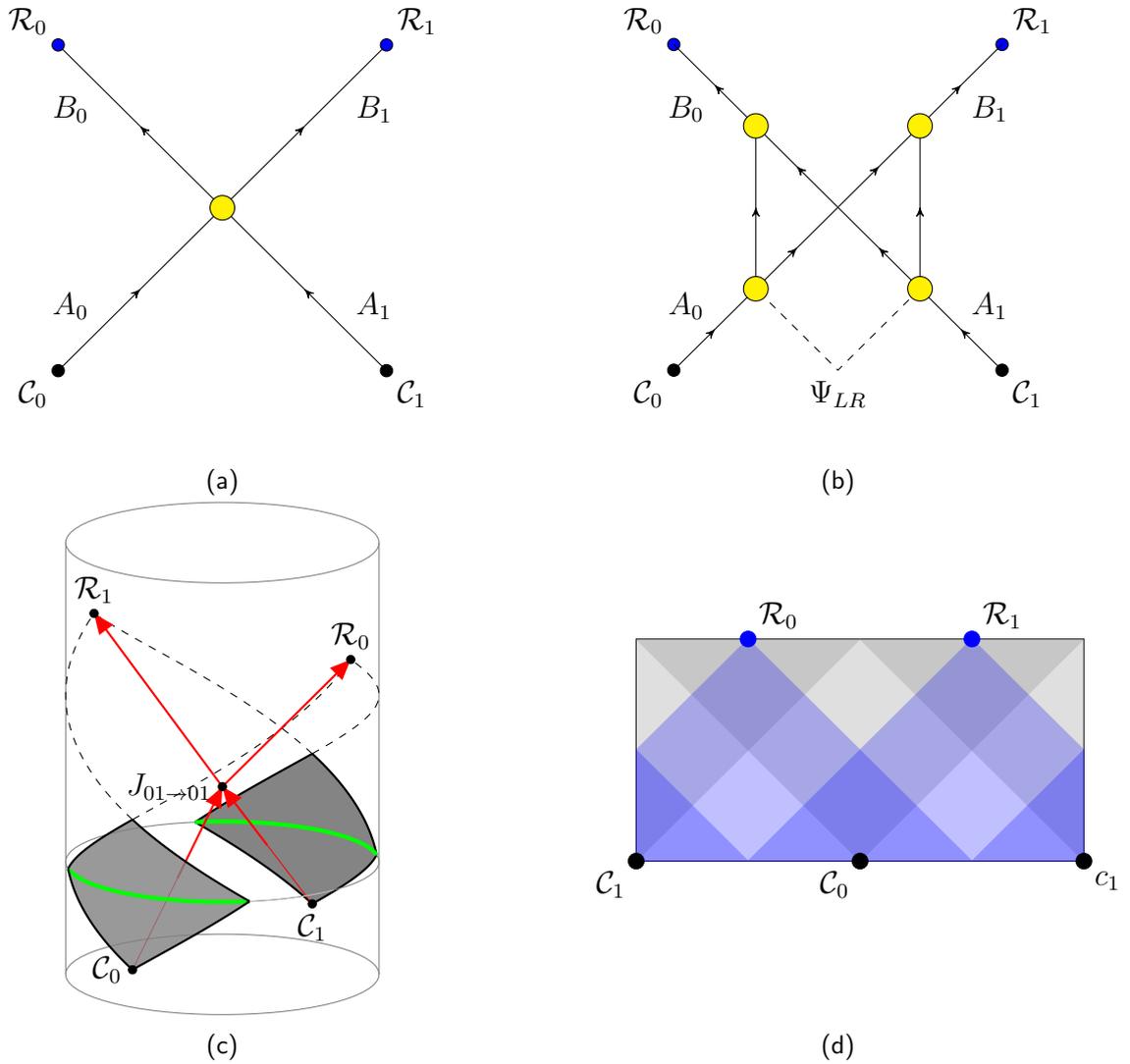

In this article, we will be particularly interested in the limits of computation within a finite gravitating region, and ask if gravity limits computation within such a region.
Our starting observation is that computations which happen within a finite bulk region are reproduced in a particular, non-local, form in the boundary. 
See \cref{fig:cylinder1}. 
This was pointed out in \cite{may2019quantum}, and the geometric observation underlying this has been made earlier \cite{heemskerk2009holography,gary2009local,maldacena2017looking}. 
For carefully chosen computations, it is possible to prove that entanglement between appropriate spacetime regions is necessary for the computation to be reproduced in this non-local form. 
In \cite{may2019quantum,may2020holographic,may2021holographic}, this was exploited to argue for constraints on the state of the CFT in the boundary that follow from aspects of bulk geometry. 
In particular, the non-emptiness of an overlap of a particular set of bulk light cones and an assumption that certain simple computations can happen within this region implies appropriate entanglement is available in the boundary. 
Here we reverse this perspective, and ask how the existence of the boundary description, and in particular the finite entanglement available there, constrains computation in the bulk.

To do this, we use the techniques in \cite{may2020holographic,may2021holographic} to make the connection between local computation in the bulk and non-local computation in the boundary quantitative. 
For a special class of bulk regions called \emph{scattering regions} constructed within pure AdS, we recall arguments from \cite{may2019quantum,may2020holographic,may2021holographic} that computations inside the region are reproduced in the boundary using an entangled system with mutual information equal to the area of the \emph{ridge}, a particular surface on the scattering region.\footnote{For arbitrary regions constructed in pure AdS, we can apply our arguments whenever that region can be enclosed within some scattering region.}
Existing methods for performing computations in the non-local form of \cref{subfig:nonlocalschematic} use entanglement which is exponential in the input size to implement a general unitary. 
Unless this can be dramatically improved, the fact that local computations in the bulk are reproduced using only an area's worth of entanglement will place some form of constraint on computation in the presence of gravity. 
To understand this constraint better, we wish to identify the property of a unitary that controls the entanglement cost to implement it non-locally. 
We explore the possibility that it is the complexity\footnote{There are some subtleties in which measure of complexity to use, we make the meaning of the complexity $\mathscr{C}$ appearing here more precise in the main article.} which controls the entanglement cost.
Given this conjecture, it would follow that the complexity of computations happening within such a region is bounded above by a function of the region's area. 

Non-local quantum computation is also of independent interest in quantum information theory, in particular because of the connection to position verification \cite{kent2006tagging,kent2011quantum,malaney2010location,buhrman2014position}. 
Some of our results are of independent interest in that context, including some new lower bounds on entanglement use. 
Further, a negative resolution of our conjecture --- that complexity controls entanglement cost --- would have important implications for position-verification. 
We discuss this further in section \ref{sec:discussion}.

\textbf{Overview of the paper:} In \cref{sec:preliminaries}, we establish a number of preliminaries needed to understand our arguments. 
In \cref{sec:localtononlocal} we review the argument of \cite{may2019quantum} that local computations in the bulk are reproduced in the boundary in the non-local form of \cref{subfig:nonlocalschematic}. 
In \cref{sec:scatteringgeometry} we highlight some observations from \cite{may2020holographic,may2021holographic} that relate the entanglement available in the boundary to the geometry of the bulk scattering region. 
In \cref{sec:entangledpart} we define a family of measures of complexity which are appropriate for our setting. 
Finally in \cref{sec:QGinsideNLQC} we state precisely the expected relationship between computations in a bulk scattering region and non-local computation in the boundary.

In \cref{sec:linearbounds} we begin discussing constraints on bulk computation that arise from the boundary perspective. 
We point out that the strongest known constraints on non-local computation lead to an upper bound on the input size to a bulk computation, enforcing in particular that the input size should be at most of the order of the area of the ridge. 
We point out that this constraint is always weak enough to be implied by the covariant entropy bound \cite{bousso1999covariant,flanagan2000proof, bousso2014proof}, applied in the bulk.

In \cref{sec:whatisthedual?} we motivate and state two conjectures relating entanglement cost and complexity. The weak complexity-entanglement conjecture states that complexity is the quantity controlling entanglement cost in non-local computation. The strong complexity-entanglement conjecture says that the entanglement cost is a polynomial in circuit complexity. We motivate this from a bulk perspective, noting that a simple assumption about bulk computation would imply this complexity-entanglement relationship.

In \cref{sec:upperandlowerbounds} we begin giving evidence from a quantum information perspective that the complexity of a local operation controls the entanglement cost to implement it non-locally. 
To start, in sections \ref{sec:speelmanbounds} and \ref{sec:PTbounds} we review some existing non-local computation protocols, which have an entanglement cost upper bounded by complexity. 
Then in \cref{sec:Clowerbound} we present a new (but weak) lower bound on entanglement in non-local computation, which is in terms of the complexity. 

In \cref{sec:restrictedprotocols} we discuss non-local computation under restrictions on the set of allowed protocols used. 
For certain classes of protocols we can prove upper and lower bounds on entanglement in terms of complexity that agree up to polynomial overheads.
In particular we prove this for protocols which use only Clifford unitaries.
We also point out an analogous result, applying to protocols restricted to Bell measurements and classical computations, proven earlier in \cite{buhrman2013garden}.

In \cref{sec:froutingprediction} we point out that the weak complexity-entanglement conjecture has an interesting consequence for the $f$-routing task, a well studied setting in the cryptographic literature.
In particular the conjecture implies more efficient protocols should exist for $f$-routing than were previously known. 
Following up on this reasoning, in recent work we established new protocols with the needed efficiency, supporting the conjecture \cite{cree2022code}. 

In \cref{sec:discussion} we summarize our argument that complexity controls entanglement in non-local computation. 

\Cref{sec:futuredirections} gives some future directions. 

\vspace{0.2cm}
\textbf{Summary of notation:}
\begin{itemize}
    \item We use script capital letters for boundary spacetime regions, $\mathcal{A}, \mathcal{B}, \mathcal{C}...$
    \item The entanglement wedge of a boundary region $\mathcal{A}$ is denoted by $E_{\mathcal{A}}$. 
    \item The Ryu-Takayanagi surface associated to region $\mathcal{A}$ is denoted $\gamma_{\mathcal{A}}$.
    \item We use $J^\pm(\mathcal{A})$ to denote the causal future or past of region $\mathcal{A}$ taken in the bulk geometry, and $\hat{J}^\pm(\mathcal{A})$ when restricting to the boundary geometry. 
    \item We use capital letters to denote quantum systems $A,B,C,...$
    \item We use boldface, script capital letters for quantum channels, $\mathbfcal{N}(\cdot)$, $\mathbfcal{T}(\cdot)$,...
    \item We use boldface capital letters to denote unitaries or isometries, $\mathbf{U}, \mathbf{V},...$
\end{itemize}

%%%%%%%%%%%%%%%%%%%%%%%%%%%%%%%%%%%%%%%%%%%%%%%%
\section{Preliminaries}\label{sec:preliminaries}
%%%%%%%%%%%%%%%%%%%%%%%%%%%%%%%%%%%%%%%%%%%%%%%%

%%%%%%%%%%%%%%%%%%%%%%%%%%%%%%%%%%%%%%%%%%%%%%%%
\subsection{Holography implements non-local computations} \label{sec:localtononlocal}
%%%%%%%%%%%%%%%%%%%%%%%%%%%%%%%%%%%%%%%%%%%%%%%%

The AdS/CFT correspondence describes quantum gravity in an asymptotically AdS$_{d+1}$ dimensional spacetime in terms of a lower dimensional, non-gravitating theory living at the AdS boundary. 
This raises a puzzle in the context of a geometric observation made in \cite{heemskerk2009holography,gary2009local,maldacena2017looking} which we illustrate for AdS$_{2+1}$ in \cref{subfig:cylinder}. 
Choose four points in the boundary, two at an early time and two at a later time.
We can choose these four points such that light rays from the two early points may meet, then travel from the meeting point to both late time points. 
Consider two systems $A_0$ and $A_1$, falling inward along these incoming light rays, interacting where they meet, then traveling outward to the boundary after the interaction has occurred. 
This process happens in the bulk picture under evolution by a local Hamiltonian. 
In the boundary picture, it can happen that for the same four points there is no region of the boundary where the two systems can be brought together while still in the past of both late time points. 
See \cref{subfig:boundary}. 
It then becomes puzzling how, using local Hamiltonian evolution, the boundary can reproduce the bulk dynamics. 

To resolve this puzzle, \cite{may2019quantum} argued that a local interaction in the bulk is described via a ``non-local quantum computation''\footnote{Our use of ``non-local'' here can be confusing. The boundary Hamiltonian is local in the sense that it respects causality. The boundary implementation of bulk interactions is ``non-local'' in that it uses the form of \cref{eq:non-localform} to implement the channel.} in the boundary. In particular a channel $\mathbfcal{N}_{A_0A_1\rightarrow B_0B_1}(\cdot)$ can always be implemented in the form \cite{buhrman2014position,beigi2011simplified}
\begin{align}\label{eq:non-localform}
    \mathbfcal{N}_{A_0A_1\rightarrow B_0B_1}(\cdot) &= (\mathbfcal{C}_{A_0'R'\rightarrow B_0}\otimes \mathbfcal{C}_{A_1'L'\rightarrow B_1})\circ \nonumber \\
    &(\mathbfcal{B}_{A_0L\rightarrow A_0'L'}\otimes \mathbfcal{B}_{A_1R\rightarrow A_1'R'})(\cdot \otimes \Psi_{LR}),
\end{align}
although in general it is not well understood how much entanglement is needed in the $LR$ system, as we discuss in detail later on.
The basic observation of \cite{may2019quantum} is that when a local channel is implemented in the bulk, it is realized in the non-local form on the right of \cref{eq:non-localform} in the boundary. 
The puzzle is resolved by the observation that local interactions plus shared entanglement in the more restrictive boundary geometry can be used to reproduce local interactions in the higher dimensional bulk. 
In fact, even general scenarios, with more inputs and outputs and requiring computations in multiple spacetime locations from a bulk perspective, can always be reproduced in the boundary with sufficient entanglement \cite{dolev2019constraining}. 

To understand the role of non-local computation in holography in more detail, we first need to introduce a few aspects of AdS/CFT. 
The basic claim of the AdS/CFT correspondence is that any bulk quantity can be calculated using boundary degrees of freedom, and vice versa. 
One important quantity studied in the boundary is the von Neumann entropy.
In CFT states corresponding to semi-classical geometries, this can be calculated in a simple way using the bulk via the Ryu-Takayanagi formula \cite{ryu2006aspects,hubeny2007covariant,lewkowycz2013generalized,faulkner2013quantum,dong2016deriving,engelhardt2015quantum,dong2018entropy}\footnote{See also \cite{akers2021leading,dong2020enhanced,marolf2020probing} for conditions on when this formula holds. In particular, \cite{akers2021leading} showed this can break down when bulk matter is in an `incompresible' state, while \cite{dong2020enhanced,marolf2020probing} found corrections near the transition between different extremal surfaces.}, 
\begin{align}
    S(\mathcal{R}) = \text{ext}_{\gamma\in \text{Hom}(\mathcal{R})} \left( \frac{\text{area}(\gamma)}{4G_N} + S_{bulk}(E_\gamma') \right).
\end{align}
The extremization is taken over all spacelike, codimension 2 surfaces $\gamma$ which are homologous to $\mathcal{R}$, which means that there exists some codimension 1 spacelike surface $E_\gamma'$ such that $\partial E_{\gamma}'=\gamma\cup \mathcal{R}$. 
The term $S_{bulk}(E_\gamma')$ captures the entropy of degrees of freedom sitting within $E_{\gamma}'$. 
If there is more than one extremal surface, one should pick the surface that minimizes the functional on the right of the above equation.
There is also a useful reformulation of this formula known as the \emph{maximin formula} \cite{wall2014maximin,akers2020quantum}, stated as follows. 
\begin{align}
    S(\mathcal{R}) = \max_{\Sigma} \min_{\gamma\in \text{Hom}(\mathcal{R})} \left( \frac{\text{area}(\gamma)}{4G_N} + S_{bulk}(E_\gamma') \right).
\end{align}
Here the maximization is over bulk Cauchy surfaces which include $\mathcal{R}$ in their boundary. 

The Ryu-Takayanagi formula picks a minimal extremal codimension 2 surface, call it $\gamma_\mathcal{R}$, given the boundary region $\mathcal{R}$. 
We call this the \emph{RT surface} of $\mathcal{R}$. 
It also picks out a family of codimension one surfaces $E_\mathcal{R}'$ such that $\partial E_\mathcal{R}' = \gamma_\mathcal{R} \cup \mathcal{R}$. 
Taking the domain of dependence of any member of this family we obtain a codimension $0$ region we label $E_\mathcal{R}$ and call the \emph{entanglement wedge}. 
In this section we will be most interested in the entanglement wedge, in particular because it has an important meaning as the portion of the bulk spacetime which is recorded into the density matrix $\rho_R$ \cite{czech2012gravity,headrick2014causality,wall2014maximin,jafferis2016relative,dong2016reconstruction,cotler2019entanglement,akers2019large,akers2021leading}. 
In particular, a quantum system $Q$ which in the bulk picture can be recovered from the region $E_{\mathcal{R}}$ can in the boundary picture be recovered from region $\mathcal{R}$. This statement is known as \emph{entanglement wedge reconstruction}.

The next step towards understanding \cref{eq:non-localform} will be to introduce the ``quantum tasks'' language of \cite{kent2011quantum,may2019quantum,may2021holographic}. 
In a quantum task, we consider two parties, Alice and Bob. 
Bob prepares a set of input systems, which he gives to Alice at a set of input locations. 
In particular at input location $\mathcal{C}_i$ Bob gives Alice system $A_i$. 
Alice should process the input systems in some way, then return quantum systems $B_i$ at a set of output locations $\mathcal{R}_i$. 
To fully specify the task we need to specify an initial state of the boundary CFT, which fixes the resource systems available in the boundary to perform the task, as well as fixes the bulk geometry and fields.
We call this initial state $\ket{\Psi}$. 

While the early references considering quantum tasks in the context of AdS/CFT \cite{may2019quantum,may2020holographic} deal with the case where the input and output locations consist of points, it will be helpful to follow \cite{may2021holographic} and allow input and output systems to be recorded into spacetime regions. 
Because of entanglement wedge reconstruction, a quantum task defined in the bulk with input regions $E_{\mathcal{C}_i}$ and output regions $E_{\mathcal{R}_i}$ corresponds in the boundary picture to a task with the same input and output systems, but now with input regions $\mathcal{C}_i$ and output regions $\mathcal{R}_i$.

In a quantum task then, we have Bob couple to the CFT within the regions $\mathcal{C}_i$ to introduce the input systems, and then couple to $\mathcal{R}_i$ to collect the output systems. 
We can take two perspectives on the role of Alice. The first is what we will call the `Alice-in-the-box' perspective, where Alice is thought of as a collection of agents living within the AdS spacetime.
In this case, Bob is holding a CFT and probing it, and all of Alice's actions are recorded already into the initial CFT state $\ket{\Psi}$. 
This is the perspective used earlier \cite{may2019quantum,may2020holographic,may2021holographic}. 
Here, we will additionally make use of an alternative `Alice-out-of-the-box' perspective, wherein Alice is, like Bob, a set of agents outside the CFT, who may couple to it. 
We require however that this `out-of-the-box' Alice 1) respect the causal structure of the boundary spacetime 2) not hold entanglement aside from the CFT degrees of freedom.
Now we are ready to understand \cref{eq:non-localform}. Consider a quantum task which has only two input regions and two output regions. 
That is, Bob records systems $A_0$, $A_1$ into $E_{\mathcal{C}_0}$, $E_{\mathcal{C}_1}$, and requires systems $B_0$, $B_1$ be recorded into regions $E_{\mathcal{R}_0}$, $E_{\mathcal{R}_1}$.
We require the input and output systems be related by some specified quantum channel $\mathbfcal{N}_{A_0A_1\rightarrow B_0B_1}$. 
We show this in \cref{subfig:cylinder}, where the input and output regions are located in the boundary of an asymptotically AdS spacetime. 
We are most interested in the setting where $\mathcal{C}_0,\mathcal{C}_1,\mathcal{R}_0,\mathcal{R}_1$ are such that the region\footnote{Here $J^+(\mathcal{X})$ indicates the causal future of the region $\mathcal{X}$, and $J^-(\mathcal{X})$ indicates the causal past.}
\begin{align}\label{eq:scatteringregion}
    J_{01\rightarrow 01} \defi J^+(E_{\mathcal{C}_0}) \cap J^+(E_{\mathcal{C}_1}) \cap J^-(E_{\mathcal{R}_0}) \cap J^-(E_{\mathcal{R}_1})
\end{align}
is non-empty, and such that this region does not extend to the AdS boundary. We call this the \emph{scattering region}. 

In the bulk perspective, the task can be completed by bringing $A_0, A_1$ into the scattering region, performing the channel, and sending the outputs to $E_{\mathcal{R}_0}$, $E_{\mathcal{R}_1}$. 
Now consider this same process from a boundary perspective. Because the scattering region does not extend to the boundary, there is no spacetime region where $A_0$ and $A_1$ can be directly interacted to perform the channel. 
Nonetheless, while only evolving according to its local Hamiltonian, the boundary theory has to reproduce the overall channel. What are available in the boundary are the regions
\begin{align}
    \hat{J}_{i\rightarrow 01} &= \hat{J}^+(\mathcal{C}_i) \cap \hat{J}^-(\mathcal{R}_0) \cap \hat{J}^-(\mathcal{R}_1), \nonumber \\
    \hat{J}_{01\rightarrow i} &= \hat{J}^-(\mathcal{R}_i) \cap \hat{J}^+(\mathcal{C}_0) \cap \hat{J}^+(\mathcal{C}_1),
\end{align}
where the $\hat{J}^\pm$ indicate future and past light cones taken in the boundary geometry. We call $\mathcal{V}_i\defi \hat{J}_{i\rightarrow 01}$ the \emph{decision regions}, and also define $\mathcal{W}_i\defi \hat{J}_{01\rightarrow i}$. 
Performing quantum channels within each of these regions on the available systems, including entanglement shared between $\mathcal{V}_0$, $\mathcal{V}_1$, is exactly a computation in the form of \cref{eq:non-localform}. 
We show this in \cref{subfig:nonlocalschematic}. 

The shared correlation in the non-local computation circuit of \cref{subfig:nonlocalschematic} is, in the spacetime picture, held between regions $\mathcal{V}_0$ and $\mathcal{V}_1$. 
Because these are separated regions, the correlation between them in the conformal field theory is finite.
The basic strategy we pursue in this paper is to try and constrain the computations which can happen inside the scattering region by constraining which computations can be performed non-locally in the boundary picture with this limited correlation.

There is one possible obstruction in identifying what happens in the boundary CFT with a non-local computation in the form of \cref{eq:non-localform}. 
In the idealized form of \cref{eq:non-localform}, the output locations $\mathcal{R}_0$ and $\mathcal{R}_1$ receive quantum systems from $\mathcal{C}_0$ and $\mathcal{C}_1$, and nowhere else. 
In the holographic setting, the corresponding locations $\mathcal{W}_0$, $\mathcal{W}_1$ additionally have some region, call it $\mathcal{X}$, sitting in their past light cones but outside the future of both $\mathcal{C}_0$ and $\mathcal{C}_1$. 
In particular $\mathcal{X}$ is the spacelike complement of $\mathcal{V}_0 \cup \mathcal{V}_1$.
Using this extra region, \cite{may2020holographic} showed how to complete certain non-local computations with zero mutual information between regions $\mathcal{V}_0$ and $\mathcal{V}_1$, even while this mutual information is provably necessary when the region $\mathcal{X}$ is absent. 
In fact, it is now known that any computation can be performed non-locally with zero mutual information between the input regions by exploiting the $\mathcal{X}$ regions \cite{dolev2022holography}!
Given this, it does not seem reasonable to take the causal structure implicit in \cref{eq:non-localform} as our model for the boundary. 

However, in this article we are interested in asymptotics, in particular in how the class of computations that can be performed within the scattering region behaves as the size of the region is made large. 
For instance, we will ask about how the complexity of computations that can happen within the region scales with the area of the ridge at large ridge area. 
For this purpose, we can better justify the form \cref{eq:non-localform} by noting that to make the ridge large, we are moving the output points forwards in time, and in the boundary this corresponds to the $\mathcal{X}$ regions shrinking to empty.
Thus our assumption is that as the $\mathcal{X}$ regions shrink and become empty, their relevance to the class of computations which can happen in the scattering region becomes small.

Additionally, there are hints that the causal structure shown in \cref{subfig:nonlocalschematic} is a good model for how bulk computations are reproduced in the boundary, even for fixed, large $X$ regions (although we emphasize that we don't need this). 
First, the gravity proof of the connected wedge theorem (see theorem \ref{thm:CW}) shows that the mutual information $I(\mathcal{V}_0:\mathcal{V}_1)$ is $\Theta(1/G_N)$ whenever the scattering region exists.\footnote{Note that we use computer scientists conventions in our use of big $\Theta$ and big $O$ notation: $f(x)=O(g(x))$ means $\lim_{x\rightarrow \infty} f(x)/g(x)<\infty $, while $f(x)=\Theta(g(x))$ means $0< \lim_{x\rightarrow \infty} f(x)/g(x) < \infty$.}
This follows from a boundary argument if we naively assume \cref{subfig:nonlocalschematic} is a good model, but does not follow if we include the $\mathcal{X}$ regions, suggesting the AdS/CFT correspondence fails to exploit the $\mathcal{X}$ regions for some (apparently mysterious) reason. 
Another hint is that the scattering region sits inside of the entanglement wedge of the input regions, again suggesting the primacy of correlation between the input regions in supporting bulk computation. 
We leave better understanding these observations to future work. 

%%%%%%%%%%%%%%%%%%%%%%%%%%%%%%%%%%%%%%%%%%%%%%%%
\subsection{Geometry of the scattering region and boundary entanglement}\label{sec:scatteringgeometry}
%%%%%%%%%%%%%%%%%%%%%%%%%%%%%%%%%%%%%%%%%%%%%%%%

\begin{figure}
\centering
\tdplotsetmaincoords{15}{0}
\begin{tikzpicture}[scale=1.05,tdplot_main_coords]
\tdplotsetrotatedcoords{0}{35}{0}
    
    \begin{scope}[tdplot_rotated_coords]
    
    \draw[thick,dashed] (-3,0,0) -- (0,4,3);
    
    \draw[opacity=0.25,fill=red] (0,4,3) -- (-3,0,0) -- (3,0,0) -- (0,4,3);
    \draw[opacity=0.25,fill=blue] (-3,0,0) -- (0,4,3) -- (0,4,-3) -- (-3,0,0);
    \draw[opacity=0.6,fill=blue] (3,0,0) -- (0,4,3) -- (0,4,-3) -- (3,0,0);
    \draw[opacity=0.7,fill=red] (0,4,-3) -- (-3,0,0) -- (3,0,0) -- (0,4,-3);
    
    \draw[thick] (-3,0,0) -- (3,0,0);
    \draw[thick] (0,4,-3) -- (0,4,3);
    
    \draw[thick] (-3,0,0) -- (0,4,-3);
    
    \draw[thick] (3,0,0) -- (0,4,-3);
    \draw[thick] (3,0,0) -- (0,4,3);
    
    \node[below] at (0,-0.25,0) {$r$};
    \draw[->] (0.25,-0.5,0) to [out=0,in=-90] (0.5,0,0);
    
    \end{scope}
    \end{tikzpicture}
    \caption{A scattering region in AdS$_{2+1}$. The lower edge is the ridge, $r$. The red faces are $\Sigma^0$ and $\Sigma^1$, the upper faces are $\Gamma^0$ and $\Gamma^1$.}
    \label{fig:scatteringregion}
\end{figure}
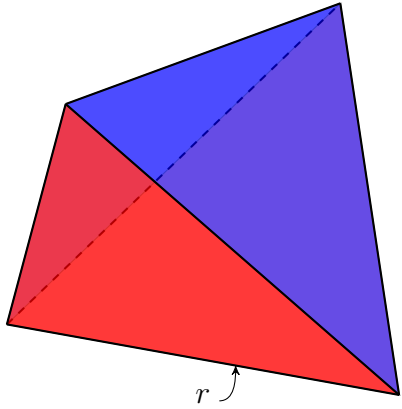

It will be helpful to have a more detailed picture of the geometry of the scattering region, and to understand something about how this geometry relates quantitatively to entanglement in the boundary.
First, recall the definition of the scattering region from \cref{eq:scatteringregion},
\begin{align}
    J_{01\rightarrow 01} \defi J^+(E_{\mathcal{C}_0}) \cap J^+(E_{\mathcal{C}_1}) \cap J^-(E_{\mathcal{R}_0}) \cap J^-(E_{\mathcal{R}_1}). \nonumber 
\end{align}
We illustrate a scattering region constructed in $2+1$ bulk dimensions in figure \cref{fig:scatteringregion}. 
The scattering region takes the basic shape of a tetrahedron, though the sides and faces can be curved. 
The lower two faces (shown in red) are portions of the boundary of the future of $\mathcal{C}_0$ and $\mathcal{C}_1$. In particular, we define
\begin{align}\label{eq:lowerfaces}
    \Sigma^0 &\defi \partial J^{+}(E_{\mathcal{C}_0}) \cap J^-(E_{\mathcal{R}_0}) \cap J^-(E_{\mathcal{R}_1}), \nonumber \\
    \Sigma^1 &\defi \partial J^{+}(E_{\mathcal{C}_1}) \cap J^-(E_{\mathcal{R}_0}) \cap J^-(E_{\mathcal{R}_1}).
\end{align}
The upper two faces (shown in blue) are portions of the boundary of the past of $\mathcal{R}_0$ and $\mathcal{R}_1$, 
\begin{align}\label{eq:upperfaces}
    \Gamma^0 &\defi \partial J^-(E_{\mathcal{R}_0}) \cap J^+(E_{\mathcal{C}_0}) \cap J^+(E_{\mathcal{C}_1}), \nonumber \\ 
    \Gamma^1 &\defi \partial J^-(E_{\mathcal{R}_1}) \cap J^+(E_{\mathcal{C}_0}) \cap J^+(E_{\mathcal{C}_1}) .
\end{align}
Finally, we will be interested in the lower edge of the tetrahedron, which we call the ridge
\begin{align}\label{eq:ridge}
    r \defi \partial J^{+}(E_{\mathcal{C}_0}) \cap \partial J^{+}(E_{\mathcal{C}_1}) \cap J^-(E_{\mathcal{R}_0}) \cap J^-(E_{\mathcal{R}_1}),
\end{align}
which is also just $\Sigma^0 \cap \Sigma^1$. 

As mentioned in the last section, whenever the scattering region is non-empty the mutual information $I(\mathcal{V}_0:\mathcal{V}_1)$ is large, and in particular it is $\Theta(1/G_N)$. 
In fact, \cite{may2020holographic, may2021holographic} proved the following theorem.\footnote{Note that the theorem is not stated in quite the following way in those references, but this statement follows from that proof.}
\begin{theorem}\label{thm:CW} \textbf{(Connected wedge theorem)} Pick four regions ${\mathcal{C}}_0,{\mathcal{C}}_1,{\mathcal{R}}_0,{\mathcal{R}}_1$ on the boundary of an asymptotically AdS spacetime such that
\begin{align}
    \mathcal{C}_0 = \mathcal{V}_0 &\defi \hat{J}^+({\mathcal{C}}_0) \cap \hat{J}^-({\mathcal{R}}_0) \cap \hat{J}^-({\mathcal{R}}_1) , \nonumber \\
    \mathcal{C}_1 = \mathcal{V}_1 &\defi \hat{J}^+({\mathcal{C}}_1) \cap \hat{J}^-({\mathcal{R}}_0) \cap \hat{J}^-({\mathcal{R}}_1).
\end{align}
Then 
\begin{align}\label{eq:mutualinfoandarea}
    \frac{1}{2}I(\mathcal{V}_0:\mathcal{V}_1) \geq \frac{\text{area}(r)}{4G_N} + O(1).
\end{align}
with the ridge $r$ defined as in \cref{eq:ridge}. 
\end{theorem}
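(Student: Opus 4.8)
The plan is to turn the inequality into a statement about areas of extremal surfaces and then prove it with the maximin formula and a cut-and-paste surgery carried out at the ridge. Writing $I(\mathcal{V}_0:\mathcal{V}_1)=S(\mathcal{V}_0)+S(\mathcal{V}_1)-S(\mathcal{V}_0\cup\mathcal{V}_1)$ and applying the Ryu--Takayanagi formula term by term, the $S_{bulk}$ pieces are each $O(1)$ in $1/G_N$ (rigorously handled by working with quantum extremal surfaces / quantum maximin, as in \cite{akers2020quantum}), so the claim reduces to the geometric inequality
\begin{align}
\text{area}(\gamma_{\mathcal{V}_0})+\text{area}(\gamma_{\mathcal{V}_1})-\text{area}(\gamma_{\mathcal{V}_0\cup\mathcal{V}_1})\ \geq\ 2\,\text{area}(r)+O(G_N).
\end{align}
Since $\gamma_{\mathcal{V}_0}\cup\gamma_{\mathcal{V}_1}$ is already a valid competitor for $\gamma_{\mathcal{V}_0\cup\mathcal{V}_1}$, the trivial bound is $\text{area}(\gamma_{\mathcal{V}_0\cup\mathcal{V}_1})\leq\text{area}(\gamma_{\mathcal{V}_0})+\text{area}(\gamma_{\mathcal{V}_1})$; the entire content is to improve it by $2\,\text{area}(r)$, which is where non-emptiness of the scattering region \cref{eq:scatteringregion} and the fixed-point conditions $\mathcal{C}_i=\mathcal{V}_i$ must enter.

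Next I would invoke the maximin formula on a carefully chosen bulk Cauchy slice $\Sigma$. The ridge $r$ is achronal — it lies on $\partial J^{+}(E_{\mathcal{C}_0})$, the boundary of a future set — so it extends to some Cauchy slice, and I would want $\Sigma$ to be simultaneously maximin for $\mathcal{V}_0\cup\mathcal{V}_1$ and to contain $r$; arranging this (or circumventing it with Wall's representative-surface inequalities) is already a technical nuisance. On such a $\Sigma$, let $m_i$ be a minimal surface on $\Sigma$ homologous to $\mathcal{V}_i$; maximin gives $\text{area}(m_i)\leq\text{area}(\gamma_{\mathcal{V}_i})$, while $\text{area}(\gamma_{\mathcal{V}_0\cup\mathcal{V}_1})$ equals the minimal area on $\Sigma$ in the homology class of $\mathcal{V}_0\cup\mathcal{V}_1$ and hence is bounded above by any competitor we build on $\Sigma$. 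The target is therefore a surface on $\Sigma$, homologous to $\mathcal{V}_0\cup\mathcal{V}_1$, of area at most $\text{area}(m_0)+\text{area}(m_1)-2\,\text{area}(r)$. Letting $R_i\subseteq\Sigma$ be the homology region of $m_i$, I would show that $r$ sits on $\partial R_0$ and on $\partial R_1$ with $R_0$ and $R_1$ locally on opposite sides of $r$, so that $R_0\cup R_1$ absorbs $r$ into its interior; its boundary is then $(m_0\cup m_1)$ with the doubled copy of $r$ deleted and the legs reconnected at the two ends of $r$, with total area exactly $\text{area}(m_0)+\text{area}(m_1)-2\,\text{area}(r)$ (the elementary fact that a segment common to two paths is removed twice upon swapping endpoints). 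Combining with minimality of $\gamma_{\mathcal{V}_0\cup\mathcal{V}_1}$ and the Ryu--Takayanagi formula yields \cref{eq:mutualinfoandarea}.

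The hard part — and the geometric heart of the connected wedge theorem — is the causal lemma that the ridge is ``pinned'' to the boundaries of both homology regions, equivalently that $r\subset E_{\mathcal{V}_0}\cap E_{\mathcal{V}_1}$ with the RT/homology surfaces deformable to run along all of $r$. This should follow from entanglement wedge nesting applied to the inclusions defining the scattering region: using $\mathcal{C}_i=\mathcal{V}_i$ one has $r\subset J^{+}(E_{\mathcal{V}_0})\cap J^{+}(E_{\mathcal{V}_1})\cap J^{-}(E_{\mathcal{R}_0})\cap J^{-}(E_{\mathcal{R}_1})$, and together with the fact that causal wedges lie inside entanglement wedges this should force $r$ into $E_{\mathcal{V}_i}$ and obstruct $R_i$ from extending past it; note that this simultaneously reproves connectedness of $E_{\mathcal{V}_0\cup\mathcal{V}_1}$, since disjoint wedges could not both contain $r$. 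The remaining delicate points are the common-maximin-slice-through-the-ridge issue, checking that the reconnected surface genuinely lands in the homology class of $\mathcal{V}_0\cup\mathcal{V}_1$ and that reconnection near the endpoints of $r$ costs no area, and tracking the $O(G_N)$ matter corrections — but the causal lemma is where essentially all the work lies.
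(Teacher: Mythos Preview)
Your spatial cut-and-paste strategy is genuinely different from the paper's argument, but the ``causal lemma'' on which it hinges does not hold as stated, and the proposed justification is circular in the wrong direction. You want $r\subset m_i=\partial R_i$, i.e.\ that the minimal representatives of $\gamma_{\mathcal{V}_i}$ on $\Sigma$ actually run along the ridge. But $r$ is defined by null propagation from $E_{\mathcal{V}_i}$ --- it lies on $\partial J^+(E_{\mathcal{V}_i})$, not on the spacelike RT surface $\gamma_{\mathcal{V}_i}$ or any of its minimal representatives. Your argument that $r\subset E_{\mathcal{V}_i}$ invokes causal-wedge $\subset$ entanglement-wedge, but $r\subset J^+(E_{\mathcal{V}_i})$ is a statement about the \emph{future} of the entanglement wedge, which is strictly larger than the wedge itself; nothing forces $r$ back inside. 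And even if you had $r\subset E_{\mathcal{V}_i}$, that would place $r$ in the \emph{interior} of $R_i$, not on $\partial R_i$, so the union $R_0\cup R_1$ would not shed $2\,\text{area}(r)$ from its boundary. Generically the minimal surfaces $m_0,m_1$ on $\Sigma$ simply do not pass through $r$, and no amount of entanglement-wedge nesting will make them.

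The paper's proof supplies exactly the missing mechanism: a focusing argument along a \emph{null membrane}. One launches the future-directed inward null congruence from $\gamma_{\mathcal{V}_0}\cup\gamma_{\mathcal{V}_1}$ (the ``lift''); since the RT surfaces are extremal, this congruence has non-positive initial expansion, and by Raychaudhuri the cross-sectional area is non-increasing. The ridge $r$ is precisely the caustic locus where generators from the $\gamma_{\mathcal{V}_0}$ side and the $\gamma_{\mathcal{V}_1}$ side collide, so upon reaching it one deletes at least $2\,\text{area}(r)$. One then flows back down along the past-directed null congruence from $\gamma_{\mathcal{R}_0}\cup\gamma_{\mathcal{R}_1}$ (the ``slope''), again with non-increasing area, until hitting the maximin slice $\Sigma$ for $\mathcal{V}_0\cup\mathcal{V}_1$; the resulting surface $C_\Sigma$ is the competitor you were seeking, and maximin finishes the job. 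The essential input is null focusing, which converts the causal definition of $r$ into an area deficit --- this is entirely absent from your single-slice surgery, and without it there is no link between $\text{area}(r)$ and the areas of the spacelike minimal surfaces.
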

It will be instructive to briefly outline the elements of the proof of \cref{thm:CW}. 
For more details the reader may consult \cite{may2021holographic}.
Begin by recalling that the mutual information is
\begin{align}
    I(\mathcal{V}_0:\mathcal{V}_1) = \frac{1}{4G_N}\left(\text{area}(\gamma_{\mathcal{V}_0}\cup \gamma_{\mathcal{V}_1})- \text{area}(\gamma_{\mathcal{V}_0\cup \mathcal{V}_1}) \right) + O(1),
\end{align}
where we assume the bulk matter has $O(1)$ entropy. 
We will use a focusing argument to compare the sum of the first term to the the second. 
We consider the \emph{lift}
\begin{align}
    \mathcal{L}=\partial J^+(\mathcal{V}_0\cup \mathcal{V}_1) \cap J^-(\mathcal{R}_0) \cap J^-(\mathcal{R}_1).
\end{align}
As well, take a Cauchy surface $\Sigma$ that contains $\gamma_{\mathcal{V}_0\cup \mathcal{V}_1}$, and define the \emph{slope}, 
\begin{align}
    \mathcal{S}_\Sigma = \partial [J^-(\mathcal{R}_0) \cup J^-(\mathcal{R}_1)] \cap J^-[\partial J^+(\mathcal{V}_0\cup \mathcal{V}_1)] \cap J^+(\Sigma).
\end{align}
In \cref{fig:nullmembrane} we illustrate the \emph{null membrane}, which is the union of the lift and the slope. 
The null membrane allows us to compare the area of $\gamma_{\mathcal{V}_0}\cup \gamma_{\mathcal{V}_1}$ to the area of $\gamma_{\mathcal{V}_0\cup \mathcal{V}_1}$. 
To see this, note that light rays moving up the lift have decreasing area, because they are initially orthogonal to $\gamma_{\mathcal{V}_0}\cup \gamma_{\mathcal{V}_1}$. 
Similarly, light rays moving down the slope have decreasing area, because they are initially orthogonal to $\gamma_{\mathcal{V}_0\cup \mathcal{V}_1}$. 
Thus, consider pushing $\gamma_{\mathcal{V}_0}\cup \gamma_{\mathcal{V}_1}$ upwards along the lift, removing any light rays that collide with each other and stopping whenever the surface reaches the slope.
Note that we remove at least twice the ridges area in doing this. Then push the surface downwards along the slope, again noting that the area decreases, stopping when we reach $\Sigma$. 
This produces a surface $C_\Sigma$ in the slice $\Sigma$ which has area smaller than $\gamma_{\mathcal{V}_0}\cup \gamma_{\mathcal{V}_1}$ by at least $2\,\text{area}(r)$. 
As well, since $C_\Sigma$ lies in a single Cauchy slice with $\gamma_{\mathcal{V}_0\cup \mathcal{V}_1}$, according to the maximin formula it must have area larger or equal to that of $\gamma_{\mathcal{V}_0\cup \mathcal{V}_1}$. 
This gives
\begin{align}\label{eq:mutualinfoandridge}
    I(\mathcal{V}_0:\mathcal{V}_1) &\geq \frac{1}{4G_N}\left(\text{area}(\gamma_{\mathcal{V}_0}\cup \gamma_{\mathcal{V}_1})- \text{area}(C_\Sigma) \right)+O(1) \nonumber \\
    &\geq 2 \, \text{area}(r) + O(1)
\end{align}
which is as needed. 

\begin{figure}
    \centering
    \tdplotsetmaincoords{15}{0}
    \begin{tikzpicture}[scale=2.2,tdplot_main_coords]
    \tdplotsetrotatedcoords{0}{30}{0}
    
    \begin{scope}[tdplot_rotated_coords]
    
    %Ryu-Takayanagi surfaces
    \draw[domain=40:140,variable=\x,smooth,ultra thick,blue] plot ({2*sin(\x+180)+2.53}, {0}, {2*cos(\x+180)});
    \draw[dashed,domain=40:140,variable=\x,smooth,ultra thick,blue] plot ({-(2*sin(\x+180)+2.53)}, {0}, {2*cos(\x+180)});
    
    %contradiction surface
    \draw[dashed,domain=-40:40,variable=\x,smooth,ultra thick,red] plot ({-(2*sin(\x+180))}, {0}, {2*cos(\x+180)+3.1});
    \draw[domain=-40:40,variable=\x,smooth,ultra thick,red] plot ({-(2*sin(\x+180))}, {0}, {-(2*cos(\x+180)+3.1)});
    
    %back slope
    \fill[red,opacity=0.4,domain=40:-40] (-1.286,0,1.532) -- (0,0.5,0.5) -- (1.286,0,1.532) plot ({-(2*sin(\x+180))}, {0}, {2*cos(\x+180)+3.1});
    
    %lift
    \fill[blue,opacity=0.4,domain=40:140] (1.286,0,1.532) -- (0,0.5,0.5) -- (0,0.5,-0.5) -- (1.286,0,-1.532) plot ({2*sin(\x+180)+2.53}, {0}, {2*cos(\x+180)});
    \draw[line width=1mm, white] (1.27,0,-1.5) -- (1.27,0,1.5);
    
    \fill[blue,opacity=0.4,domain=40:140] (-1.286,0,1.532) -- (0,0.5,0.5) -- (0,0.5,-0.5) -- (-1.286,0,-1.532) plot ({-(2*sin(\x+180)+2.53)}, {0}, {2*cos(\x+180)});
    \draw[line width=1mm, white] (-1.27,0,-1.5) -- (-1.27,0,1.5);
    
    %front slope
    \fill[red,opacity=0.4,domain=40:-40] (-1.286,0,-1.532) -- (0,0.5,-0.5) -- (1.286,0,-1.532) plot ({-(2*sin(\x+180))}, {0}, {-(2*cos(\x+180)+3.1)});
    \draw[ultra thick,white] (-1.286,0,-1.537) -- (1.286,0,-1.537);
    
    \draw[domain=0:360,thick,gray] plot ({2*cos(\x)},0, {2*sin(\x)});
    
    %seams
    \draw[black,thick] (1.286,0,1.532) -- (0,0.5,0.5);
    \draw[black,thick] (1.286,0,-1.532) -- (0,0.5,-0.5);
    \draw[black,thick] (-1.286,0,1.532) -- (0,0.5,0.5);
    \draw[black,thick] (-1.286,0,-1.532) -- (0,0.5,-0.5);
    
    %ridge
    \draw[thick] (0,0.5,-0.5) -- (0,0.5,0.5);
    
    %regions
    \draw [green,ultra thick,domain=40:140] plot ({2*cos(\x-90)},0, {2*sin(\x-90)});
    \draw [green,ultra thick,domain=40:140] plot ({-2*cos(\x-90)},0, {2*sin(\x-90)});
    
    %labels
    \node[above] at (0,0.75,0) {$r$};
    \node at (0.25,0,-1.5) {$C_\Sigma$};
    
    \node at (1,0,0) {$\mathcal{\gamma}_{\mathcal{V}_0}$};
    
    \node[right] at (2.2,0,0) {$\mathcal{V}_0$};
    \node[left] at (-2.2,0,0) {$\mathcal{V}_1$};

    \end{scope}
    \end{tikzpicture}
    \caption{The null membrane. The blue surface is the lift $\mathcal{L}$, which is generated by the null geodesics defined by the inward, future pointing null normals to $\gamma_{\mathcal{V}_0}\cup \gamma_{\mathcal{V}_1}$. The red surfaces make up the slope $\mathcal{S}_\Sigma$, which is generated by the null geodesics defined by the inward, past directed null normals to $\gamma_{{\mathcal{R}}_1}\cup \gamma_{{\mathcal{R}}_2}$. The ridge $r$ is where null rays from $\gamma_{{\mathcal{V}}_1}$ and $\gamma_{{\mathcal{V}}_2}$ collide. The contradiction surface $C_\Sigma$ is where the slope meets a specified Cauchy surface $\Sigma$. Figure reproduced from \cite{may2021holographic}.}
    \label{fig:nullmembrane}
\end{figure}
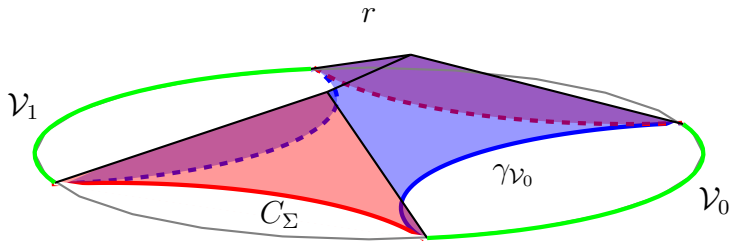

Notice from the argument in the last paragraph that if the expansion of the null membrane is everywhere zero, the only area removed during the focusing argument is the area of the ridge, so inequality \eqref{eq:mutualinfoandarea} is saturated. This occurs for example in pure AdS. 
Another comment on this theorem is that the converse is not true, in that there are states with $\Theta(1/G_N)$ mutual information but where the scattering region is empty. 

In the context of understanding the limits of computations in the bulk, we are interested in scenarios where our computation may change the bulk geometry, for instance because the inputs to the computation and matter used to build a computing device backreacts on the geometry. 
Fortunately, the way in which we have defined the scattering region remains natural in this setting. 
Recall that the definition of the quantum task includes a specification of the state of the CFT at some early time, which we call $\ket{\Psi}$. 
The scattering region is defined by $\ket{\Psi}$ along with a choice of input and output regions according to \cref{eq:scatteringregion}. 
Alice can then attempt to perform computations within the scattering region by coupling to the decision regions, which in general deforms the bulk geometry. 
The data defining the region, namely the choice of boundary regions plus initial state $\ket{\Psi}$, remains fixed however, so we can define a precise notion of the `same' scattering region even for the deformed geometries. 

We have defined our scattering region in terms of boundary data, but later will want to characterize scattering regions from a bulk perspective. 
Given this, we can ask what data about the scattering region remains well defined, even when Alice is allowed to couple to the decision region. 
We focus on the case where the initial state $\ket{\Psi}$ is the CFT vacuum. 
In this case, and without Alice coupling to the decision regions, we have that equation \ref{eq:mutualinfoandridge} is an equality and half the mutual information is equal to the ridge area. 
Noting that Alice's couplings don't change the mutual information, and applying theorem \ref{thm:CW}, we see that her couplings can only decrease the ridge area.\footnote{From a bulk perspective, this fact is related to the addition of bulk matter delaying light rays through the bulk, so that adding bulk matter will close off the scattering region.}  
For scattering regions defined with the initial state as the CFT vacuum, we can use the area of the ridge in the undeformed geometry to characterize the scattering region from a bulk perspective. 
Whenever we refer to the area of a scattering region $S$, we will mean the area of the ridge $r$ in the undeformed geometry. 

%%%%%%%%%%%%%%%%%%%%%%%%%%%%%%%%%%%%%%%%%%%%%%%%
\subsection{The `interaction class' of a unitary}\label{sec:entangledpart}
%%%%%%%%%%%%%%%%%%%%%%%%%%%%%%%%%%%%%%%%%%%%%%%%

We are interested in studying computations that can happen in the bulk scattering region, and in understanding this using the boundary description. 
Specifying the inputs and outputs as they appear near the boundary however does not precisely capture the computation which happens inside the scattering region. 
Instead, the bulk process can involve messages sent directly from $\mathcal{C}_0$ and $\mathcal{C}_1$ to $\mathcal{R}_0$ or $\mathcal{R}_1$, and can involve operations that act separately on $A_0$ and $A_1$ as they fall into or come out of the scattering region. 
To capture the computation which happens inside of the scattering region, we need a notion of the \emph{interaction class} of $\mathbf{U}$, which we define as follows. 

\begin{figure*}
    \centering
    \begin{subfigure}{0.45\textwidth}
    \centering
    \begin{tikzpicture}[scale=0.5]
    
    %lower left box
    \draw[thick] (-5,-5) -- (-5,-3) -- (-3,-3) -- (-3,-5) -- (-5,-5);
    \node at (-4,-4) {$\mathbf{V}^L$};
    
    %lower right box
    \draw[thick] (5,-5) -- (5,-3) -- (3,-3) -- (3,-5) -- (5,-5);
    \node at (4,-4) {$\mathbf{V}^R$};
    
    %top right box
    \draw[thick] (5,5) -- (5,3) -- (3,3) -- (3,5) -- (5,5);
    \node at (4,4) {$\mathbf{W}^R$};
    
    %top left box
    \draw[thick] (-5,5) -- (-5,3) -- (-3,3) -- (-3,5) -- (-5,5);
    \node at (-4,4) {$\mathbf{W}^L$};
    
    %left vertical wire
    \draw[thick] (-4.5,-3) -- (-4.5,3);
    
    %right vertical wire
    \draw[thick] (4.5,-3) -- (4.5,3);
    
    %left to right wire
    \draw[thick] (-3.5,-3) to [out=90,in=-90] (3.5,3);
    
    %right to left wire
    \draw[thick] (3.5,-3) to [out=90,in=-90] (-3.5,3);
    
    %entanglement
    \draw[thick] (-3.5,-5) to [out=-90,in=-90] (3.5,-5);
    \draw[black] plot [mark=*, mark size=3] coordinates{(0,-7.05)};
    
    %input wires
    \draw[thick] (-4.5,-6) -- (-4.5,-5);
    \draw[thick] (4.5,-6) -- (4.5,-5);
    
    %output wires
    \draw[thick] (4.5,5) -- (4.5,6);
    \draw[thick] (-4.5,5) -- (-4.5,6);
    
    \draw[thick] (3.5,5) -- (3.5,6);
    \draw[thick] (-3.5,5) -- (-3.5,6);
    
    \end{tikzpicture}
    \caption{}
    \label{fig:non-local}
    \end{subfigure}
    \hfill
    \begin{subfigure}{0.45\textwidth}
    \centering
    \begin{tikzpicture}[scale=0.5]
    
    %lower left box
    \draw[thick] (-5,-5) -- (-5,-3) -- (-3,-3) -- (-3,-5) -- (-5,-5);
    \node at (-4,-4) {$\mathbf{V}^L$};
    
    %lower right box
    \draw[thick] (5,-5) -- (5,-3) -- (3,-3) -- (3,-5) -- (5,-5);
    \node at (4,-4) {$\mathbf{V}^R$};
    
    %top right box
    \draw[thick] (5,5) -- (5,3) -- (3,3) -- (3,5) -- (5,5);
    \node at (4,4) {$\mathbf{W}^R$};
    
    %top left box
    \draw[thick] (-5,5) -- (-5,3) -- (-3,3) -- (-3,5) -- (-5,5);
    \node at (-4,4) {$\mathbf{W}^L$};
    
    %left vertical wire
    \draw[thick] (-4.5,-3) -- (-4.5,3);
    
    %right vertical wire
    \draw[thick] (4.5,-3) -- (4.5,3);
    
    %left to right wire
    \draw[thick] (-4,-0.5) to [out=90,in=-90] (4,3);
    \draw[thick] (-4,-0.5) -- (-4,-3);
    
    %right to left wire
    \draw[thick] (4,0) to [out=90,in=-90] (-4,3);
    \draw[thick] (4,0) -- (4,-3);
    
    %input wires
    \draw[thick] (-4.5,-6) -- (-4.5,-5);
    \draw[thick] (4.5,-6) -- (4.5,-5);
    
    %output wires
    \draw[thick] (4.5,5) -- (4.5,6);
    \draw[thick] (-4.5,5) -- (-4.5,6);
    
    \draw[thick] (3.5,5) -- (3.5,6);
    \draw[thick] (-3.5,5) -- (-3.5,6);
    
    %interaction unitary unitary
    \draw[thick] (-1,-1) -- (-1,1) -- (1,1) -- (1,-1) -- (-1,-1);
    
    %wire into interaction unitary
    \draw[thick] (-3.5,-3) to [out=90,in=-90] (-0.5,-1);
    \draw[thick] (3.5,-3) to [out=90,in=-90] (0.5,-1);
    
    \draw[thick] (0.5,1) to [out=90,in=-90] (3.5,3);
    \draw[thick] (-0.5,1) to [out=90,in=-90] (-3.5,3);
    
    \node at (0,0) {$\mathbf{U}^I$};
    
    \end{tikzpicture}
    \caption{}
    \label{fig:entangledpart}
    \end{subfigure}
    \caption{(a) Circuit diagram showing the non-local implementation of a unitary $\mathbf{U}$. (b) Circuit diagram showing the implementation of a unitary in terms of a unitary $\mathbf{U}^I$ in the interaction class of $\mathbf{U}$, along with local pre- and post-processing. We are interested in the minimal complexity of unitaries in the interaction class, and how this relates to the number of EPR pairs needed in a non-local implementation of the same unitary, as shown at left. The isometries $\mathbf{V}^L,\mathbf{V}^R, \mathbf{W}^L, \mathbf{W}^R$ do not need to be the same in the two pictures.}
    \label{fig:entangledpartandnon-local}
\end{figure*}
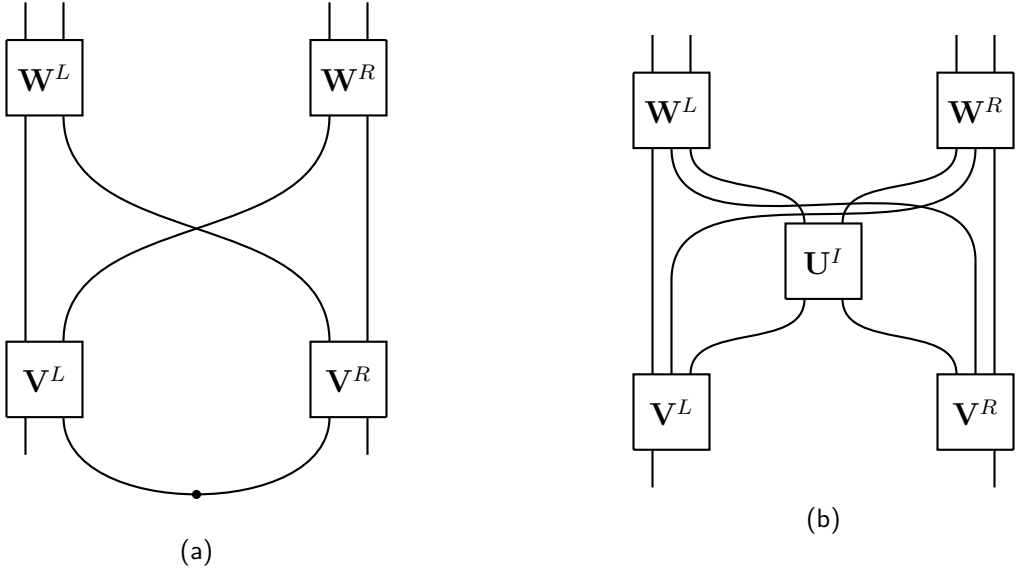

\begin{definition}
Given a unitary $\mathbf{U}$ the $\epsilon$ approximate \emph{interaction class} of $\mathbf{U}$ is the set of unitaries $\mathbf{U}^I$ such that there exist isometries $\mathbf{V}^L, \mathbf{V}^R, \mathbf{W}^L, \mathbf{W}^R$ satisfying
\begin{align}
    ||\mathbf{U}_{A_0A_1} - \tr_{Z_0Z_1}\left( [\mathbf{W}^L_{X_0Y_0A_0' \rightarrow A_0 Z_0} \otimes \mathbf{W}^R_{X_1Y_1A_1'\rightarrow A_1Z_1}]\mathbf{U}^I_{A_0'A_1'} [\mathbf{V}^L_{A_0\rightarrow A_0'X_0X_1} \otimes \mathbf{V}^R_{A_1\rightarrow A_1' Y_0Y_1}] \right)||_{\diamond} \leq \epsilon \nonumber 
\end{align}
See \cref{fig:entangledpart} for a circuit diagram. We denote the interaction class of $\mathbf{U}_{A_0A_1}$ by $\mathscr{I}_{A_0:A_1}(\mathbf{U})$. 
\end{definition}
We have used the diamond norm in the above definition so that the circuit exploiting an interaction unitary is operationally nearly indistinguishable from the target unitary. 
As an example, for a tensor product unitary $\mathbf{U}_{A_0A_1}=\mathbf{U}_{A_0}\otimes \mathbf{U}_{A_1}$, the interaction class $\mathscr{I}_{A_0:A_1}(\mathbf{U}_{A_0}\otimes \mathbf{U}_{A_1})$ includes the trivial unitary, which acts on an empty input system and produces an empty output system.

We will be particularly interested in understanding the complexity of what happens inside of the scattering region. 
To capture this, we will define a notion of complexity which is minimized over the interaction class. 
\begin{definition}\label{def:entangledpartcomplexity}
Given any notion of complexity for a unitary $\mathbf{U}$, call it the type $X$ complexity $\mathscr{C}(\mathbf{U})$, we define
\begin{align}
    \mathscr{C}_{A_0:A_1}(\mathbf{U}) \defi \min_{\mathbf{U}^I\in \mathscr{I}_{A_0:A_1}(\mathbf{U})}  \mathscr{C}(\mathbf{U}^I)
\end{align}
as the interaction-class type $X$ complexity of $\mathbf{U}$. We sometimes drop the argument and write $\mathscr{C}_{A_0:A_1}=\mathscr{C}_{A_0:A_1}(\mathbf{U})$ when the relevant unitary is clear from context. 
\end{definition}
As an example, we have $\mathscr{C}_{A_0:A_1}(\mathbf{U}_{A_0}\otimes \mathbf{U}_{A_1})=0$.

Unless otherwise specified, we will take the complexity $\mathscr{C}(\mathbf{U})$ to be the circuit complexity of $\mathbf{U}$.
Because we deal with finite dimensional quantum systems and only argue the complexity is related to entanglement cost up to polynomial overheads\footnote{See conjectures \ref{conjecture:weakcomplexity-entanglement} and \ref{conjecture:strongcomplexity-entanglement} for more precise statements.}, any choice of universal gate set will suffice.
Our notion of interaction-class complexity is similar to one introduced in \cite{buhrman2013garden} (and also used in \cite{cree2022code}), although the earlier notion applied only to a specific class of quantum tasks, known as $f$-routing.  
In \cref{sec:GH} we will see that our definition reduces to the one in \cite{buhrman2013garden,cree2022code} when applied to $f$-routing.

It is sometimes important to distinguish between the complexity of a unitary and the complexity of a task. 
A task is specified by the allowed inputs and a required condition on the outputs. 
In some cases, there could be many unitaries that produce the correct outputs, with possibly differing complexities. 
If we wish to discuss the complexity of a task, rather than of a unitary, we will use the minimal complexity of all possible unitaries that correctly complete the task. 
Usually, we will consider tasks where the requirement is to implement a specific unitary, so the notions of task complexity and unitary complexity coincide. 

An instructive example is to consider the case where the inputs and outputs to the task are classical.
Call the inputs $x_0$ and $x_1$, and required outputs $f_0(x_0,x_1)$ and $f_1(x_0,x_1)$.
In this case, the inputs on each side can be copied, and sent to both output locations. 
Then, the outputs $f_0(x_0,x_1)$ and $f_1(x_0,x_1)$ can be computed separately on each side, and output to Bob. 
This shows there is a method to complete the task with trivial interaction unitary, so the interaction-class complexity for this task is $0$.

%%%%%%%%%%%%%%%%%%%%%%%%%%%%%%%%%%%%%%%%%%%%%%%%
\subsection{Computations in gravitating regions and non-local computation}\label{sec:QGinsideNLQC}
%%%%%%%%%%%%%%%%%%%%%%%%%%%%%%%%%%%%%%%%%%%%%%%%

Recall that a scattering region $S$ is defined by the choice of state $\ket{\Psi}$ of the CFT before the decision regions, along with a choice of input and output regions. 
We focus on constraints on computation in bulk scattering regions in the setting where the initial state is the CFT vacuum $\ket{\Psi_0}$, corresponding to pure AdS geometry in the bulk. 
In this context, we have that the ridge area in the undeformed geometry is \emph{equal} to the mutual information in the boundary. 
Later, we will discuss the case where the scattering region is defined by an arbitrary semi-classical early time state. 
In that setting, we are not able to argue for the same constraint, for reasons that we point out. 

Before proceeding, we should discuss what we mean by a unitary happening within a given spacetime region. We will say $\mathbf{U}_{A_0A_1}$ is performed within $S$ if $A_0$, $A_1$ enter separately through the lower two faces of $S$ in an arbitrary state $\ket{\psi}_{A_0A_1R}$ with $R$ a distantly located reference system, and exit separately through the upper two faces in state $\mathbf{U}_{A_0A_1}\ket{\psi}_{A_0A_1R}$. Notice that this is defined with respect to a particular factorization of the Hilbert space into subsystems $A_0$ and $A_1$.

With this notion in hand, we define the following set of unitaries.
\begin{definition}
Define $QG_{A_0:A_1}(S)$ to be the set of unitaries $\mathbf{U}_{A_0A_1}$ which have a unitary in their interaction class that can be performed within scattering region $S$. 
\end{definition}
The set $QG_{A_0:A_1}(S)$ may depend not just on the geometry of the scattering region, but also on the bulk matter fields and other details of the theory it lives in. 
We are always interested in scattering regions constructed within theories with consistent boundary descriptions. 

To relate $QG_{A_0:A_1}(S)$ to non-local computation, we define another set of unitaries. 
\begin{definition}
$NLQC_{A_0:A_1}(E)$ is the set of unitaries $\mathbf{U}_{A_0A_1}$ for which there exists a resource system that can be used to implement the computation non-locally, and which has mutual information at most $E$.
\end{definition}
Note that when we say a unitary $\mathbf{U}_{A_0A_1}$ is performed non-locally, we mean with respect to the tensor product decomposition of $\mathcal{H}_A$ into $\mathcal{H}_{A_0}\otimes \mathcal{H}_{A_1}$.

The observations recalled in \cref{sec:localtononlocal} give that, for scattering regions constructed using the CFT vacuum as initial state, computations happening inside the scattering region are realized non-locally in the boundary, with the same tensor product decomposition and with a resource system whose mutual information is \emph{equal} to the area of the ridge. 
We can summarize this as
\begin{align}\label{eq:QGsubsetNLQCvac}
    QG_{A_0:A_1}(S_0) \subseteq NLQC_{A_0:A_1}(\text{area}(S_0))
\end{align}
where by the area of $S_0$ we mean the area of the ridge, measured in Planck units in the undeformed geometry. 

The situation for the non-vacuum case is less clear, and we do not believe the containment \ref{eq:QGsubsetNLQCvac} holds. 
In this case we still have
\begin{align}\label{eq:QGsubsetNLQCnon-vac}
    QG_{A_0:A_1}(S) \subseteq NLQC_{A_0:A_1}(I(\mathcal{V}_0:\mathcal{V}_1)/2)
\end{align}
but it is unclear how to relate the geometry of the scattering region to the mutual information.
In fact, in the non-vacuum case we can have ridge area strictly less than half the mutual information, and in fact the ridge area can be $\Theta(1)$ even when the mutual information is $\Theta(1/G_N)$. 
In general, our inability to give constraints on computation in the scattering region away from the vacuum case is related to the fact that the CFT can only be used to probe boundary anchored quantities.
In this context, that means we can probe what quantum tasks can be completed, and not directly probe what happens in the scattering region. 
Because the scattering region is defined by shooting light rays in from the boundary (or from extremal surfaces anchored to the boundary), it becomes difficult in some cases to relate its geometry to boundary defined quantities.
This difficulty disappears in the case where the initial state is the CFT vacuum.

We could try to write an analogue of \cref{eq:QGsubsetNLQCvac} for the non-vacuum case by for example considering the ridge area maximized over choices of operations on $\mathcal{V}_1$, $\mathcal{V}_2$, and claiming for example that $QG_{A_0:A_1}(S) \subseteq NLQC_{A_0:A_1}(\text{area}(r'))$ where $r'$ has maximal area over choices of operation.
While the area of $r'$ is upper bounded by the mutual information, it is still possible the mutual information can be arbitrarily large compared to this maximized area, and consequently unclear that any containment like this one should hold.  
Given these difficulties, we focus on the case where the initial state, on a time slice before the decision regions, is the vacuum. 
We emphasize however that because we allow Alice to perform arbitrary operations to the decision regions, the geometry may still undergo large changes during the computation.\footnote{We thank Jon Sorce and Geoff Penington for helpful discussions around this and the last paragraph.}

%%%%%%%%%%%%%%%%%%%%%%%%%%%%%%%%%%%%%%%%%%%%%%%%
\section{Explicit linear lower bounds on entanglement and the CEB}\label{sec:linearbounds}
%%%%%%%%%%%%%%%%%%%%%%%%%%%%%%%%%%%%%%%%%%%%%%%%

In the last section we argued that
\begin{align}
    QG_{A_0:A_1}(S_0) \subseteq NLQC_{A_0:A_1}(\text{area}(S_0)).
\end{align}
Because of this containment, lower bounding the entanglement required for a given computation to be larger than the area of $S$ shows it cannot be implemented within the scattering region. 

There are a few tasks where lower bounds on the resources needed in non-local computation have been proven \cite{buhrman2014position,beigi2011simplified,tomamichel2013monogamy,bluhm2021position}. 
All known bounds that do not make assumptions about the protocol used are linear, with each different bound applying for a different choice of task. 
Let us focus on the bound following from \cite{tomamichel2013monogamy}.

The strategy for obtaining this bound is as follows. 
We consider a general task $T$. 
We suppose that there is some entangled resource state $\rho_{\mathcal{V}_0\mathcal{V}_1}$ and corresponding protocol that completes the task with probability $1$. 
Then, we consider running the same protocol, but replacing the entangled resource state with the product state $\rho_{\mathcal{V}_0}\otimes \rho_{\mathcal{V}_1}$. 
Then there is a simple argument \cite{may2021holographic} that shows\footnote{Note that the entropy is defined using the natural logarithm here, differing from the usual quantum information convention.}
\begin{align}\label{eq:logpsuc}
    \frac{1}{2}I(\mathcal{V}_0:\mathcal{V}_1)_{\rho_{\mathcal{V}_0\mathcal{V}_1}} \geq - \ln p_{\text{suc}}(\rho_{\mathcal{V}_0}\otimes \rho_{\mathcal{V}_1}).
\end{align}
Intuitively, this says that a task which can only be completed with low probability without entanglement requires lots of entanglement to complete with high probability. 

For a specific family of quantum tasks related to monogamy-of-entanglement games \cite{tomamichel2013monogamy}, it is known how to prove upper bounds on success probability. 
In particular these consist of tasks where $A_0$ holds one end of a maximally entangled state, and $A_1$ holds a choice of measurement basis. 
Bob measures his end of the maximally entangled state in the chosen basis, and obtains outcome $b$. 
Alice's task is to return $b$ at both $\mathcal{R}_0$ and $\mathcal{R}_1$. 
Using monogamy-of-entanglement games that exploit constructions of mutually unbiased bases in large dimensions \cite{ivonovic1981geometrical,wootters1989optimal,klappenecker2003constructions}, one can obtain $p_{\text{suc}}(\rho_{\mathcal{V}_0}\otimes \rho_{\mathcal{V}_1}) \leq 1/\sqrt{d_{A_0}}$.\footnote{A similar argument based on mutually unbiased bases appears in \cite{beigi2011simplified}.}
Inserted in \cref{eq:logpsuc}, this leads to
\begin{align}\label{eq:logdlowerbound}
    \frac{1}{2}I(\mathcal{V}_0:\mathcal{V}_1) \geq \alpha \ln d_{A_0},\,\,\,\,\,\,\,\,\,\text{(quantum tasks bound)}
\end{align}
where $\alpha=1/2$.\footnote{In \cite{may2019quantum,may2021holographic} we gave a similar argument, but with a simpler choice of measurement basis in the monogamy game, that yields $\alpha\approx 0.15$.}
As a consequence of the above bound, we obtain that the inputs to this task cannot be taken larger than the corresponding mutual information in the boundary, divided by $2\alpha$. 

An observation is that, by randomly guessing Bob's measurement outcome, Alice can always achieve $p_{\text{suc}} = 1/d_{A_0}$. 
If there were a game which did no better than this lower bound on the success probability, that game would lead to $\alpha=1$ in the bound \eqref{eq:logdlowerbound}. 
Consequently, it is clear we can never prove a lower bound on the mutual information that is stronger than \cref{eq:logdlowerbound} with $\alpha=1$ using monogamy games.
Similarly, all of the bounds in  \cite{buhrman2014position,beigi2011simplified,bluhm2021position} are linear with $\alpha<1$.

It is interesting to ask how the bound \eqref{eq:logdlowerbound} is enforced by bulk physics.
To understand this, recall the \emph{covariant entropy bound} (CEB) \cite{bousso1999covariant,flanagan2000proof, bousso2014proof}, which can be stated as follows. 
Consider a codimension 2 spacelike surface $\gamma$, and consider a null surface $\Sigma$ which is swept out by null geodesics that start on $\gamma$, are orthogonal to $\gamma$, and end on $\sigma$ or on caustics. Then the covariant entropy bound says that if the expansion along $\Sigma$ is non-positive, then 
\begin{align}\label{eq:CEB}
    \frac{\text{area}[\gamma]}{4G_N} - \frac{\text{area}(\sigma)}{4G_N} \geq \Delta S[\Sigma]
\end{align}
where $\Delta S[\Sigma]$ is the matter entropy passing through $\Sigma$. 

By assuming the CEB, we can place a constraint on what happens within the scattering region. 
It turns out that this constraint is strong enough to enforce \cref{eq:logdlowerbound}. 
Recall that the two lower faces of the scattering region labelled $\Sigma^0,\Sigma^1$ are defined as portions of the boundary of the causal future of extremal surfaces (see \cref{eq:lowerfaces}). 
Because of this, both surfaces have non-positive expansion.
Further, $r$ is a surface inside of the lightsheet $\Sigma^0$, and as a consequence the generators of $\Sigma^0$ are also orthogonal to $r$. 
These facts mean that we can apply the CEB by taking $\gamma$ in \cref{eq:CEB} to be $r$, with $\Sigma^0$ the associated lightsheet.
We could consider including the subtracted term, taking $\sigma$ to be the edges $\Sigma^0\cap \Gamma^0$ and $\Sigma^1\cap \Gamma^1$, but this subtracted term does not seem to lead to a well defined bound on non-local computation. 
This is because each choice of protocol will deform the bulk geometry in the future of the null membrane in some way, and in particular change the area of $\Sigma^0\cap \Gamma^0$ and $\Sigma^1\cap \Gamma^1$ (but not the area of $r$). 
Fortunately, the CEB will be strong enough to enforce the constraint from non-local computation even with this subtracted term removed. 

Applying the CEB to the ridge, and dropping the subtracted area term, we learn that
\begin{align}
    \frac{\text{area}[r]}{4G_N} \geq \Delta S[\Sigma^0].
\end{align}
A similar bound holds for $\Sigma^1$.
Since the inputs could in general be in the maximally mixed state, the right hand side here is just $\ln d_{A_0}$.
Thus computation inside the scattering region is limited to having at most an area's worth of inputs from each side.

Recall that we can upper bound the area of the ridge in terms of the mutual information, as given in \cref{eq:mutualinfoandarea}. 
Using this, we have that the left hand side of the CEB is bounded above by $1/2$ the mutual information.
\begin{align}
    \frac{1}{2}I(\mathcal{V}_0:\mathcal{V}_1) \geq \ln d_{A_0} \,\,\,\,\,\,\,\,\,\text{(from CEB)}.
\end{align}
This bound is of the same form as \cref{eq:logdlowerbound}, but with $\alpha=1$. 
Since all monogamy-game based bounds have $\alpha\leq 1$, the CEB based statement is strong enough to enforce all bounds coming from monogamy games.  

An important goal is to understand if there are other quantum tasks for which stronger lower bounds in terms of the input dimension on the mutual information exist. 
For example linear bounds with $\alpha>1$, or other explicit bounds with a super-linear behaviour. 
See for example \cite{junge2021geometry} for work in this direction, where a `smoothness' assumption is made on the form of the non-local computation protocol, and new interesting bounds can be obtained.\footnote{These bounds are linear in the total input size, which is the relevant quantity for gravity, but exponential in the quantum part of the inputs.}  
For bounds linear in $\ln d_{A_0}$ with $\alpha>1$, one would have to carefully understand if the subtraction term in the CEB, interpreted appropriately, suffices to enforce this constraint, or if new physics is needed. 
For super-linear bounds, it's clear physics beyond the CEB is needed, since the upper bound on $\ln d_{A_0}$ from the CEB bound scales like $1/G_N$, while the new bound would scale differently in $G_N$. 

There is reason to expect proving super-linear lower bounds on entanglement in non-local computation to be a difficult problem. 
In the context of the $f$-routing tasks discussed in \cite{buhrman2013garden,cree2022code}, upper bounds on entanglement in terms of measures of the complexity of a classical function $f$ have been proven. 
Consequently, proving lower bounds on entanglement cost proves lower bounds on these measures of complexity. 
For example the protocol in \cite{cree2022code} shows that entanglement lower bounds imply lower bounds on span program size \cite{karchmer1993span}.
While some such lower bounds are known, they are never stronger than linear, and proving stronger bounds is a long standing and elusive goal in complexity theory.
Because of this, it is especially valuable to look for lower bounds of a different type. 
In particular bounds in terms of the complexity of the operation to be performed, rather than explicit bounds in terms of the input size, do not suffer the same challenges as explicit bounds. 
We begin exploring such bounds in the next section.

%%%%%%%%%%%%%%%%%%%%%%%%%%%%%%%%%%%%%%%%%%%%%%%%
\section{What is the bulk dual of entanglement cost?}\label{sec:whatisthedual?}
%%%%%%%%%%%%%%%%%%%%%%%%%%%%%%%%%%%%%%%%%%%%%%%%

The strongest existing lower bounds on entanglement cost in non-local computation are linear in the number of qubits of input. 
As discussed in the previous section, the bulk consequences of such bounds can be understood as enforced by the covariant entropy bound.
As we discuss in detail below, the best protocols for performing non-local computation use entanglement exponential in the input size.
In general then, we know little about the entanglement necessary to implement a unitary non-locally --- there is an exponential gap between our upper and lower bounds. 
Within this gap some property of the unitary $\mathbf{U}$ must control its entanglement cost, and we would like to understand what this property is.
Somewhat trivially, this property could just be the size of the inputs, and (for example) all unitaries could be implementable with linear entanglement, although this seems unlikely.\footnote{In particular, this would require an exponential improvement to the known general purpose protocols \cite{beigi2011simplified}. Evidence this isn't possible is given in \cite{junge2021geometry}.}

Whatever controls the entanglement cost, bulk physics must see this quantity and somehow enforce the associated constraint on computation. 
In particular if this quantity is too large, bulk physics must enforce that the computation cannot happen within the scattering region. 
If the controlling quantity is the input size, the CEB gives an appropriate bulk mechanism. 
If the controlling quantity is something else, bulk physics must see that quantity and enforce a new constraint. 

Lloyd has suggested \cite{lloyd2000ultimate} that the circuit complexity of a unitary controls how large a region is necessary to implement it, and has suggested a mechanism by which this could be enforced. 
Lloyd assumed that in gravity the implementation of a unitary $\mathbf{U}$ can be modelled as a sequence of elementary gates acting on $O(1)$ qubits, each of which maps the input to an orthogonal state. 
Given this assumption, the Margolus-Levitin theorem \cite{margolus1998maximum} gives that the time required to implement a single gate is related inversely to the expectation value of the energy of the system,\footnote{The $\lesssim$ here means we are dropping constant factors.}
\begin{align}\label{eq:MLtheorem}
    \delta t \gtrsim \frac{1}{\langle E \rangle},
\end{align}
where $E$ is the energy above the ground state. 
Lloyd then noted that the energy inside the subregion should be less than the energy of a black hole that encloses the system, say of area $R$, so $\langle E \rangle \lesssim R$. Calling the time extent of the region $T$, we would then have
\begin{align}
    \mathcal{C} \lesssim T /\delta t \lesssim T R.
\end{align}
Assuming $T\sim R$, we can also express this in terms of the regions area $A$
\begin{align}\label{eq:lloydscomplextyareabound}
    \mathcal{C} \lesssim A^{2/(d-1)}.
\end{align}
For scattering regions, the naturally associated area is the ridge area, which we expect is similar to the area of an enclosing sphere. 
It is interesting that a simple assumption about bulk computations --- that they happen via local gates that map to orthogonal states --- leads to a bound on complexity. 
We will later argue that the non-local computation perspective also suggests a bound on complexity. 

It is important to emphasize that Lloyd's assumption really is an assumption, and not a fact about quantum mechanics: even implementing high complexity unitaries need not involve moving through many orthogonal states. Further, this assumption or some additional input is needed for Lloyd's conclusion to hold, as it is possible within the framework of quantum mechanics to do arbitrarily complex unitaries arbitrarily quickly, while at arbitrarily low energy, as pointed out by Jordan \cite{jordan2017arbitrarily}. 
It would be interesting to understand if Lloyd's assumption can be better justified, perhaps by considering the covariant entropy bound and the finite speed of light. 

Starting from Lloyd's suggestion that complexity controls how large a region is needed to implement a computation, we will explore the possibility that circuit complexity controls the entanglement cost to implement a unitary non-locally. 
Before taking this up in the next section, we will formalize this idea as a pair of conjectures. 

\begin{conjecture}\label{conjecture:weakcomplexity-entanglement}
Given a unitary $\mathbf{U}_{A_0A_1}$, the minimal entanglement cost needed to implement $\mathbf{U}_{A_0A_1}$ non-locally, labelled $E_{A_0:A_1}$, and the interaction-class complexity of $\mathbf{U}$, labelled $\mathscr{C}_{A_0:A_1}$, are related by 
\begin{align}
    F_0(\mathscr{C}_{A_0:A_1}) \leq E_{A_0:A_1} \leq F_1(\mathscr{C}_{A_0:A_1}),
\end{align}
where $F_1(\mathscr{C})=\text{poly}(F_0(\mathscr{C}))$.
\end{conjecture}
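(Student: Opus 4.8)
The plan is to establish the two inequalities separately, working throughout with the interaction class so that the local pre- and post-processing isometries $\mathbf{V}^L,\mathbf{V}^R,\mathbf{W}^L,\mathbf{W}^R$ of \cref{fig:entangledpart} can be discarded. Since those isometries act locally on each side, they cost no entanglement in a non-local implementation and can be absorbed for free; hence the minimal entanglement cost $E_{A_0:A_1}$ of $\mathbf{U}$ equals the minimal entanglement cost of the cheapest $\mathbf{U}^I\in\mathscr{I}_{A_0:A_1}(\mathbf{U})$, and it suffices to relate that quantity to the circuit complexity $\mathscr{C}(\mathbf{U}^I)$ of the optimal such $\mathbf{U}^I$, i.e.\ to $\mathscr{C}_{A_0:A_1}$.

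For the upper bound $E_{A_0:A_1}\le F_1(\mathscr{C}_{A_0:A_1})$, I would take the optimal interaction-class unitary $\mathbf{U}^I$, write a size-$\mathscr{C}_{A_0:A_1}$ circuit for it as a sequence of two-qubit gates, and feed this into a generic instantaneous-non-local-computation compiler: each gate straddling the $A_0'/A_1'$ cut is replaced by a teleportation-based or port-based-teleportation gadget, with the accumulated corrections deferred and undone in the second round of local operations. The protocols reviewed in \cref{sec:speelmanbounds} and \cref{sec:PTbounds} show this consumes an amount of entanglement that is an explicit monotone function $F_1$ of the circuit size --- in the worst case exponential, but polynomial for structured circuits. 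This direction is essentially known and I expect no real difficulty.

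The hard direction is the lower bound $F_0(\mathscr{C}_{A_0:A_1})\le E_{A_0:A_1}$, and in particular obtaining it with $F_1=\poly(F_0)$. Given an optimal non-local protocol in the form of \cref{eq:non-localform} using a resource state with $E$ ebits, one would like to ``unroll'' it into a low-complexity circuit for some $\mathbf{U}^I$ in the interaction class: collapse the local rounds into isometries and argue that what remains --- the genuine cross-cut interaction --- acts on only $O(E+\log d_{A_0}+\log d_{A_1})$ qubits. This does bound the \emph{dimension} of the interaction unitary, hence gives $\mathscr{C}_{A_0:A_1}\le \exp(O(E))$, i.e.\ a valid but very weak $F_0(\mathscr{C})=\Omega(\log\mathscr{C})$. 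The obstacle is that a bound on dimension is not a bound on complexity: a unitary on few qubits can still have exponential circuit complexity, so the extracted $\mathbf{U}^I$ need not be simple, and this $F_0$ is far too weak to be polynomially related to the $F_1$ above. Closing this gap --- showing that the interaction part of a \emph{low-entanglement} protocol can always be realized by a \emph{short} circuit --- is the crux, it is closely tied to proving super-linear circuit complexity lower bounds, and I do not see how to do it in general; this is why the statement is offered as a conjecture rather than a theorem.

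For partial progress I would retreat to restricted protocol classes where the unrolling argument controls complexity and not merely dimension: for protocols built from Clifford operations the circuit for $\mathbf{U}^I$ obtained by unrolling is itself Clifford, hence efficiently describable, giving matching polynomial $F_0$ and $F_1$ (carried out in \cref{sec:restrictedprotocols}), and an analogous statement for Bell-measurement-plus-classical-computation protocols follows from \cite{buhrman2013garden}. I would also invoke the bulk heuristic as evidence rather than proof: assuming bulk computations inside a scattering region proceed via local gates moving through orthogonal states ties the admissible complexity to the ridge area, which through \cref{eq:QGsubsetNLQCvac} translates into the conjectured polynomial relation. Finally, the $f$-routing analysis of \cref{sec:froutingprediction} together with the code-based protocol of \cite{cree2022code} is the natural concrete family on which to stress-test $F_1=\poly(F_0)$.
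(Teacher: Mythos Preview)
The statement is a \emph{conjecture}, and the paper does not prove it; it only assembles evidence (upper bounds in \S\ref{sec:speelmanbounds}--\ref{sec:PTbounds}, a weak lower bound in \S\ref{sec:Clowerbound}, restricted-protocol results in \S\ref{sec:restrictedprotocols}, and the $f$-routing check in \S\ref{sec:froutingprediction}). Your proposal correctly recognises this and surveys the same pieces of evidence, so at the level of ``what supports the conjecture'' you are aligned with the paper.

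Where your sketch departs from the paper, and where it has a genuine gap, is the lower-bound direction. You write that from an optimal non-local protocol with $E$ ebits one can ``collapse the local rounds into isometries and argue that what remains --- the genuine cross-cut interaction --- acts on only $O(E+\log d_{A_0}+\log d_{A_1})$ qubits,'' hence $\mathscr{C}_{A_0:A_1}\le \exp(O(E))$. This step does not go through as stated. In the non-local form \eqref{eq:non-localform} there is \emph{no} interaction unitary to extract: the entanglement $\Psi_{LR}$ is consumed by the first-round local maps before any cross communication, so you cannot simply push the EPR preparation into a small $\mathbf{U}^I$ and absorb the rest into $\mathbf{V}^{L,R},\mathbf{W}^{L,R}$ --- the first-round operations need the EPR halves as inputs, while the interaction-class template (\cref{fig:entangledpart}) places $\mathbf{U}^I$ \emph{after} $\mathbf{V}^{L,R}$. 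The paper's actual mechanism (\cref{lemma:PTlowerbound}, \cref{fig:generalsurgery}) is a nontrivial ``surgery'': prepare EPR pairs locally on each side, then use a port-teleportation measurement as the interaction unitary to sew the halves together; the byproduct of port-teleportation is a partial trace rather than a Pauli, which is what makes the correction possible without assuming Clifford structure. This construction (i) only works for one-sided tasks (classical input on one side), and (ii) yields an interaction unitary on $\Theta(2^{8E})$ qubits, giving $F_0(\mathscr{C})=\Omega(\log\log\mathscr{C})$, not the $\Omega(\log\mathscr{C})$ you claim. Your cleaner $\log$ bound would require an efficient port-teleportation measurement, which the paper flags as an open question (see the footnote in \S\ref{sec:futuredirections}).

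In short: your overall framing matches the paper, but your ``unrolling'' argument for the lower bound skips the key technical idea (port-teleportation surgery), overstates the resulting bound by one exponential, and omits the one-sided restriction under which the paper's bound is proved.
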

We will refer to this statement as the \emph{weak complexity-entanglement conjecture}. Roughly, this says that the complexity is the relevant quantity controlling entanglement cost, without specifying the relationship between the two quantities.
When upper and lower bounds are related polynomially, as in this conjecture, we will say they are `nearly matching'.

We can also make a stronger conjecture that specifies this relationship.
In particular, suppose that bound \eqref{eq:MLtheorem} can be saturated. 
That is, assume it is possible to do gates on $O(1)$ qubits in time inversely related to the system's energy. 
Then we could saturate the bound \eqref{eq:lloydscomplextyareabound}, which suggests the following relationship between complexity and entanglement in non-local computation.
\begin{conjecture}\label{conjecture:strongcomplexity-entanglement}
Given a unitary $\mathbf{U}_{A_0A_1}$, the minimal entanglement cost needed to implement $\mathbf{U}_{A_0A_1}$ non-locally, $E_{A_0:A_1}$, and the interaction-class complexity of $\mathbf{U}_{A_0A_1}$, $\mathscr{C}_{A_0:A_1}$ are related polynomially, 
\begin{align}
    F_0(\mathscr{C}_{A_0:A_1}) \leq E_{A_0:A_1} \leq F_1(\mathscr{C}_{A_0:A_1})
\end{align}
where $F_0$ and $F_1$ are polynomial in the complexity $\mathscr{C}$.
\end{conjecture}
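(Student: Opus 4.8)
The strong conjecture asserts that $E_{A_0:A_1}$ and $\mathscr{C}_{A_0:A_1}$ are polynomially related, so it suffices to establish two one-sided bounds by separate arguments: (i) the upper bound $E_{A_0:A_1} \le \text{poly}(\mathscr{C}_{A_0:A_1})$, via an explicit non-local protocol, and (ii) the lower bound $\mathscr{C}_{A_0:A_1} \le \text{poly}(E_{A_0:A_1})$ (equivalently $E_{A_0:A_1} \ge \text{poly}^{-1}(\mathscr{C}_{A_0:A_1})$), via a compilation of an arbitrary non-local protocol into a low-complexity interaction unitary. One then reads off admissible polynomials $F_0$ and $F_1$.

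For (i), fix a unitary $\mathbf{U}^I \in \mathscr{I}_{A_0:A_1}(\mathbf{U})$ attaining the minimum in Definition~\ref{def:entangledpartcomplexity}, so $\mathscr{C}(\mathbf{U}^I) = \mathscr{C}_{A_0:A_1}$. Implementing $\mathbf{U}$ in the form \cref{eq:non-localform} reduces to implementing $\mathbf{U}^I$ non-locally: the splitting isometries $\mathbf{V}^L,\mathbf{V}^R$ and the output isometries $\mathbf{W}^L,\mathbf{W}^R$ are local and cost no entanglement, and the wires they shuttle directly from one input location to the opposite output location (the crossed lines in \cref{fig:entangledpart}) can simply be forwarded inside the single round of communication in \cref{eq:non-localform}. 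Thus it is enough to show that a \emph{general} unitary of circuit complexity $c$ admits a non-local implementation using $\text{poly}(c)$ ebits. Naive gate-by-gate compilation via teleportation gadgets (Speelman- or port-based-teleportation style) does not suffice, since non-Clifford gates there cost entanglement exponential in their number; the plan is instead to generalize the code-routing protocol of \cite{cree2022code} from $f$-routing to general unitaries --- encoding the inputs into a quantum code whose parameters are controlled by a circuit description of $\mathbf{U}^I$, routing code blocks between the decision regions, and decoding --- or, alternatively, to gadgetize $\mathbf{U}^I$ into a $\text{poly}(c)$-size Clifford circuit acting on $\text{poly}(c)$ magic states and combine the Clifford-restricted result of \cref{sec:restrictedprotocols} with a cheap non-local injection of those magic states. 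This is the main obstacle on the upper-bound side: it is exactly the open question of whether complexity, rather than input size, governs the entanglement cost of non-local computation, with the $f$-routing case of \cite{cree2022code} and the Clifford-restricted case as the evidence that the approach is viable.

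For (ii), suppose $\mathbf{U}_{A_0A_1}$ is implemented in the form \cref{eq:non-localform} with a resource $\Psi_{LR}$ of mutual information $E_{A_0:A_1}$. The composite of the two local channels $\mathbfcal{B}$, the single round of communication, and the two local channels $\mathbfcal{C}$, taken together with $\Psi_{LR}$, already realizes a unitary dilation lying in $\mathscr{I}_{A_0:A_1}(\mathbf{U})$; what must be shown is that one such dilation has circuit complexity $\text{poly}(E_{A_0:A_1})$. The target is a decomposition in which the part of this circuit that genuinely couples the two sides is controlled by $E_{A_0:A_1}$, so that the remaining complexity is local and absorbable into the pre- and post-processing isometries of Definition~\ref{def:entangledpartcomplexity}, leaving an interaction unitary of size $\text{poly}(E_{A_0:A_1})$. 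The essential obstacle is that a resource with small entanglement entropy $E_{A_0:A_1}$ need not have small Schmidt rank --- there are states with $O(1)$ entanglement entropy but arbitrarily large rank --- so one cannot simply truncate $\Psi_{LR}$ to $\text{poly}(E_{A_0:A_1})$ dimensions, and a genuinely new argument is required to show that such excess rank cannot be leveraged to lower the interaction complexity. This is why the weak lower bound of \cref{sec:Clowerbound} is exponentially far from polynomial, and why I expect (ii) --- not the protocol construction --- to be the true bottleneck: by the reduction of \cite{cree2022code}, a polynomial lower bound here would imply new super-linear lower bounds (for instance on span program size) that are long-standing open problems in complexity theory.
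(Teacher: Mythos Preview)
The statement is a \emph{conjecture}, and the paper does not prove it; it only collects partial evidence (the weak upper and lower bounds of \cref{sec:upperandlowerbounds}, the Clifford-restricted result of \cref{sec:cliffordmatching}, and the $f$-routing prediction of \cref{sec:froutingprediction}). Your proposal is not a proof either, and you are upfront about this: both (i) and (ii) are identified as open, with the obstacles named. In that sense your write-up is an accurate summary of the state of affairs and is well aligned with the paper's own discussion.

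One point in your analysis is off, however. You claim that establishing (ii) --- a lower bound $E_{A_0:A_1}\ge F_0(\mathscr{C}_{A_0:A_1})$ with $F_0$ polynomial --- would, via the code-routing upper bound of \cite{cree2022code}, imply new super-linear span program lower bounds, and you cite this as the reason (ii) should be the harder direction. The paper draws exactly the opposite distinction in \cref{sec:linearbounds}: it is \emph{explicit} lower bounds on $E$ as a function of the input size $n$ that would imply span program lower bounds and hence face complexity-theoretic barriers, whereas bounds of the form $E\ge F_0(\mathscr{C})$ ``do not suffer the same challenges as explicit bounds.'' The reason is that chaining $F_0(\mathscr{C})\le E\le SP_{(2)}(f)$ yields only $SP_{(2)}(f)\ge F_0(\mathscr{C})$, which is a lower bound on span program size in terms of \emph{circuit complexity}; extracting an explicit super-linear bound for a concrete $f$ would still require an explicit circuit lower bound as input, so no barrier is crossed. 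Your diagnosis that (ii) is hard is reasonable, but the justification should rest on the Schmidt-rank-versus-entropy obstruction you already identified, not on a purported reduction to span program lower bounds.

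A smaller remark on (i): the magic-state route you sketch does not obviously help. Even if the interaction unitary is Clifford on $\text{poly}(c)$ qubits plus $\text{poly}(c)$ magic states, the magic states must be \emph{shared} or injected in a way that interacts with the classical input $x$ on the other side, and the known ways to do this non-locally (Speelman's $T$-gadgets) are precisely what blow up exponentially in the $T$-count. So this route, as stated, reduces to the same open problem rather than circumventing it.
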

We will refer to this statement as the \emph{strong complexity-entanglement conjecture}.
In both conjectures we are agnostic about the specific choice of complexity $\mathscr{C}$ underlying the definition of the interaction-class complexity, although the (non-uniform) circuit complexity is a natural choice given Lloyd's argument.

%%%%%%%%%%%%%%%%%%%%%%%%%%%%%%%%%%%%%%%%%%%%%%%%
\section{Complexity and entanglement in non-local computation}\label{sec:upperandlowerbounds}
%%%%%%%%%%%%%%%%%%%%%%%%%%%%%%%%%%%%%%%%%%%%%%%%

To support conjectures \ref{conjecture:weakcomplexity-entanglement} and \ref{conjecture:strongcomplexity-entanglement}, we first show that there exist \emph{some} non-trivial upper and lower bounds in terms of complexity on the entanglement cost. 
For upper bounds, we use \cite{speelman2015instantaneous}. 
The lower bound we prove here. 
Together they give,
\begin{align}\label{eq:upperandlowerbounds}
    \Omega(\log \log \mathscr{C}_{A_0:A_1}) \leq E_{A_0:A_1} \leq O(\mathscr{C}_{A_0:A_1}2^{\mathscr{C}_{A_0:A_1}}).
\end{align}
where $\mathscr{C}_{A_0:A_1}$ is the interaction-class circuit complexity, $E_{A_0:A_1}$ is the entanglement cost.
The lower bound holds for a large but not fully general class of quantum tasks, where the input on one side must be classical.

As well, there are upper and lower bounds in terms of $n'$, the minimal number of qubits sent into the interaction unitary. 
These bounds are
\begin{align}\label{eq:sizebasedbounds}
    \Omega(\log n') \leq E_{A_0:A_1} \leq O(2^{8n'})
\end{align}
where we prove the lower bound here and review the upper bound, which was shown in \cite{beigi2011simplified}. 
Note that the lower bound holds for any task where the inputs on one side are classical, whereas the linear bounds discussed in \cref{sec:linearbounds} hold for specially constructed tasks. 

Our upper and lower bounds are unfortunately far from being related polynomially to each other or to the entanglement cost.
As well, aside from the complexity based bound, there is also the system size based bound of \cref{eq:sizebasedbounds}, which raises $n'$ as a possible quantity controlling the entanglement cost, and possibly undermining our argument. \Cref{sec:restrictedprotocols} and \cref{sec:froutingprediction} offer other evidence for the complexity based conjectures.

%%%%%%%%%%%%%%%%%%%%%%%%%%%%%%%%%%%%%%%%%%%%%%%%
\subsection{Upper bound from complexity}\label{sec:speelmanbounds}
%%%%%%%%%%%%%%%%%%%%%%%%%%%%%%%%%%%%%%%%%%%%%%%%

We briefly recall an upper bound due to Speelman \cite{speelman2015instantaneous}. Speelman considers the task of implementing a unitary $\mathbf{U}_{A_0A_1}$, given a circuit decomposition of $\mathbf{U}_{A_0A_1}$ into the Cliffords $+$ $T$ gate set. He gives two protocols, which achieve
\begin{align}\label{eq:Tcountbound}
    E_{A_0:A_1} &= O(n2^k),\\
    E_{A_0:A_1} &= O((68n)^d)
\end{align}
Here $n$ is the number of qubits in each of the inputs, $k$ is the number of $T$ gates appearing in the circuit decomposition, and $d$, is the number of layers of $T$ gates.

To implement $\mathbf{U}_{A_0A_1}$, we can first simplify as much as possible via local pre- and post-processing, and then use Speelman's protocol on a unitary in the interaction class of $\mathbf{U}_{A_0A_1}$. 
For the interaction unitary $U^I$, call the number of input qubits $n'$,  the number of $T$ gates $k'$, and the number of layers of $T$ gates $d'$. 
We can consider $n'2^{k'}$ or $(68n')^{d'}$ as (slightly obscure) measures of the complexity of the interaction unitary, and consider these as upper bounds on entanglement cost from these measures of complexity. 
Alternatively, we can rephrase these bounds in terms of the more traditional circuit complexity using
\begin{align}
    n',k',d'\leq \mathcal{C}_{A_0:A_1}, 
\end{align}
These bounds follow, respectively, from the fact that qubits with no gates acting on them can always be removed from the interaction unitary, because the number of $T$ gates is less than the total number of gates, and because the number of layers of $T$ gates is always less than the number of gates. 
Using the first bound in \cref{eq:Tcountbound}, these lead to the upper bound in \cref{eq:upperandlowerbounds}.
When the interaction unitary can be taken to be a low complexity unitary, this provides a good bound.\footnote{The bound in terms of $\mathcal{C}_{A_0:A_1}$ following from $E_{A_0:A_1}\leq (68n')^{d'}$ is weaker.}
Because the interaction unitary can be exponentially complex, this upper bound becomes doubly exponential. 
We give a different bound which is at worst singly exponential in the next section.

%%%%%%%%%%%%%%%%%%%%%%%%%%%%%%%%%%%%%%%%%%%%%%%%%%%%%%%%%%%%
\subsection{Upper bound for arbitrary unitaries}\label{sec:PTbounds}
%%%%%%%%%%%%%%%%%%%%%%%%%%%%%%%%%%%%%%%%%%%%%%%%%%%%%%%%%%%%

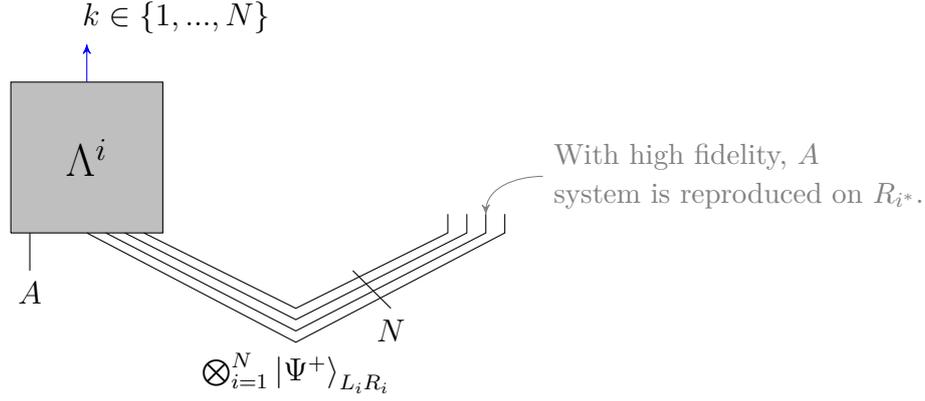
\begin{figure}
    \centering
    \begin{tikzpicture}
    
    \draw[fill=lightgray] (0,0) -- (0,2) -- (2,2) -- (2,0) -- (0,0);
    \node at (1,1) {\Large{$\Lambda^i$}};
    
    \draw (0.25,-0.5) -- (0.25,0);
    \node[below] at (0.25,-0.5) {$A$};
    
    \draw[blue,->] (1,2) -- (1,2.5);
    \node[above right] at (0.8,2.5) {$k \in \{1,...,N\}$};
    
    \draw (1.75,0) -- (3.75,-1) -- (5.75,0) -- (5.75,0.25);
    \draw (1.5,0) -- (3.75,-1.15) -- (6,0) -- (6,0.25);
    \draw (1.25,0) -- (3.75,-1.3) -- (6.25,0) -- (6.25,0.25);
    \draw (1,0) -- (3.75,-1.45) -- (6.5,0) -- (6.5,0.25);
    
    \draw (4.5,-0.5) -- (5,-1);
    \node[below] at (5,-1) {$N$};
    
    \node[below] at (3.75,-1.45) {$\bigotimes_{i=1}^N \ket{\Psi^+}_{L_iR_i}$};
    
    \draw[gray,->] (7,0.75) to [out=180,in=90] (6.25,0.25);
    \node at (7,0.75) [gray,align=left,right]{\small{With high fidelity,} $A$ \\ \small{system is reproduced on $R_{i^*}$.}};
    
    %\node[align=left] at (7,0.75) {\tiny{ \\ }};
    
    \end{tikzpicture}
    \caption{The port-teleportation protocol. A state $\ket{\psi}_{QA}$ is held in systems $QA$. $N$ entangled systems $\ket{\Psi^+}_{L_iR_i}$ are distributed. A POVM $\{\Lambda^i\}$ is performed on the $AL$ system producing output $i^* \in \{1,...,N\}$. The state $\ket{\psi}$ then appears on the $QR_{i^*}$ system with a fidelity controlled by $1/N$.}
    \label{fig:port_teleport}
\end{figure}

For unitaries with high complexity interaction unitarys, a different protocol performs better than Speelman's. 
This is the Beigi-K{\"o}nig protocol \cite{beigi2011simplified}, which uses as its basic tool port-teleportation \cite{ishizaka2009quantum}. 
We briefly describe port-teleportation before describing the Beigi-K{\"o}nig protocol. 
Note that our review of both topics follows \cite{may2021thesis}. 

In general, a teleportation procedure has the following form. 
\begin{enumerate*}
    \item Distribute an entangled resource state $\ket{\Psi}_{LR}$ between Alice$_0$ and Alice$_1$, where Alice$_0$ holds $L$ and Alice$_1$ holds $R$. Further, Alice$_0$ holds the $A$ subsystem of a state $\ket{\psi}_{QA}$. 
    \item Alice$_0$ performs a POVM measurement $\mathcal{M}=\{F_x\}_x$ on the $AL$ system. 
    \item Alice$_0$ sends Alice$_1$ the classical measurement outcome $x$. 
    \item Alice$_1$ applies a channel $\mathbfcal{C}^{x}_{R\rightarrow A}$.
\end{enumerate*}
The teleportation is successful when the final state on $QA$ is the same as the initial state on $QA$.

The most familiar example of a teleportation procedure occurs when $A$ is a qubit, $\ket{\Psi}_{LR}=\ket{\Psi^+}_{LR}$ is the maximally entangled state, and the measurement is in the Bell-basis, 
\begin{align}
    \mathcal{M}=\left\{\ket{\Psi^+}_{AL}, X_A \ket{\Psi^+}_{AL},Z_A\ket{\Psi^+}_{AL},X_AZ_A\ket{\Psi^+}_{AL}\right\}.
\end{align}
We will call this \emph{Bell-basis teleportation}, or just teleportation when it is clear from context that we mean this teleportation procedure specifically. An important fact about Bell-basis teleportation is that after Alice$_0$'s measurement the $QR$ system is in one of the states
\begin{align}
    Z_R^{x_1}X_R^{x_2}\ket{\psi}_{QR}.
\end{align}
Bob's correction operation is to apply $Z_R^{x_1}X_R^{x_2}$, which he can do once he receives $x=x_1x_2$ from Alice, and then relabel $R$ as $A$. 

A \emph{port-teleportation} procedure is illustrated in \cref{fig:port_teleport}. In port-teleportation, the entangled state is
\begin{align}
    \ket{\Psi}_{LR} = \bigotimes_{i=1}^N \ket{\Psi^+}_{L_iR_i}.
\end{align}
Each $L_i$, $R_i$ system has the dimensionality of $A$.
The measurement will produce an outcome $i^*\in \{1,...,N\}$, and the correction operation will be to trace out all but the $i^*$ subsystem $R_i$. 
See \cite{ishizaka2009quantum} for a description of how to choose this measurement. 
In port-teleportation $N$ may be large, so that the dimensionality of the resource system is much larger than that of the input system $A$. Further, the correction operation is the trace, which has the interesting feature that \begin{align}
    \tr_{R_{i^*}^c}([\otimes_{j=1}^N \mathbf{U}_{R_i}](\cdot)[ \otimes_{j=1}^N \mathbf{U}_{R_j}^\dagger]) = \mathbf{U}_{R_{i^*}}\tr_{R_{i^*}^c}(\cdot)\mathbf{U}^\dagger_{R_{i^*}}
\end{align} 
where the trace is over $R_{i^*}^c=R_1...R_N \setminus R_{i^*}$. We will see below that this is port-teleportation's key feature relevant for non-local computation. 

In port-teleportation, the reproduction of the $A$ system in the receiver's lab is approximate. 
Call the teleportation channel $\mathcal{T}$.
We can bound the diamond norm distance of the teleportation channel from the identity according to
\begin{align}\label{eq:PTdiamond}
    ||\mathbfcal{T}-\mathbfcal{I} ||_\diamond \leq \frac{4d_A^2}{\sqrt{N}}.
\end{align}
Notice that for a good teleportation, the \emph{number} of ports must be chosen to be large compared to the \emph{dimension} of the input system.
Consequently the number of qubits in the resource system will be exponential in the number of input qubits.  

With this background, we are ready to give the Beigi-K{\"o}nig protocol to compute a unitary $\mathbf{U}_{A_0A_1}$ non-locally.
Intuitively, the protocol is as follows. 
First, we use a Bell basis measurement to bring the $A_0$ and $A_1$ systems together on one side, say the right side, obtaining measurement outcome $x$. 
Then, we port-teleport $A_0A_1$ to the left, obtaining measurement outcome $i^*$.
On the left, we apply $\mathbf{U}(\mathbf{P}^{x}\otimes \mathcal{I})$. 
At this stage the input state, with the unitary correctly applied, is held on one port on the left, and the data about which port the state is recorded into is held on the right. 
During the communication round the $A_0$ portions of every port are sent to $\mathcal{R}_0$, and the $A_1$ portion to $\mathcal{R}_1$. 
Additionally, $i^*$ is sent to both $\mathcal{R}_0$ and $\mathcal{R}_1$. 
At the output locations, all systems but the $i^*$ port are traced out, and the subsystem from the correct port is returned. 

This protocol is described in more detail below. 

\begin{protocol}\textbf{Arbitrary unitaries using port-teleportation:}\\
Preparation phase:
\begin{enumerate*}
    \item Distribute a maximally entangled system $\ket{\Psi^+}_{F_{0}F_{1}}$ consisting of $n$ EPR pairs, with $F_0$ sent to $\mathcal{C}_0$ and $F_1$ to $\mathcal{C}_1$.
    \item Distribute a set of $N$ maximally entangled systems $\otimes_{i=1}^N \ket{\Psi^+}_{L_{i}R_{i}}$, with each of the states $\ket{\Psi^+}_{L_iR_i}$ consisting of $n$ EPR pairs, with all $L_{i}$ sent to $\mathcal{C}_0$ and all $R_{i}$ to $\mathcal{C}_1$.
\end{enumerate*}
Execution phase:
\begin{enumerate*}
    \item At $\mathcal{C}_0$, measure $A_0 F_{0}$ in the Bell basis, obtaining outcome $x$. Then Alice$_1$ holds the state
    \begin{align}
    (\mathbf{P}^x_{F_{1}}\otimes I_{A_1})\ket{\psi}_{F_{1}A_1}
    \end{align}
    at $\mathcal{C}_1$, and the index $x$ is held at $\mathcal{C}_0$. 
    \item At $\mathcal{C}_1$, perform the appropriate measurement as if port-teleporting systems $F_1A_1$ using the $N$ maximally entangled pairs $\otimes_{i=1}^N \ket{\Psi^+}_{L_iR_i}$. Call the measurement outcome $i^*$. Then $i^*$ is held at $\mathcal{C}_1$ and the $L_1...L_N$ systems of the state
    \begin{align}
        \ket{\Psi} \approx (\mathbf{P}^x\otimes I)\ket{\psi}_{L_{i^*}} \otimes  \ket{\rho}_{L_{i^*}^cX}
    \end{align} 
    are held at $\mathcal{C}_0$, where $X$ is some purifying system.
    \item At $\mathcal{C}_0$, apply $\mathbf{U} (P^x\otimes I)$ to every subsystem $L_k$. Then Alice holds
    \begin{align}
        \ket{\Psi} \approx \mathbf{U} \ket{\psi}_{L_{i^*}} \otimes  \mathbf{U}^{\otimes (N-1)}\ket{\rho}_{{L_{i^*}^c}X}
    \end{align}
    with all $L$ systems at $\mathcal{C}_0$, and $i^*$ at $\mathcal{C}_1$.
    \item Relabel the $L_k$ qubits as $A_{0,k}A_{1,k}$, and send all of the $A_{0,k}$ systems to $\mathcal{R}_0$ and all of the $A_{1,k}$ systems to $\mathcal{R}_1$. Send $i^*$ from $\mathcal{C}_1$ to both $\mathcal{R}_0$ and $\mathcal{R}_1$. 
    \item At $\mathcal{R}_0$ trace out all but the $A_{0,i^*}$ system, relabel it as $A_0$, and return it to Bob. Similarly at $\mathcal{R}_1$ trace out all but the $A_{1,i^*}$ system, relabel it as $A_1$, and return it to Bob.
\end{enumerate*}
\end{protocol}
This completes the arbitrary unitary non-locally, although the use of port-teleportation means this performs the intended unitary only approximately. 
Note that we can always apply the Beigi-K{\"o}nig protocol only to a unitary in the interaction class of the target unitary $\mathbf{U}$.
We let the minimal number of input qubits to the interaction unitary be $n'$.

Using the bound \eqref{eq:PTdiamond}, we find
\begin{align}
    ||\mathbf{U} \cdot \mathbf{U}^\dagger - \mathbfcal{N}_\mathbf{U}||_\diamond \leq \frac{2^{4n'+2}}{\sqrt{N}}
\end{align}
where $\mathbf{U} \cdot \mathbf{U}^\dagger$ is the intended (unitary) channel, and $\mathbfcal{N}_\mathbf{U}$ is the applied channel. 
We are interested in fixing the closeness with which the channel is performed, and understanding how the entanglement required scales with the number of input qubits.
Thus we fix $\epsilon\defi \frac{2^{4n'+2}}{\sqrt{N}}$, and find that $N=2^{8n'+4}/\epsilon^2$. 
Note that there is also an additional, sub-leading, linear entanglement cost of at most $n'$ EPR pairs from the first, Bell basis, teleportation. 
Thus in total we need at most $n'+2^{8n'+4}/\epsilon^2$ EPR pairs, which gives the second upper bound in \cref{eq:upperandlowerbounds}. 

%%%%%%%%%%%%%%%%%%%%%%%%%%%%%%%%%%%%%%%%%%%%%%%%
\subsection{Lower bound from complexity}\label{sec:Clowerbound}
%%%%%%%%%%%%%%%%%%%%%%%%%%%%%%%%%%%%%%%%%%%%%%%%

In this section we prove a lower bound on entanglement cost from the complexity of the interaction unitary, for a broad (but not fully general) class of quantum tasks. 

The proof idea is to suppose there exists a non-local implementation of a given task using $E$ EPR pairs. 
Then, we manipulate the non-local computation circuit to remove the need for EPR pairs and replace them with an interaction unitary in the new protocol. 
We describe this as a `surgery' that involves cutting the EPR pairs and then stitching the protocol back together using an interaction unitary. 
Importantly, we do the surgery in such a way that the complexity of the interaction unitary is upper bounded by a function of $E$, call it $F(E)$. 
Then the minimal complexity interaction unitary must have lower complexity than this, so $\mathscr{C}_{A_0:A_1}\leq F(E)$. 
Solving for $E$ we get a lower bound of the desired form. 

To perform the surgery, we will make use of the port-teleportation protocol described in the last section. 
As well, we will need to restrict the set of tasks that we consider to `one-sided' tasks, which are as follows.  
The input at $\mathcal{C}_0$ is both quantum systems $A_0$, $A_1$, and the input on the right is a classical string $x$, which specifies a unitary $\mathbf{U}^x_{A_0A_1}$. 
The task is to bring systems $A_0$ and $A_1$ to $\mathcal{R}_0$ and $\mathcal{R}_1$ respectively, after performing $\mathbf{U}_{A_0A_1}^x$.

Note that if we begin with the standard unitary task, with a fixed unitary $\mathbf{U}_{A_0A_1}$ and $A_1$ starting on the right, a set of $n=\log \dim A_0$ EPR pairs can be used to bring the task into the one-sided form. 
Thus if one can prove bounds relating complexity and entanglement in the one-sided task these can be translated, up to linear terms, into bounds on the standard task. 
Unfortunately because the bound we prove is so weak the linear factors would make our lower bound negative, and so trivial, so we are restricted to bounding the one-sided case.

\begin{figure*}
    \centering
    \begin{subfigure}{0.45\textwidth}
    \centering
    \begin{tikzpicture}[scale=0.5]
    
    %lower left box
    \draw[thick] (-5,-5) -- (-5,-3) -- (-3,-3) -- (-3,-5) -- (-5,-5);
    \node at (-4,-4) {$\mathbf{V}^L$};
    
    %lower right box
    \draw[thick] (5,-5) -- (5,-3) -- (3,-3) -- (3,-5) -- (5,-5);
    \node at (4,-4) {$\mathbf{V}^R$};
    
    %top right box
    \draw[thick] (5,5) -- (5,3) -- (3,3) -- (3,5) -- (5,5);
    \node at (4,4) {$\mathbf{W}^R$};
    
    %top left box
    \draw[thick] (-5,5) -- (-5,3) -- (-3,3) -- (-3,5) -- (-5,5);
    \node at (-4,4) {$\mathbf{W}^L$};
    
    %left vertical wire
    \draw[thick] (-4.5,-3) -- (-4.5,3);
    
    %right vertical wire
    \draw[thick] (4.5,-3) -- (4.5,3);
    
    %left to right wire
    \draw[thick] (-3.5,-3) to [out=90,in=-90] (3.5,3);
    
    %right to left wire
    \draw[thick] (3.5,-3) to [out=90,in=-90] (-3.5,3);
    
    %entanglement
    \draw[thick] (-3.5,-5) to [out=-90,in=-90] (3.5,-5);
    \draw[black] plot [mark=*, mark size=3] coordinates{(0,-7.05)};
    
    \node at (-2,-6.7) {$/$};
    \node[below] at (-2.2,-7) {$E$};
    
    %input wires
    \draw[thick] (-4.5,-6) -- (-4.5,-5);
    \draw[thick] (4.5,-6) -- (4.5,-5);
    
    %output wires
    \draw[thick] (4.5,5) -- (4.5,6);
    \draw[thick] (-4.5,5) -- (-4.5,6);
    
    \draw[thick] (3.5,5) -- (3.5,6);
    \draw[thick] (-3.5,5) -- (-3.5,6);
    
    \end{tikzpicture}
    \caption{}
    \label{fig:non-localprePTsurgery}
    \end{subfigure}
    \hfill
    \begin{subfigure}{0.45\textwidth}
    \centering
    \begin{tikzpicture}[scale=0.5]
    
    %lower left box
    \draw[thick] (-5,-5) -- (-5,-3) -- (-3,-3) -- (-3,-5) -- (-5,-5);
    \node at (-4,-4) {$\mathbf{V}^L$};
    
    %lower right box
    \draw[thick] (6,-5) -- (6,-3) -- (3,-3) -- (3,-5) -- (6,-5);
    \node at (4.5,-4) {\small{$(V^{R,x})^{\otimes N}$}};
    
    %top right box
    \draw[thick] (5,8) -- (5,6) -- (3,6) -- (3,8) -- (5,8);
    \node at (4,7) {$\mathbf{W}^R$};
    
    %trace box
    \draw[thick] (6.5,5) -- (6.5,3) -- (4,3) -- (4,5) -- (6.5,5);
    \node at (5.25,4) {$\text{tr}_{(Y_1^{i^*})^c}$};
    
    %top left box
    \draw[thick] (-5,8) -- (-5,6) -- (-3,6) -- (-3,8) -- (-5,8);
    \node at (-4,7) {$\mathbf{W}^L$};
    
    %trace box
    \draw[thick] (-4,5) -- (-4,3) -- (-1.5,3) -- (-1.5,5) -- (-4,5);
    \node at (-2.75,4) {$\tr_{(Y_0^{i^*})^c}$};
    
    %left vertical wire
    \draw[thick] (-4.5,-3) -- (-4.5,5);
    
    %right vertical wire
    \draw[thick] (4.5,-3) -- (4.5,3);
    \draw[thick] (4.65,-3) -- (4.65,3);
    
    %left to right wire
    \draw[thick] (-3.5,0) to [out=90,in=-90] (3.5,5);
    \draw[thick] (-3.5,0) -- (-3.5,-3);
    
    %right to left wire
    \draw[thick] (3.5,0) to [out=90,in=-90] (-3.5,3);
    \draw[thick] (3.5,0) -- (3.5,-3);
    
    \draw[thick] (3.65,0.1) to [out=90,in=-90] (-3.3,3);
    \draw[thick] (3.65,0.1) -- (3.65,-3);
    
    %input wires
    \draw[thick] (-4.5,-6) -- (-4.5,-5);
    \draw[thick] (4.5,-6) -- (4.5,-5);
    
    %output wires
    \draw[thick] (4.5,5) -- (4.5,6);
    \draw[thick] (-4.5,5) -- (-4.5,6);
    
    \draw[thick] (3.5,5) -- (3.5,6);
    \draw[thick] (-3.5,5) -- (-3.5,6);
    
    \draw[thick] (4.5,8) -- (4.5,9);
    \draw[thick] (-4.5,8) -- (-4.5,9);
    
    \draw[thick] (3.5,8) -- (3.5,9);
    \draw[thick] (-3.5,8) -- (-3.5,9);
    
    %interaction unitary unitary
    \draw[thick] (-1,-1) -- (-1,1) -- (1,1) -- (1,-1) -- (-1,-1);
    
    %wire into interaction unitary
    \draw[thick] (-3.5,-5) to [out=-90,in=-90] (-2,-5);
    \draw[thick] (-2,-5) to [out=90,in=-90] (-0.5,-1);
    
    \draw[thick] (3.5,-5) to [out=-90,in=-90] (2,-5);
    \draw[thick] (2,-5) to [out=90,in=-90] (0.5,-1);
    
    \draw[thick] (3.65,-5) to [out=-90,in=-90] (1.85,-5);
    \draw[thick] (1.85,-5) to [out=90,in=-90] (0.35,-1);
    
    %wires out of interaction unitary
    \draw[thick] (0.5,1) to [out=90,in=-90] (5.5,3);
    \draw[thick] (-0.5,1) to [out=90,in=-90] (-2.5,3);
    
    %measurement symbol
    \draw[thick] (-0.6,-0.5) arc (180:0:0.6);
    \draw[thick] (0,-0.5) -- (0.5,0.5);
    
    \draw[black] plot [mark=*, mark size=3] coordinates{(-2.7,-5.42)};
    \draw[black] plot [mark=*, mark size=4] coordinates{(2.7,-5.47)};
    
    \node at (-1.25,-3) {$\backslash$};
    \node[below] at (-0.9,-3.4) {$E$};
    
    \end{tikzpicture}
    \caption{}
    \label{fig:entangledpartPTsurgery}
    \end{subfigure}
    \caption{a) The starting non-local quantum computation, using $E$ EPR pairs. b) A local computation which performs the same unitary as at left. Doubled lines indicate a set of $N$ EPR pairs, where $N$ is taken to be $\Theta(2^{8E})$. The measurement in the center is the port-teleportation measurement. The measurement outcome $i^*$ is sent to both output locations and used to trace out all but the $Y_0^{i^*}$ and $Y_1^{i^*}$ systems.}
    \label{fig:generalsurgery}
\end{figure*}
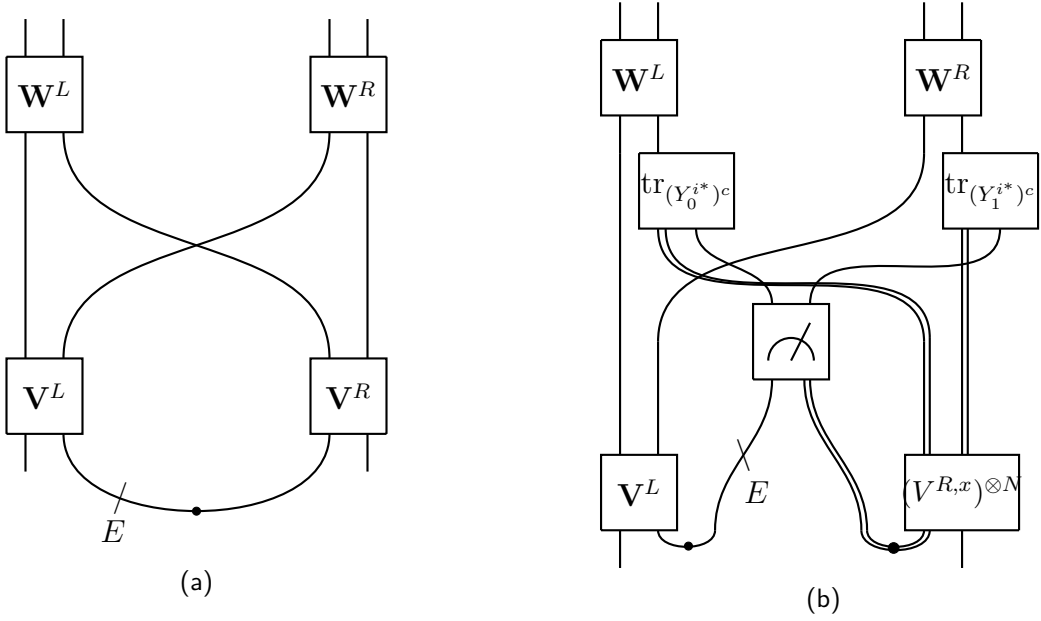

\begin{lemma}\label{lemma:PTlowerbound}
For a one-sided quantum task with the set of unitaries $\{ \mathbf{U}^x_{A_0A_1}\}_x$, the interaction-class circuit complexity of the task $\mathscr{C}_{A_0A_1:x}$ lower bounds the entanglement cost according to $\Omega(\log \log \mathscr{C}_{A_0A_1:x}) = E$.
\end{lemma}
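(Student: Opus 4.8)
The plan is to run the surgery argument sketched just above the statement: assume a non-local protocol for the one-sided task that uses $E$ EPR pairs, transform it into an implementation of $\mathbf{U}^x_{A_0A_1}$ of the interaction-class form of \cref{fig:entangledpart}, and bound the circuit complexity of the resulting interaction unitary $\mathbf{U}^I$ by an explicit function $F(E)$. Since $\mathbf{U}^I$ belongs to the interaction class of the task we get $\mathscr{C}_{A_0A_1:x}\le \mathscr{C}(\mathbf{U}^I)\le F(E)$, and inverting $F$ yields the claimed lower bound on $E$. The crucial structural input is one-sidedness: the right input $x$ is classical, hence freely copyable, so the right-hand operation can be treated as a classically-controlled isometry that never has to be sent through a quantum channel and may be duplicated at will. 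This is exactly what lets the surgery proceed without incurring any teleportation cost for $x$ itself.

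First I would write the starting protocol as in \cref{fig:non-localprePTsurgery}: isometries $\mathbf{V}^L$ and $\mathbf{V}^{R,x}$ at $\mathcal{C}_0,\mathcal{C}_1$ acting on the inputs together with the two halves of the $E$ shared EPR pairs, a single simultaneous (crossing) round of communication, and output isometries $\mathbf{W}^L,\mathbf{W}^R$ at $\mathcal{R}_0,\mathcal{R}_1$. The surgery replaces the shared entanglement and cross-communication by port-teleportation (\cref{sec:PTbounds}). Concretely, one register is port-teleported to the far side using $N$ port pairs, so that it is reproduced in $N$ ports there and the operations downstream of the communication round (in particular $\mathbf{W}^L,\mathbf{W}^R$, and on the relevant side the $(\mathbf{V}^{R,x})^{\otimes N}$ block of \cref{fig:entangledpartPTsurgery}) are applied to all $N$ ports in parallel. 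By the defining feature of port-teleportation recalled in \cref{sec:PTbounds}, its correction is a partial trace onto the selected port $i^*$, which commutes with any unitary applied uniformly to the ports; I would therefore \emph{defer} that correction, broadcast the classical outcome $i^*$ to both output locations, and only at the very end trace out all but the $i^*$-th port at each output (the $\tr_{(Y_0^{i^*})^c},\tr_{(Y_1^{i^*})^c}$ structure in \cref{fig:entangledpartPTsurgery}). After deferral there is no residual shared entanglement or mid-circuit communication except inside one block, which is collected into a single unitary $\mathbf{U}^I$ that creates the original $E$ EPR pairs, creates the $N$ port pairs, performs the classically controlled $N$-fold operations, and realizes the port-teleportation POVM coherently via its Stinespring dilation. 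The diamond-norm error is controlled by \cref{eq:PTdiamond}: taking $N=\Theta(2^{8E})$ brings it below the tolerance $\epsilon$ in the interaction-class definition, with only an additional $O(E)$ EPR pairs of linear overhead from an initial Bell-basis teleportation, exactly as in the Beigi--K{\"o}nig counting.

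The complexity count then goes: the register that is port-teleported has $O(E)$ qubits (the input registers contribute only lower-order terms that are dominated), so $N=\Theta(2^{8E})$ ports of $O(E)$ qubits each means $\mathbf{U}^I$ acts on $m=2^{O(E)}$ qubits; and every unitary on $m$ qubits --- including the coherent port-teleportation measurement --- has circuit complexity at most $2^{O(m)}$, so $\mathscr{C}(\mathbf{U}^I)\le 2^{2^{O(E)}}$, which we take as $F(E)$. Hence $\mathscr{C}_{A_0A_1:x}\le F(E)$, and solving gives $E\ge\Omega(\log\log\mathscr{C}_{A_0A_1:x})$, proving \cref{lemma:PTlowerbound}.

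I expect the main obstacle to be verifying that the surgered circuit genuinely has the interaction-class form of \cref{sec:entangledpart}: that after all the rewrites one really recovers isometries $\mathbf{V}^L,\mathbf{V}^R,\mathbf{W}^L,\mathbf{W}^R$ of the correct types together with one unitary $\mathbf{U}^I$ acting only on the interaction registers, with everything else flowing around it via the $X,Y$ bypass wires. The delicate points are (i) checking that the post-teleportation operations act uniformly enough on the ports for the ``partial trace commutes with later unitaries'' step to be legitimate --- essentially a deferred-measurement argument that must respect the bipartite split; (ii) propagating the port-teleportation error in diamond norm through the remaining isometries without blow-up, so the composite circuit is within $\epsilon$ of $\mathbf{U}^x_{A_0A_1}$ for every $x$ at once; and (iii) pinning down precisely where classicality of $x$ is used, so the argument does not secretly require teleporting $x$. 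These steps are book-keeping-heavy rather than conceptually hard, consistent with the doubly-logarithmic (hence quite weak) bound that results.
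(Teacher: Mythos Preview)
Your approach is the paper's: port-teleportation surgery turning the non-local protocol into an interaction-class protocol whose interaction unitary is (the dilation of) the port measurement on $2^{O(E)}$ qubits, giving $\mathscr{C}_{A_0A_1:x}\le 2^{2^{O(E)}}$ and hence $E=\Omega(\log\log\mathscr{C}_{A_0A_1:x})$.

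Two places where your bookkeeping slips relative to the paper's. First, in the paper's organization the $E$ EPR pairs are prepared in the \emph{left} pre-processing, the $N$ port pairs and the $(\mathbf{V}^{R,x})^{\otimes N}$ block in the \emph{right} pre-processing, and $\mathbf{U}^I$ consists solely of the port-teleportation measurement on the $E$-qubit system $V_1$ against the $N$ ports $V_0^1,\ldots,V_0^N$; that is what makes the qubit count $m=2^{O(E)}$ immediate. If you instead fold $(\mathbf{V}^{R,x})^{\otimes N}$ into $\mathbf{U}^I$ as your text suggests, you would additionally need to bound the output dimension of $\mathbf{V}^{R,x}$ to control $m$. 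Second, your remark about ``an additional $O(E)$ EPR pairs of linear overhead from an initial Bell-basis teleportation'' is a leftover from the Beigi--K\"onig upper-bound protocol and does not belong here: the surgered circuit uses no shared entanglement whatsoever, which is the whole point.
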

\begin{proof}
\,Our prescription for performing surgery on the non-local computation to turn it into a local one is shown in \cref{fig:generalsurgery}. 
To describe it more carefully, we fix some notation for the non-local computation. 
In particular the non-local computation has the form
\begin{align}\label{eq:onesidednonlocal}
    \mathbf{U}_{A_0A_1}^x&\ket{\psi}_{A_0A_1}\otimes \ket{\Phi}_{Z_0Z_1} = \nonumber \\ &[\mathbf{W}^{L,x}_{X_0Y_0\rightarrow A_0Z_0}\otimes \mathbf{W}^{R,x}_{X_1Y_1\rightarrow A_1Z_1}][\mathbf{V}^L_{A_0A_1V_0\rightarrow X_0X_1}\otimes \mathbf{V}^{R,x}_{V_1\rightarrow Y_0Y_1}]\left(\ket{\psi}_{A_0A_1}\otimes \ket{\Psi^+}_{V_0V_1} \right)\nonumber 
\end{align}
where $V_0$ and $V_1$ each consist of $E$ qubits. Notice that because the input on the right is classical, we can without loss of generality consider protocols that implement an isometry $\mathbf{V}^{R,x}$ on the right that doesn't take any inputs except the $V_1$ system. 

We perform the surgery as follows. 
On the left, the state $\ket{\Psi^+}_{V_0V_1}$ consisting of $E$ EPR pairs is prepared locally.
The $V_0$ system is input into $\mathbf{V}^L$, just as in the non-local protocol. 
$V_1$ is brought into the interaction unitary. 
On the right, $N$ copies of $\ket{\Psi^+}$ are prepared locally, call them $\ket{\Psi^+}_{V_0^iV_1^i}$ with $i$ running $1$ through $N$. 
All of the $V_0^i$ systems are sent to the interaction unitary, and all of the $V_1^i$ systems are kept.
Also on the right, $\mathbf{V}^{R,x}$ is applied to each $V_1^i$, producing systems $Y_0^iY_1^i$.
Each of the $Y_0^i$ systems are sent to $\mathcal{R}_0$, each of the $Y_1^i$ systems are sent to $\mathcal{R}_1$. 
In the interaction unitary, the port-teleportation measurement is performed with $V_1$ as the input system, and the $V_0^1...V_0^N$ as the resource system. 
The measurement outcome $i^*\in\{1,...,N\}$ is sent to both output locations.
At the output locations, all but $Y_0^{i^*}$ and $Y_1^{i^*}$ systems are traced out. 
Finally, at the output locations we perform $\mathbf{W}^L_{X_0Y_0^{i^*}\rightarrow A_0Z_0}$ and $\mathbf{W}^R_{X_1Y_1^{i^*}\rightarrow A_1Z_1}$, as in the non-local computation protocol. 

To see why the modified procedure performs the same unitary as the initial non-local protocol, consider that after the port-teleportation measurement the $V_0V_1^1...V_1^N$ system is approximately in the state $\ket{\Psi^+}_{V_0V_1^{i^*}}\otimes \ket{\rho}_{(V_1^{i^*})^{c}X}$ for $X$ some purifying system. 
This is then fed into a protocol in the non-local form, which is now
\begin{align}
    [\mathbf{W}^L_{X_0Y_0^{i^*}\rightarrow A_0Z_0}\otimes \mathbf{W}^R_{X_1Y_1^{i^*}\rightarrow A_1Z_1}](\tr_{(Y_0^{i^*})^c}&\otimes \tr_{(Y_1^{i^*})^c}) \circ  \nonumber \\
    [\mathbf{V}^L_{A_0A_1V_0\rightarrow X_0X_1}\otimes &\left[\otimes_{i=1}^N(\mathbf{V}^{R,x}_{V_1^i\rightarrow Y_0^iY_1^i})\right]]\left(\ket{\Psi^+}_{V_0V_1^{i^*}} \otimes \ket{\rho}_{(V_1^{i^*})^cX} \otimes \ket{\psi}_{A_0A_1} \right). \nonumber
\end{align}
We can perform the traces to simplify this to
\begin{align}
    [\mathbf{W}^L_{X_0Y_0^{i^*}\rightarrow A_0Z_0}\otimes \mathbf{W}^R_{X_1Y_1^{i^*}\rightarrow A_1Z_1}] \circ 
    [\mathbf{V}^L_{A_0A_1V_0\rightarrow X_0X_1}\otimes \mathbf{V}^{R,x}_{V_1^{i^*}\rightarrow Y_0^iY_1^i}]\left(\ket{\Psi^+}_{V_0V_1^{i^*}} \otimes \ket{\psi}_{A_0A_1} \right), \nonumber
\end{align}
which is just the original non-local circuit of \cref{eq:onesidednonlocal}. 

To quantify how well this local protocol reproduces the unitary implemented by the non-local protocol, we again make use of the bound on the diamond norm \cref{eq:PTdiamond}. 
This leads to the channel implemented by our modified protocol, call it $\tilde{\mathcal{U}}$, agreeing with the one implemented by the non-local protocol, $\mathcal{U}$, up to
\begin{align}
    \epsilon\defi ||\mathcal{U} - \tilde{\mathcal{U}}||_\diamond \leq \frac{2^{4E+4}}{\sqrt{N}}
\end{align}
Fixing $\epsilon$ then, we need to take
\begin{align}
    N = \frac{2^{8E+8}}{\epsilon^2}
\end{align}
Given a non-local protocol using $E$ EPR pairs then, we see that there exists a local protocol where the interaction unitary consists of port-teleporting $E$ qubits using $\frac{2^{8E+8}}{\epsilon^2}$ ports. 
To upper bound complexity in terms of entanglement, we just need to upper bound the complexity of a port-teleportation in terms of the input size and number of ports, both given here in terms of $E$.

To bound the complexity of port-teleportation, we can view the measurement as a quantum channel, and then purify this channel to a unitary process on a larger system.\footnote{An alternative approach would be to note that the port-teleportation measurement is a `pretty-good measurement' and use the algorithm given in \cite{gilyen2020quantum}. Unfortunately for our specific case their bound on complexity for pretty-good measurements does no better than the naive purification argument we give here.}
Labelling $V_0V_1^1...V_1^N$ by $V$, the measurement channel is
\begin{align}
    \mathcal{M}_{V \rightarrow V X}(\rho_V)= \sum_{i} M_i \rho_V M_i^\dagger \otimes \ketbra{i}{i}_X.
\end{align}
This has $N=2^{8E+8}/\epsilon^2$ Kraus operators, $d_X=E$, and $d_V=2^{E+N}$.
This can therefore be written as a unitary acting on a system of $\log(d_Xd_VN) = O(2^{8E}/\epsilon^2)$ qubits. The maximal complexity of this unitary is $2^{O(2^{8E}/\epsilon^2)}$, and the minimal complexity of the interaction unitary must be less than the complexity used in this particular protocol, so
\begin{align}
    \mathscr{C}_{A_0A_1:x} \leq 2^{O(2^{8E}/\epsilon^2)}.
\end{align}
Inverting this we obtain 
\begin{align}
    \Omega(\log \log \mathscr{C}_{A_0:A_1}) = E,
\end{align}
as needed. 
\end{proof}

This argument also leads to a lower bound on $E$ in terms of any measure of complexity or other property of port-teleportation, so long as we minimize over choices of interaction unitary. 
In particular, we also obtain the lower bound of \cref{eq:sizebasedbounds},
\begin{align}
    \Omega(\log n')=E
\end{align}
where $n'$ is the minimal number of qubits brought into the interaction unitary. 
This follows immediately from the proof of \cref{lemma:PTlowerbound}, now noting that the port-teleportation measurement brings $O(2^{8E})$ qubits into the interaction unitary, which must be larger than $n'$, the minimal number brought into the interaction unitary. 

It is also interesting to note that while minimizing the entanglement used in port-teleportation has been carefully considered \cite{ishizaka2009quantum,christandl2021asymptotic}, to our knowledge minimizing the complexity of the needed measurement procedure has not. 
Finding a less complex measurement or better algorithm for implementing the existing measurement would lead to a better lower bound in our context. 

%%%%%%%%%%%%%%%%%%%%%%%%%%%%%%%%%%%%%%%%%%%%%%%%
\section{Results for restricted classes of protocols}\label{sec:restrictedprotocols}
%%%%%%%%%%%%%%%%%%%%%%%%%%%%%%%%%%%%%%%%%%%%%%%%

\Cref{conjecture:weakcomplexity-entanglement} claims that, up to polynomial overheads, the complexity of a unitary controls the entanglement cost of implementing it non-locally. 
In the last section we gave upper and lower bounds on entanglement in terms of complexity, but these are far from being polynomially related --- in fact, the upper and lower bounds are separated by three exponentials from one another.
Ideally, we would improve the upper and lower bounds until they nearly match, but so far this hasn't been possible. 

To make progress, and further explore the entanglement-complexity conjecture, in this section we consider simplifying the problem by restricting the protocols available to perform the computation non-locally.
We give two examples. 
The first considers protocols which use only Clifford unitaries. 
This restricts the protocols to perform tasks consisting of implementing a Clifford unitary. 
We recall an existing upper bound for implementing Clifford unitaries, and prove a new, nearly matching lower bound. 
Under this restriction the strong complexity-entanglement conjecture holds. 
The second example considers protocols that are limited to Bell basis measurements plus classical computations. 
This type of protocol restricts the tasks implemented to be of a type known as $f$-routing tasks. 
We recall the results of \cite{buhrman2013garden}, where they prove nearly matching upper and lower bounds for such protocols. 
This shows the weak complexity-entanglement conjecture holds in this context. 

%%%%%%%%%%%%%%%%%%%%%%%%%%%%%%%%%%%%%%%%%%%%%%%%
\subsection{Strong complexity-entanglement conjecture holds for Clifford protocols}\label{sec:cliffordmatching}
%%%%%%%%%%%%%%%%%%%%%%%%%%%%%%%%%%%%%%%%%%%%%%%%

In this section we show that the strong complexity-entanglement conjecture is true under a restriction to Clifford protocols. 
In particular, we show the following.
\begin{theorem}\label{thm:cliffordent-complex}
The entanglement cost to implement a Clifford unitary $\mathbf{C}_{A_0A_1}$ non-locally using a protocol consisting only of Clifford operations, $E_c$, is related to the interaction-class circuit complexity, $\mathscr{C}_{A_0:A_1}$, of that Clifford according to $\mathscr{C}_{A_0:A_1}/4 \leq E_c \leq \mathscr{C}_{A_0:A_1}$.
\end{theorem}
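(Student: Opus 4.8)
The plan is to prove the two inequalities by specializing the general-purpose constructions of the previous two sections to the Clifford setting, where the implementation becomes exact and cheap because Pauli byproducts propagate through Clifford circuits.

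\emph{Upper bound} $E_c \le \mathscr{C}_{A_0:A_1}$. First I would observe that when $\mathbf{U}=\mathbf{C}_{A_0A_1}$ is Clifford its interaction-class circuit complexity is realized by a Clifford interaction unitary $\mathbf{U}^I$ together with local Clifford pre- and post-processing $\mathbf{V}^L,\mathbf{V}^R,\mathbf{W}^L,\mathbf{W}^R$: write $\mathbf{C}$ in a minimal Clifford circuit, absorb every gate supported on a single side into the $\mathbf{V}$'s and $\mathbf{W}$'s, and the remaining ``straddling'' part is again Clifford. Given such a $\mathbf{U}^I$ with $\mathscr{C}_{A_0:A_1}$ elementary gates, I would implement it non-locally gate by gate: gates acting within one side are performed locally at zero cost, and each of the at most $\mathscr{C}_{A_0:A_1}$ straddling two-qubit Clifford gates is performed with a single EPR pair using the standard non-local two-qubit-Clifford gadget, deferring the Pauli byproduct. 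Since every gate and every gadget is Clifford, the accumulated byproducts form a Pauli frame that is tracked classically through the whole circuit and corrected in the post-processing using the single communication round --- the mechanism already used in \cite{speelman2015instantaneous}. Composing with the local Cliffords $\mathbf{V},\mathbf{W}$ yields an all-Clifford non-local protocol for $\mathbf{C}_{A_0A_1}$ using at most $\mathscr{C}_{A_0:A_1}$ EPR pairs.

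\emph{Lower bound} $E_c \ge \mathscr{C}_{A_0:A_1}/4$. Here I would run the ``surgery'' of Lemma~\ref{lemma:PTlowerbound}, but with ordinary Bell-basis teleportation in place of port-teleportation, which is legitimate precisely because the protocol is Clifford. Suppose an all-Clifford protocol implements $\mathbf{C}_{A_0A_1}$ with $E_c$ shared EPR pairs $\ket{\Psi^+}_{V_0V_1}$. Instead of sharing them, have each side create local EPR pairs $\ket{\Psi^+}_{V_0W_0}$ and $\ket{\Psi^+}_{V_1W_1}$, run the protocol's local pre-processing on $(A_0,V_0)$ and $(A_1,V_1)$ exactly as before, and feed the leftover halves $W_0,W_1$ into an interaction unitary $\mathbf{U}^I$ that Bell-measures $W_0W_1$ (an entanglement swap). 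This produces the shared pair $V_0\leftrightarrow V_1$ up to a Pauli byproduct fixed by the two measurement bits; since the protocol's pre- and post-processing are Clifford, that byproduct commutes through to the outputs and is removed by $\mathbf{W}^L,\mathbf{W}^R$, provided both output locations learn both bits. So $\mathbf{U}^I$ needs, per EPR pair, a CNOT and a Hadamard for the Bell rotation plus two more CNOTs fanning the two byproduct bits to the opposite side: at most four Clifford gates. Hence $\mathbf{U}^I$ lies in $\mathscr{I}_{A_0:A_1}(\mathbf{C})$, reproduces $\mathbf{C}_{A_0A_1}$ exactly, and has circuit complexity at most $4E_c$, giving $\mathscr{C}_{A_0:A_1}\le 4E_c$.

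\emph{Main obstacle.} The delicate step is the lower-bound surgery: I would need to verify that the entanglement-swap byproducts propagate correctly through both sides' Clifford pre-/post-processing and the communication round so that the local corrections recover $\mathbf{C}_{A_0A_1}$ exactly, and that four Clifford gates per pair (rather than more) suffice for the Bell rotation together with the fan-out of both byproduct bits. A secondary point to settle is that for a Clifford target the minimal interaction-class circuit complexity is attained by a Clifford decomposition, which is what licenses the all-Clifford protocol in the upper bound.
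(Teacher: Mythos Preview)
Your lower-bound argument is essentially the paper's: both replace the shared EPR pairs by locally prepared pairs and stitch them with a Bell-basis measurement (entanglement swap), using that the resulting Pauli byproducts commute through the Clifford $\mathbf{V}^L,\mathbf{V}^R$ and can be undone before $\mathbf{W}^L,\mathbf{W}^R$. The paper counts one Hadamard, one CNOT, and two single-qubit computational-basis measurements per pair; your coherent version with two fan-out CNOTs in place of the measurements is arguably more faithful to $\mathbf{U}^I$ being a unitary, but both land on $4E_c$.

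Your upper bound takes a genuinely different route. The paper does not implement $\mathbf{U}^I$ gate by gate; it uses Protocol~\ref{protocol:cliffords}, which teleports the smaller side $A_0'$ across with $\min\{n_0',n_1'\}$ EPR pairs, applies the whole Clifford interaction unitary on one side, and pushes the teleportation byproduct through. This gives the sharper bound $E_c\le\min\{n_0',n_1'\}$, which together with the surgery lower bound $(n_0'+n_1')/2\le E_c$ forces $n_0'=n_1'$ and yields Theorem~\ref{thm:cliffordnprime} as a byproduct; $E_c\le\mathscr{C}_{A_0:A_1}$ then follows from $n'\le\mathscr{C}_{A_0:A_1}$. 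Your Speelman-style gate-by-gate argument is also correct and ties the cost more directly to the gate count, but it does not recover this tighter dimensional characterisation. The secondary concern you flag---that the minimising interaction unitary can be taken Clifford---is needed in both routes and is left implicit in the paper as well.
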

As well, we also show the following related fact. 
\begin{theorem}\label{thm:cliffordnprime}
Given a Clifford unitary $\mathbf{C}_{A_0A_1}$, define
\begin{align}
    n'\defi \min_{C^E\in \mathcal{E}(C)} \log dim A_0'A_1'
\end{align}
and define the entanglement cost to implement this unitary non-locally using only Clifford circuits to be $E_{c}$. Then $E_c=n'/2$. 
\end{theorem}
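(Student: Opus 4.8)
The plan is to reduce the whole statement to a single invariant of the bipartite Clifford $\mathbf{C}_{A_0A_1}$ living in the symplectic description of the Clifford group, and then show that this invariant computes both $n'$ and $E_c$. Represent $\mathbf{C}$ by its symplectic matrix $M$ over $\mathbb{F}_2$, written in block form with respect to the splitting of the phase-space coordinates into an $A_0$ part and an $A_1$ part. The operations that are free for the interaction class are: local Cliffords on $A_0$ and on $A_1$ (these go into $\mathbf{V}^L,\mathbf{V}^R,\mathbf{W}^L,\mathbf{W}^R$), and relabelling a qubit from one side to the other — a full swap between a left qubit and a right qubit has trivial interaction class, and is implemented in \cref{fig:non-local} by simply routing the qubit along the $V\!\to\!W$ cross-wire, so it is also free for $E_c$. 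First I would establish a canonical form: under these free moves every bipartite Clifford reduces to $\bigotimes_{j=1}^{k}\mathrm{CZ}_{a_j b_j}$ on $k$ disjoint qubit pairs, one qubit of each pair on either side, up to local Cliffords and a relabelling; here $k$ is the invariant, extracted from the rank of the off-diagonal coupling of $M$ after the normal-form reduction. The one nontrivial algebraic ingredient is that two $\mathrm{CZ}$'s sharing a qubit are equivalent to a single $\mathrm{CZ}$ conjugated by a local CNOT, which is what lets one diagonalise the coupling down to a permutation.

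Granting the canonical form, $n' = 2k$: the upper bound is immediate (the minimal interaction unitary is the $k$ disjoint $\mathrm{CZ}$'s, on $k$ qubits of $A_0'$ and $k$ of $A_1'$, with everything else absorbed into the $\mathbf{V}$'s and $\mathbf{W}$'s), and the lower bound holds because $k$ is unchanged by exactly the moves that generate the interaction class, so nothing acting on fewer than $k$ qubits per side can be equivalent to $\mathbf{C}$ — equivalently, such a unitary has strictly smaller coupling rank.

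For the entanglement cost, the upper bound $E_c \le k$ follows by implementing the canonical form: each $\mathrm{CZ}_{a_j b_j}$ is performed non-locally with one EPR pair by the standard stabiliser protocol, the $k$ gates live on disjoint wires so the EPR pairs do not interact, and the surrounding free moves cost nothing; this realises $\mathbf{C}$ in the form of \cref{fig:non-local} with $E=k$. For the lower bound $E_c \ge k$ I would argue that the number of EPR pairs in the resource of a Clifford protocol is itself invariant under pre- and post-composing with the free moves (each such move is absorbed into local processing or into message-routing, neither of which touches the resource state), so it is enough to bound the cost of the canonical form; and $\bigotimes_{j=1}^{k}\mathrm{CZ}_{a_j b_j}$ needs at least $k$ EPR pairs by entanglement flow (run it on the product input $\ket{+}^{\otimes 2k}$, where it creates $k$ ebits across the $A_0\,|\,A_1$ cut, and LOCC assisted by $E$ ebits cannot produce more). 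Combining the two directions gives $E_c = k = n'/2$.

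The step I expect to be the main obstacle is the canonical-form reduction together with the bookkeeping that the $k$ it produces is literally the same number in all three roles: the symplectic normal-form computation has to be carried out so that it commutes with the ``relabelling is free'' reduction, and one then has to check that the coupling rank it isolates is simultaneously the minimal number of interaction qubits per side and the minimal number of EPR pairs. A secondary subtlety is the $E_c \ge k$ direction, where one should confirm that ``entanglement cost'' is being measured by EPR-pair count — harmless under the Clifford restriction, since there the resource can always be taken to be a fixed number of EPR pairs — and that the flow argument survives passing to the possibly mixed reduced state on $A_0A_1$.
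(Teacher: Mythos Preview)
Your approach is genuinely different from the paper's, and interesting, but the lower bound $E_c \geq k$ has a real gap.

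You justify it by saying ``LOCC assisted by $E$ ebits cannot produce more'' than $E$ ebits of entanglement. That is a true statement about LOCC, but the non-local computation of \cref{fig:non-local} is \emph{not} LOCC: the crossed wires carry quantum systems. During the exchange of $X_1$ and $Y_0$ the cut $(\text{left}\,|\,\text{right})$ jumps from $(X_0X_1\,|\,Y_0Y_1)$ to $(X_0Y_0\,|\,X_1Y_1)$, and the entanglement across those two cuts are simply unrelated; in particular $\mathbf{V}^L$ can prepare as many EPR pairs between $X_0$ and $X_1$ as it likes, all of which land across the post-exchange cut at zero resource cost. So ``entanglement created across $A_0|A_1$'' does not lower bound $E_c$ in this model, and the argument as written does not go through. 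You also lean on a symplectic canonical form (every bipartite Clifford $\sim \bigotimes_j \mathrm{CZ}$ under locals plus relabelling) that you flag as the main obstacle; even granting it, the lower bound still needs a different mechanism.

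The paper avoids both issues with a protocol-level ``surgery'' that does not mention entanglement monotones or canonical forms. For the upper bound, it teleports $A_0'$ to the right with $n_0'$ EPR pairs, applies $\mathbf{C}^E$ there, and uses the Clifford property to push the Pauli byproduct through to local corrections; this gives $E_c \le \min\{n_0',n_1'\}$. For the lower bound, it takes an arbitrary Clifford non-local protocol using $E_c$ EPR pairs, replaces the shared resource by two locally prepared copies, and sews them together with a Bell measurement in the middle; because the first-round operations are Clifford, the resulting Pauli byproduct is again correctable locally. This produces a circuit in the interaction-class form whose interaction unitary acts on exactly $2E_c$ qudits, so $n_0'+n_1' \le 2E_c$. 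Squeezing $\tfrac{n_0'+n_1'}{2}\le E_c \le \min\{n_0',n_1'\}$ forces $n_0'=n_1'$ and $E_c = n'/2$. If you want to rescue your route, the natural fix is to replace the LOCC step by exactly this surgery: it is the Clifford-specific fact that makes the lower bound work.
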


To be as general as possible, we discuss Clifford unitaries defined with respect to the generalized Pauli operations acting on $n$ qudits, $\mathbb{C}_d^{\otimes n}$, with $d$ prime. 
Recall that the $d$ dimensional Pauli operators $X$ and $Z$ are defined by
\begin{align}
    X\ket{j} &= \ket{j+1  \, \Mod\, d} \\
    Z\ket{j} &= \omega^j \ket{j}
\end{align}
where $\omega=e^{2\pi i/d}$. These operators generate the $d$ dimensional Pauli group. One can straightforwardly check that the generalized CNOT, Hadamard, and phase gates are in the normalizer, 
\begin{align}
    CNOT\ket{i}\ket{j} &= \ket{i}\ket{i+j \, mod \, k} \\
    H \ket{i} &= \sum_{0 \leq m\leq d-1} \omega^{m i} \ket{m} \\
    S \ket{i} &= \omega^{i(i+1)/2} \ket{i}
\end{align}
and in fact, these generate the full Clifford group over qudits \cite{clark2006valence}.

To prove theorems \ref{thm:cliffordent-complex} and \ref{thm:cliffordnprime}, we first need an upper bound on entanglement cost for implementing a Clifford non-locally. 
To perform a Clifford $\mathbf{C}_{A_0A_1}$, where each of $A_0$ and $A_1$ consist of $n$ qudits, we first reduce implementing $\mathbf{C}_{A_0A_1}$ to implementing an interaction unitary of $\mathbf{C}_{A_0A_1}$, call it $\mathbf{C}^E_{A_0'A_1'}$. Then to implement $\mathbf{C}^E_{A_0'A_1'}$ we use the following protocol \cite{chakraborty2015practical}. The input state is labelled $\ket{\psi}_{A_0'A_1'}$, where we take a pure state for notational convenience but the protocol works generally. 

\begin{protocol}\label{protocol:cliffords} \textbf{Clifford unitaries} \\

\noindent \textbf{Preparation phase:}
\begin{itemize}
    \item Prepare $n_0'=\log_d \dim A_0'$ maximally entangled states over qudits, $\ket{\Psi}_{LR}=\otimes_{i=1}^{n_0'} \ket{\Psi}_{L_iR_i}$.
\end{itemize}
\noindent \textbf{Execution phase:}
\begin{itemize}
    \item Alice$_0$ measures $A_0'L$ in the generalized Bell basis, obtaining a measurement outcome $x$. She maintains a copy of $x$ and sends a copy to Alice$_1$.
    \item Alice$_1$ relabels the $R$ system as $A_0'$, and performs the Clifford $\mathbf{C}^E_{A_0'A_1'}$. She sends the $A_0'$ system to Alice$_0$ and keeps the $A_1'$ system.
    \item After the communication round, Alice$_0$ and Alice$_1$ both locally compute $(P_y^0)_{A_0'}\otimes (P_y^1)_{A_1'} =\mathbf{C}^E_{A_0'A_1'} [(P_x)_{A_0'}\otimes \mathcal{I}_{A_1'}] (\mathbf{C}^E_{A_0'A_1'})^\dagger$
    \item Alice$_0$ applies $(P_y^0)^\dagger$ to the $A_0'$ system, then returns $A_0'$ to Bob. 
    \item Alice$_1$ applies $(P_y^1)^\dagger$ to the $A_1'$ system, then returns $A_1'$ to Bob.
\end{itemize}
\end{protocol}
To see this works correctly, notice that after Alice$_0$ measures the $A_0'L$ system in the Bell basis, and Alice$_1$ relabels $R$ as $A_0'$, Alice$_1$ holds the state $[(P_x)_{A_0'}\otimes \mathcal{I}_{A_1}] \ket{\psi}_{A_0'A_1'}$. 
She then applies $\mathbf{C}^E_{A_0'A_1'}$, giving $\mathbf{C}^E_{A_0'A_1'} [(P_x^0)_{A_0'} \otimes \mathcal{I}_{A_1'}] \ket{\psi}_{A_0'A_1'} = [(P_y)_{A_0'}\otimes (P_y)_{A_1'}] \mathbf{C}^E_{A_0'A_1'} \ket{\psi}_{A_0'A_1'}$, where we've used that $\mathbf{C}^E_{A_0'A_1'}$ is Clifford. 
After the communication round, both Alice$_0$ and Alice$_1$ can determine the Pauli string $y$, so they correct the local part of their strings and obtain $\mathbf{C}^E_{A_0'A_1'}\ket{\psi}_{A_0'A_1'}$. 
Notice that if $A_1'$ consists of fewer qudits than $A_0'$ we could have reversed their roles and teleported $A_1'$ instead.

We collect the resulting upper bound on entanglement as the following remark. 
\begin{remark}\label{remark:cliffordupper}
For a Clifford unitary $\mathbf{C}_{A_0A_1}$, protocol \ref{protocol:cliffords} gives
\begin{align}
    E_c(\mathbf{C}_{A_0A_1}) \leq \min\{n_0',n_1'\},
\end{align}
where $n'_i\defi \min_{\mathbf{C}^E\in \mathcal{E}(\mathbf{C})} \log dim A_i'$.
\end{remark}

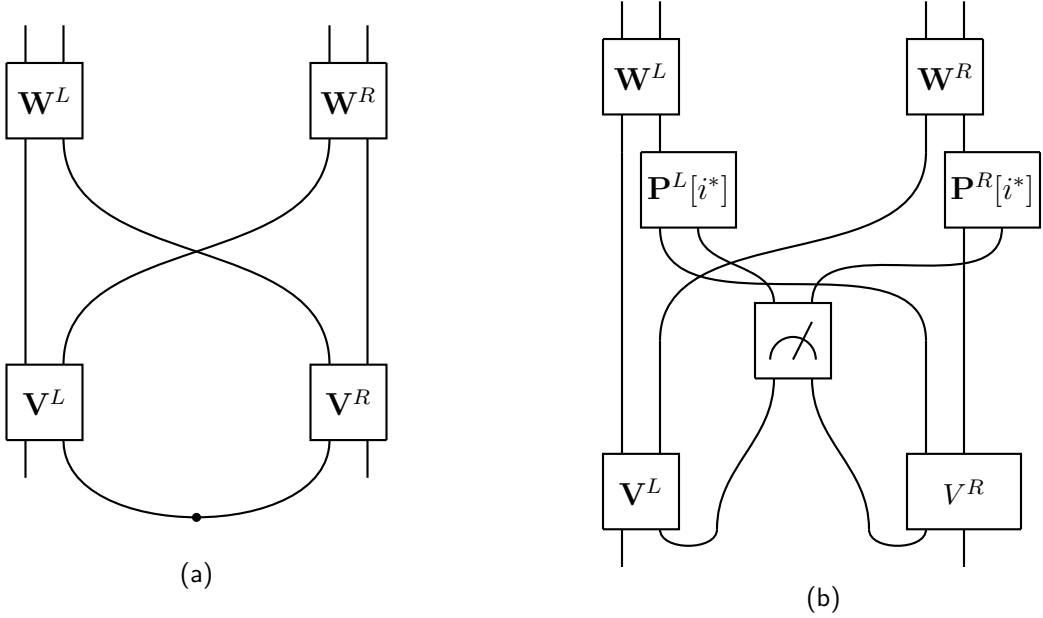
\begin{figure*}
    \centering
    \begin{subfigure}{0.45\textwidth}
    \centering
    \begin{tikzpicture}[scale=0.5]
    
    %lower left box
    \draw[thick] (-5,-5) -- (-5,-3) -- (-3,-3) -- (-3,-5) -- (-5,-5);
    \node at (-4,-4) {$\mathbf{V}^L$};
    
    %lower right box
    \draw[thick] (5,-5) -- (5,-3) -- (3,-3) -- (3,-5) -- (5,-5);
    \node at (4,-4) {$\mathbf{V}^R$};
    
    %top right box
    \draw[thick] (5,5) -- (5,3) -- (3,3) -- (3,5) -- (5,5);
    \node at (4,4) {$\mathbf{W}^R$};
    
    %top left box
    \draw[thick] (-5,5) -- (-5,3) -- (-3,3) -- (-3,5) -- (-5,5);
    \node at (-4,4) {$\mathbf{W}^L$};
    
    %left vertical wire
    \draw[thick] (-4.5,-3) -- (-4.5,3);
    
    %right vertical wire
    \draw[thick] (4.5,-3) -- (4.5,3);
    
    %left to right wire
    \draw[thick] (-3.5,-3) to [out=90,in=-90] (3.5,3);
    
    %right to left wire
    \draw[thick] (3.5,-3) to [out=90,in=-90] (-3.5,3);
    
    %entanglement
    \draw[thick] (-3.5,-5) to [out=-90,in=-90] (3.5,-5);
    \draw[black] plot [mark=*, mark size=3] coordinates{(0,-7.05)};
    
    %input wires
    \draw[thick] (-4.5,-6) -- (-4.5,-5);
    \draw[thick] (4.5,-6) -- (4.5,-5);
    
    %output wires
    \draw[thick] (4.5,5) -- (4.5,6);
    \draw[thick] (-4.5,5) -- (-4.5,6);
    
    \draw[thick] (3.5,5) -- (3.5,6);
    \draw[thick] (-3.5,5) -- (-3.5,6);
    
    \end{tikzpicture}
    \caption{}
    \label{fig:non-localClifford}
    \end{subfigure}
    \hfill
    \begin{subfigure}{0.45\textwidth}
    \centering
    \begin{tikzpicture}[scale=0.5]
    
    %lower left box
    \draw[thick] (-5,-5) -- (-5,-3) -- (-3,-3) -- (-3,-5) -- (-5,-5);
    \node at (-4,-4) {$\mathbf{V}^L$};
    
    %lower right box
    \draw[thick] (6,-5) -- (6,-3) -- (3,-3) -- (3,-5) -- (6,-5);
    \node at (4.5,-4) {\small{$V^{R}$}};
    
    %top right box
    \draw[thick] (5,8) -- (5,6) -- (3,6) -- (3,8) -- (5,8);
    \node at (4,7) {$\mathbf{W}^R$};
    
    %trace box
    \draw[thick] (6.5,5) -- (6.5,3) -- (4,3) -- (4,5) -- (6.5,5);
    \node at (5.25,4) {$\mathbf{P}^R[i^*]$};
    
    %top left box
    \draw[thick] (-5,8) -- (-5,6) -- (-3,6) -- (-3,8) -- (-5,8);
    \node at (-4,7) {$\mathbf{W}^L$};
    
    %trace box
    \draw[thick] (-4,5) -- (-4,3) -- (-1.5,3) -- (-1.5,5) -- (-4,5);
    \node at (-2.75,4) {$\mathbf{P}^L[i^*]$};
    
    %left vertical wire
    \draw[thick] (-4.5,-3) -- (-4.5,5);
    
    %right vertical wire
    \draw[thick] (4.5,-3) -- (4.5,3);
    
    %left to right wire
    \draw[thick] (-3.5,0) to [out=90,in=-90] (3.5,5);
    \draw[thick] (-3.5,0) -- (-3.5,-3);
    
    %right to left wire
    \draw[thick] (3.5,0) to [out=90,in=-90] (-3.5,3);
    \draw[thick] (3.5,0) -- (3.5,-3);
    
    %input wires
    \draw[thick] (-4.5,-6) -- (-4.5,-5);
    \draw[thick] (4.5,-6) -- (4.5,-5);
    
    %output wires
    \draw[thick] (4.5,5) -- (4.5,6);
    \draw[thick] (-4.5,5) -- (-4.5,6);
    
    \draw[thick] (3.5,5) -- (3.5,6);
    \draw[thick] (-3.5,5) -- (-3.5,6);
    
    \draw[thick] (4.5,8) -- (4.5,9);
    \draw[thick] (-4.5,8) -- (-4.5,9);
    
    \draw[thick] (3.5,8) -- (3.5,9);
    \draw[thick] (-3.5,8) -- (-3.5,9);
    
    %interaction unitary unitary
    \draw[thick] (-1,-1) -- (-1,1) -- (1,1) -- (1,-1) -- (-1,-1);
    
    %wire into interaction unitary
    \draw[thick] (-3.5,-5) to [out=-90,in=-90] (-2,-5);
    \draw[thick] (-2,-5) to [out=90,in=-90] (-0.5,-1);
    
    \draw[thick] (3.5,-5) to [out=-90,in=-90] (2,-5);
    \draw[thick] (2,-5) to [out=90,in=-90] (0.5,-1);
    
    %wires out of interaction unitary
    \draw[thick] (0.5,1) to [out=90,in=-90] (5.5,3);
    \draw[thick] (-0.5,1) to [out=90,in=-90] (-2.5,3);
    
    %measurement symbol
    \draw[thick] (-0.6,-0.5) arc (180:0:0.6);
    \draw[thick] (0,-0.5) -- (0.5,0.5);
    
    \end{tikzpicture}
    \caption{}
    \label{fig:entangledpartClifford}
    \end{subfigure}
    \caption{a) The starting non-local quantum computation, using $E$ EPR pairs. Here we assume $\mathbf{V}^L, \mathbf{V}^R, \mathbf{W}^L,\mathbf{W}^R$ are all Clifford unitaries. b) A local computation which performs the same unitary as at left. The measurement in the center is a Bell basis measurement. The measurement outcome $i^*$ is sent to both output locations and used to undo the effect of the Pauli operators introduced by the Bell measurement.}
    \label{fig:Cliffordsurgery}
\end{figure*}

Next, we will give a lower bound on entanglement cost in terms of $n_0'$ and $n_1'$. This is done in the following lemma. 
\begin{lemma}\label{lemma:nprimelower}
When using a protocol consisting only of Clifford operations, the entanglement cost $E_c$ is lower bounded according to both $(n_0'+n_1')/2 \leq E_c$ and $\mathscr{C}_{A_0:A_1}/4\leq E_c$.
\end{lemma}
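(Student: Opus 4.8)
The plan is to run the ``cut and stitch'' surgery of \cref{lemma:PTlowerbound}, but to exploit the Clifford hypothesis so as to replace port-teleportation by ordinary generalized-Bell-basis teleportation. The payoff of this substitution is twofold: the teleportation byproduct is now a Pauli, which can be commuted through the remaining Clifford isometries $\mathbf{V}^R,\mathbf{W}^L,\mathbf{W}^R$ and then undone exactly by a Pauli correction at the output locations; and the interaction unitary we produce is itself a short Clifford circuit rather than a dilation of a complicated pretty-good measurement. So, starting from a Clifford non-local protocol for $\mathbf{C}_{A_0A_1}$ using $E_c$ EPR pairs --- as in \cref{fig:non-localClifford}, with $\ket{\Psi^+}_{V_0V_1}$ carrying $E_c$ qudits on each side and $\mathbf{V}^L,\mathbf{V}^R,\mathbf{W}^L,\mathbf{W}^R$ all Clifford --- I would build an interaction-class decomposition of the form in \cref{fig:entangledpartClifford} whose interaction unitary is (a Stinespring dilation of) a generalized-Bell measurement on $E_c$ qudit pairs together with the resulting Pauli corrections.

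The surgery itself is: locally prepare $\ket{\Psi^+}_{V_0V_1}$ on the left and feed $V_0$ into $\mathbf{V}^L$ acting on $(A_0,V_0)$; locally prepare a fresh pair $\ket{\Psi^+}_{W_0W_1}$ on the right and feed $W_1$ into $\mathbf{V}^R$ acting on $(A_1,W_1)$ (legitimate because $\mathbf{V}^R$ only ever touches $A_1$ together with one half of a maximally entangled pair); then route $V_1$ (from the left) and $W_0$ (from the right) into the interaction unitary $\mathbf{U}^I$, which Bell-measures the $E_c$ pairs $(V_1,W_0)$ with outcome $i^*$. This entanglement-swaps the correlation, so the copies of $V_0$ inside $\mathbf{V}^L$ and $W_1$ inside $\mathbf{V}^R$ are now maximally entangled up to a Pauli $\mathbf{P}(i^*)$. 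Commuting $\mathbf{P}(i^*)$ through the Cliffords it becomes a Pauli on the outputs of $\mathbf{W}^L$ and $\mathbf{W}^R$, removed by inserting Pauli operators $\mathbf{P}^L[i^*]$ and $\mathbf{P}^R[i^*]$ just before those isometries; $\mathbf{U}^I$ delivers $i^*$ to both output locations so these are locally computable. The remaining wires --- the message from $\mathbf{V}^L$ to $\mathbf{W}^R$ and from $\mathbf{V}^R$ to $\mathbf{W}^L$ --- are exactly the side-communication already present in the interaction-class picture, so the composite reproduces $\mathbf{C}_{A_0A_1}$. Finally one sizes $\mathbf{U}^I$: a generalized-Bell measurement on $E_c$ qudit pairs is $O(E_c)$ Clifford gates on the $2E_c$ qudits $V_1W_0$, which then record $i^*$ in the computational basis; dilating it and forwarding $i^*$ to the two sides, $\mathbf{U}^I$ acts on $2E_c$ qudits using at most $4E_c$ elementary gates, so $n_0'+n_1'\le 2E_c$ and $\mathscr{C}_{A_0:A_1}\le 4E_c$, which are the two asserted inequalities.

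The step I expect to be the real obstacle is that final accounting. The outcome $i^*$ carries $2E_c$ qudits of data and is needed at \emph{both} $\mathbf{W}^L$ and $\mathbf{W}^R$ for the Pauli corrections, whereas the budget allows $\mathbf{U}^I$ only $2E_c$ qudits in total, so a naive broadcast of $i^*$ overshoots. To stay within budget one must first bring the protocol into an ``efficient'' normal form in which the $\mathbf{V}^L\!\to\!\mathbf{W}^R$ and $\mathbf{V}^R\!\to\!\mathbf{W}^L$ messages carry at most $E_c$ qudits each (plausible for Clifford protocols, since beyond $E_c$ qudits extra quantum communication cannot raise the implementable entangling rank), and then choose how to split $\mathbf{P}(i^*)$ across the $V_0$ and $W_1$ legs and how to route the $2E_c$ outcome qudits so that each of $\mathbf{W}^L,\mathbf{W}^R$ receives exactly the linear function of $i^*$ that it needs, with no qudit duplicated --- the symplectic/Clifford structure is what should make such a routing possible. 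Granting this, \cref{lemma:nprimelower} follows; combining it with the upper bound of \cref{remark:cliffordupper} (obtained from \cref{protocol:cliffords}) then forces $E_c$ to equal $(n_0'+n_1')/2$, yielding \cref{thm:cliffordent-complex,thm:cliffordnprime}.
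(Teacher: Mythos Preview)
Your surgery is exactly the paper's: replace the shared resource by two locally prepared EPR bundles, Bell-measure the forwarded halves in the middle, and undo the swapped-in Pauli by commuting it through the Clifford isometries $\mathbf{V}^L,\mathbf{V}^R$ and correcting before $\mathbf{W}^L,\mathbf{W}^R$. The paper's proof is precisely \cref{fig:entangledpartClifford} together with the two sentences ``the local computation acts on $2E_c$ qudits'' (giving $n_0'+n_1'\le 2E_c$) and ``each Bell measurement is one Hadamard, one CNOT and two single-qudit measurements, so $\mathscr{C}_{A_0:A_1}\le 4E_c$.'' So there is no difference of method.

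The ``real obstacle'' you flag --- that $i^*$ carries $2E_c$ dits and is needed at \emph{both} output locations, whereas a unitary $\mathbf{U}^I$ on $A_0'A_1'$ with $|A_0'|+|A_1'|=2E_c$ cannot broadcast it --- is not something you are missing relative to the paper: the paper does not address it either. It simply asserts the $2E_c$-qudit count and moves on. Your proposed resolution (normalise the cross-wire sizes and use the symplectic structure to split $\mathbf{P}(i^*)$ so each side receives exactly the linear function of $i^*$ it needs) is a reasonable direction, but neither you nor the paper actually carries it out. In short, your write-up is at least as careful as the paper's own proof, and the concern you raise would apply equally to it; if you want to go beyond the paper you would need to make the symplectic routing argument precise, or else accept the weaker constants $(n_0'+n_1')/4\le E_c$ and $\mathscr{C}_{A_0:A_1}/8\le E_c$ that follow unconditionally from padding $\mathbf{U}^I$ with $2E_c$ ancillas to copy $i^*$.
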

\begin{proof}\,
Suppose we have a non-local computation protocol which implements $\mathbf{C}_{A_0A_1}$ using $E_c$ EPR pairs, and only uses Clifford unitaries. We will show that this protocol can be manipulated to produce a local protocol with an interaction unitary that acts on at most $2E_c$ qudits. 

To do this, first consider the general form of a non-local computation protocol, shown in \cref{fig:non-localClifford}. By our assumption, $\mathbf{V}^L, \mathbf{V}^R$ are Clifford unitaries. We will manipulate this circuit diagram to produce a new protocol that replaces the $E_c$ EPR pairs with a local computation. 

The local protocol is shown in \cref{fig:entangledpartClifford}. The basic idea is to replace the shared EPR pair with EPR pairs prepared locally on each side. The local computation then consists of measuring corresponding pairs in the Bell basis. This sews the two EPR pairs together, but inserts a random Pauli. Because $\mathbf{V}^L$ and $\mathbf{V}^R$ are Clifford, this Pauli can be corrected for before applying $\mathbf{W}^L$ and $\mathbf{W}^R$. 

The local computation acts on $2E_c$ qudits, which upper bounds the minimal number of qudits which must be used in the local computation. But this is just $n_0'+n_1'$, so $n_0'+n_1'\leq 2E_c$, giving the first lower bound.

For the second lower bound, notice that the local computation consists of measuring $E_c$ EPR pairs in the Bell basis. 
Each of these measurements can be done using one Hadamard gate, one CNOT, and two single qubit measurements in the computational basis. 
The total complexity of the measurement then is $4E_c$, so $\mathscr{C}\leq 4E_c$, giving the second lower bound.
\end{proof}

We are now ready to combine the upper and lower bounds to prove theorems \ref{thm:cliffordnprime} and \ref{thm:cliffordent-complex}. 

\vspace{0.1cm}
\begin{proof}\,\textbf{(Of \cref{thm:cliffordnprime})}
Combining \cref{lemma:nprimelower} and remark \ref{remark:cliffordupper}, we have that
\begin{align}
    \frac{n_0'+n_1'}{2}\leq E_c \leq \min\{n_0',n_1'\}
\end{align}
To avoid a contradiction, these bounds require that $n_0'=n_1'$, so we learn the interesting fact that the optimal protocol always has an interaction unitary that acts on subsystems from the left and right which are of equal size. Taking $n_0'=n_1'$ we obtain $E_c=n'$ where $n'=n_0'+n_1'$. 
\end{proof}
\vspace{0.1cm}

Next we prove \cref{thm:cliffordent-complex}. 

\vspace{0.1cm}
\begin{proof}\, \textbf{(Of \cref{thm:cliffordent-complex})}
Any qubit which does not have at least one gate acting on it can be taken out of the interaction unitary, so we also have $n'\leq \mathscr{C}_{A_0A_1}$, so from \cref{thm:cliffordnprime} we have that $E_c\leq \mathscr{C}_{A_0A_1}$.
Combining this with the complexity lower bound of \cref{lemma:nprimelower} proves the theorem.
\end{proof}
\vspace{0.1cm}

It would be interesting to understand if whenever $\mathbf{C}$ is Clifford the entanglement cost is realized by a protocol that only uses Clifford operations. 
If so, we would have $n'=E_c(\mathbf{C}) = E_{A_0:A_1}(\mathbf{C})$, so this would provide an example of an explicit bound of the form of \cref{eq:logdlowerbound} discussed in \cref{sec:linearbounds} where $\alpha=1$.

We also note that our lower bound technique works for a larger class of tasks. 
In particular, suppose that in addition to the quantum systems $A_0$ and $A_1$ input on the left and right, there are additional classical strings $x_0$ and $x_1$ on the left and right, respectively. 
The task is to implement a Clifford $\mathbf{C}_{A_0A_1}^{(x_0,x_1)}$, where the choice of Clifford is a function of the classical inputs. 
Then the same surgery technique in \cref{lemma:nprimelower} provides a lower bound $\mathscr{C}_{A_0:A_1}/4 \leq E_c$. 
We have not found a good upper bound in this case, however, and we don't believe this lower bound is tight. 
As well, the same lower bound technique applies when only $\mathbf{V}^L, \mathbf{V}^R$ are Clifford but $\mathbf{W}^L,\mathbf{W}^R$ are not. 

%%%%%%%%%%%%%%%%%%%%%%%%%%%%%%%%%%%%%%%%%%%%%%%%
\subsection{Weak complexity-entanglement conjecture holds for garden-hose protocols}\label{sec:GH}
%%%%%%%%%%%%%%%%%%%%%%%%%%%%%%%%%%%%%%%%%%%%%%%%

In this section we discuss another class of restricted protocols. 
In this setting, earlier work \cite{buhrman2013garden} has found upper and lower bounds on entanglement as a function of the complexity which are related to each other by a polynomial, but the entanglement itself is exponential in the complexity. 
Thus in this setting only the weak complexity-entanglement conjecture is true. 

We consider tasks that have classical and quantum inputs, and restrict the protocol to consist of classical computations and Bell basis measurements, where the choice of systems to measure can depend on the classical computations. 
Because of this restriction, the tasks we can complete become limited to what are called $f$-routing tasks, defined as follows. 
At input location $\mathcal{C}_0$, Alice$_0$ receives a quantum system, call it $Q$, along with a classical string $x\in \{0,1\}^{\times n}$.
At input location $\mathcal{C}_1$, Alice$_1$ receives a classical string $y\in \{0,1\}^{\times n}$.
To complete the task, system $Q$ should be returned at $r_{f(x,y)}$, where $f_n(x,y)\in \{0,1\}$. 

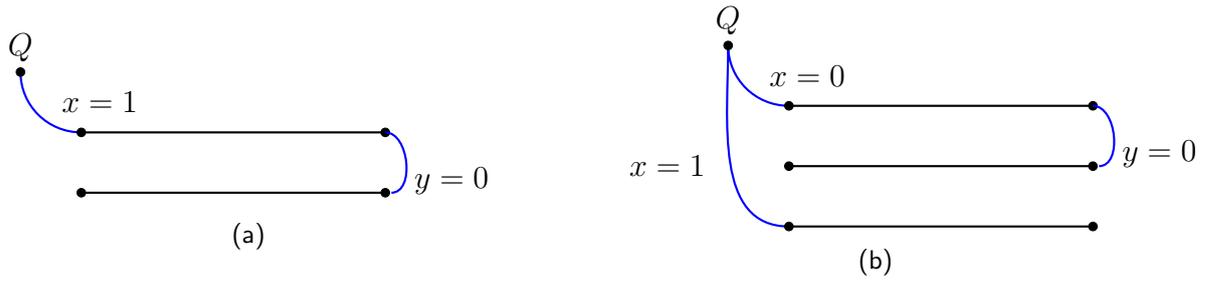
\begin{figure*}
    \centering
    \begin{subfigure}{0.45\textwidth}
    \centering
    \begin{tikzpicture}[scale=0.8]
    
    \draw[blue, thick] (-1,1)  to [out=-90,in=180] (0,0);
    %\draw[blue, thick,->] (-1,1)  to [out=-90,in=0] (-2,0);
    
    \node[above] at (-1,1) {$Q$};
    \draw[black] plot [mark=*, mark size=2] coordinates{(-1,1)};
    
    \draw[thick] (0,0) -- (5,0);
    \draw[black] plot [mark=*, mark size=2] coordinates{(0,0)};
    \draw[black] plot [mark=*, mark size=2] coordinates{(5,0)};
    
    \draw[thick] (0,-1) -- (5,-1);
    \draw[black] plot [mark=*, mark size=2] coordinates{(0,-1)};
    \draw[black] plot [mark=*, mark size=2] coordinates{(5,-1)};

    \node[right] at (-0.5,0.5) {$x=1$}; 
    
    \draw[blue, thick] (5,0)  to [out=0,in=0] (5.1,-1);
    \node[right] at (5.3,-0.8) {$y=0$};
    %\draw[blue, thick,->] (5,0)  to [out=0,in=180] (6,0);
    %\node[above] at (6,0) {$y=1$};
    
    \end{tikzpicture}
    \caption{}
    \label{fig:disconnectedsurfacesintro}
    \end{subfigure}
    \hfill
    \begin{subfigure}{0.45\textwidth}
    \centering
    \begin{tikzpicture}[scale=0.8]
    
    \draw[blue, thick] (-1,1)  to [out=-90,in=180] (0,0);
    \draw[blue, thick] (-1,1)  to [out=-90,in=180] (0,-2);
    
    \node at (-2,-1) {$x=1$};
    
    \node[above] at (-1,1) {$Q$};
    \draw[black] plot [mark=*, mark size=2] coordinates{(-1,1)};
    
    \draw[thick] (0,0) -- (5,0);
    \draw[black] plot [mark=*, mark size=2] coordinates{(0,0)};
    \draw[black] plot [mark=*, mark size=2] coordinates{(5,0)};
    
    \draw[thick] (0,-1) -- (5,-1);
    \draw[black] plot [mark=*, mark size=2] coordinates{(0,-1)};
    \draw[black] plot [mark=*, mark size=2] coordinates{(5,-1)};

    \node[right] at (-0.5,0.5) {$x=0$}; 
    
    \draw[blue, thick] (5,0)  to [out=0,in=0] (5.1,-1);
    \node[right] at (5.3,-0.8) {$y=0$};
    %\draw[blue, thick,->] (5,0)  to [out=0,in=180] (6,0);
    %\node[above] at (6,0) {$y=1$};
    
    \draw[thick] (0,-2) -- (5,-2);
    \draw[black] plot [mark=*, mark size=2] coordinates{(0,-2)};
    \draw[black] plot [mark=*, mark size=2] coordinates{(5,-2)};
    
    \end{tikzpicture}
    \caption{}
    \label{fig:connectedsurfacesintro}
    \end{subfigure}
    \caption{Some simple garden-hose protocols. Blue lines indicate Bell basis measurements. Black lines indicate shared EPR pairs, with the left side of the pairs held by Alice$_0$ and right side held by Alice$_1$. a) Garden-hose protocol for computing $AND(x,y)$. Alice$_0$ measures $Q$ and the first EPR pair in the Bell basis iff $x=1$. Alice$_1$ measures the two EPR pairs iff $y=0$. b) Garden-hose protocol for $OR(x,y)$, which uses similar conditional measurements. Figure reproduced from \cite{cree2022code}.}
    \label{fig:GHexamples}
\end{figure*}

To complete a routing task locally, the simplest strategy is to bring $x$, $y$ and $Q$ together, compute the function $f$, and then direct $Q$ based on the result of the computation. 
To perform the routing task non-locally, a well known protocol is the garden-hose strategy \cite{buhrman2013garden}.
The garden-hose protocol involves sharing EPR pairs between the left and right, then doing a set of Bell measurements on 
$Q$ and entangled particles.
Which measurements are performed depends on the values of the strings $x$ and $y$.
The measurement outcomes are then communicated to both of the regions $\mathcal{R}_0$ and $\mathcal{R}_1$.
For any function $f$, if the mappings from strings $x$, $y$ to a set of measurements on both sides is chosen then $Q$ is brought to $\mathcal{R}_{f(x,y)}$.
We give simple examples of computing a NOT and AND function in \cref{fig:GHexamples}.

The entanglement cost of performing the routing task in this way has a simple relationship with the complexity of the function $f$, as was shown in \cite{buhrman2013garden}. 
To state their result we will need a different notion of complexity, which contrasts with our notion of `interaction-class' complexity from \cref{def:entangledpartcomplexity}. 
\begin{definition}\label{def:(2)complexity}
Given a classical function $f:\{0,1\}^{\times n}\times \{0,1\}^{\times n}\rightarrow \{0,1\}$ and a notion of the complexity of a classical function, call it $\mathscr{C}(\cdot)$, define the pre-processed complexity of $f$ by
\begin{align}
    \mathscr{C}_{(2)}(f) \defi \min_{\substack{M,\alpha,\beta:\\f(x,y) = M(\alpha(x),\beta(y))}}\mathscr{C}(M)\nonumber,
\end{align}
where $\alpha,\beta$ are functions mapping strings of length $n$ to strings of some length $m$.  
\end{definition}

With this notion of pre-processed complexity, we can state the result of \cite{buhrman2013garden}, which is as follows. 
Let $GH(f)$ denote the entanglement cost to perform the $f$-routing task using the garden-hose strategy.
Let SPACE$(f)$ be the memory required to compute $f$ on a Turing machine. 
Then
\begin{align}\label{eq:GHbounds}
    2^{\text{SPACE}_{(2)}(f)} \leq GH(f) \leq 2^{O(\text{SPACE}_{(2)}(f))}.
\end{align}
Notice that the upper and lower bounds in \cref{eq:GHbounds} are related polynomially, as claimed. 
Interestingly, it is not the circuit complexity but a different notion of complexity --- space needed on a Turing machine --- that appears here. 

To relate this result to our \cref{conjecture:weakcomplexity-entanglement}, we should understand the relationship between the pre-processed complexity and the interaction-class complexity $\mathscr{C}_{A_0:A_1}$. 
In particular, we need a notion of interaction-class space complexity to make a comparison. 
\begin{definition}
For an $f$-routing task, consider protocols of the local form given in \cref{fig:entangledpart} which use only classical computation and Bell basis measurements, and where the measurements may be controlled based on the bits held in memory during the classical computation. We define SPACE$_{Q,x:y}(f)$ as the minimal memory needed in the interaction unitary.
\end{definition}
With this definition, we can state the following lemma that relates interaction-class of pre-processed notions of complexity. 
\begin{lemma}\label{lemma:2vsESPACE}SPACE$_{Q,x:y}(f) = \Theta(\text{SPACE}_{(2)}(f))$.
\end{lemma}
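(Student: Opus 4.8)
The plan is to establish the two directions of $\text{SPACE}_{Q,x:y}(f) = \Theta(\text{SPACE}_{(2)}(f))$ separately, by exhibiting a translation in each direction between pre-processed Turing-machine computations of $f$ and interaction-class protocols built from classical computation plus Bell measurements. First I would unwind the definitions: an interaction-class protocol of the type counted by $\text{SPACE}_{Q,x:y}(f)$ is one of the local form of \cref{fig:entangledpart}, where $\mathbf{V}^L$ acts on $(Q,x)$ together with locally prepared EPR halves, $\mathbf{V}^R$ acts on $y$ together with the complementary EPR halves, the interaction unitary $\mathbf{U}^I$ is restricted to classical post-processing together with a controlled pattern of Bell-basis measurements, and the memory used inside $\mathbf{U}^I$ is what we are minimizing. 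The key observation is that in such a protocol the isometries $\mathbf{V}^L,\mathbf{V}^R$ can only perform \emph{local} classical processing of $x$ (resp.\ $y$) before the strings enter the interaction region, so the data about $x,y$ reaching $\mathbf{U}^I$ is exactly some $\alpha(x)$ and $\beta(y)$ --- this is where the pre-processing functions $\alpha,\beta$ of \cref{def:(2)complexity} come from.

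For the direction $\text{SPACE}_{Q,x:y}(f) = O(\text{SPACE}_{(2)}(f))$, I would take optimal $M,\alpha,\beta$ with $f(x,y)=M(\alpha(x),\beta(y))$ and $\mathscr{C}(M)=\text{SPACE}_{(2)}(f)$ (with $\mathscr{C}$ being Turing-machine space), and build the garden-hose-style interaction protocol from it. Concretely, the $\mathbf{V}^L$ isometry computes $\alpha(x)$, the $\mathbf{V}^R$ isometry computes $\beta(y)$, and the interaction unitary simulates the space-$s$ Turing machine $M$ on input $(\alpha(x),\beta(y))$ using $O(s)$ bits of memory, using that memory to control a chain of Bell measurements that, exactly as in the garden-hose construction of \cite{buhrman2013garden}, routes $Q$ to output port $f(x,y)$. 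The memory used inside $\mathbf{U}^I$ is dominated by the $O(s)$ bits of the Turing-machine simulation (the Bell-measurement bookkeeping is also polynomial in $s$), giving the bound. For the reverse direction $\text{SPACE}_{(2)}(f) = O(\text{SPACE}_{Q,x:y}(f))$, I would take an optimal interaction-class protocol using memory $s'$ inside $\mathbf{U}^I$; set $\alpha(x),\beta(y)$ to be the classical data that $\mathbf{V}^L,\mathbf{V}^R$ feed into the interaction region; and then argue that the behavior of the interaction unitary --- a controlled sequence of Bell measurements plus classical updates of an $s'$-bit register, whose only relevant output is which of the two output ports $Q$ emerges from --- can be simulated on a Turing machine using $O(s')$ space. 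The point is that the Bell-measurement pattern on the shared and locally-prepared EPR pairs defines a permutation/connection structure whose ``which port does $Q$ exit'' question is decidable by tracing a path through the connection graph, and maintaining the current position plus the $s'$-bit control register takes $O(s')$ space; this computes $f(\alpha(x),\beta(y)) = f(x,y)$, so $\text{SPACE}_{(2)}(f) = O(s')$.

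The main obstacle I expect is making precise, in the reverse direction, the claim that an interaction unitary restricted to ``classical computation and controlled Bell measurements'' can be faithfully simulated by a small-space classical machine whose output is the routing bit. One must handle the fact that measurement outcomes are random, and that the protocol must succeed for all outcomes --- so the garden-hose ``pipe'' structure is really a graph where $Q$'s trajectory is outcome-independent in terms of which \emph{port} it reaches (only the Pauli corrections depend on outcomes, and those are classically communicated and irrelevant to routing). Pinning down that the routing graph has size controlled by the number of EPR pairs, which is itself at most exponential in $s'$ but whose relevant path can be traced in space $O(s')$ given the controls, requires care; this is essentially the content of the garden-hose/space equivalence of \cite{buhrman2013garden}, adapted to the interaction-class setting, and I would lean on that correspondence rather than redo it from scratch. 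A secondary subtlety is checking that allowing a quantum system $Q$ (rather than a classical bit) to be routed does not change the space count --- but since $Q$ is never measured in a basis that extracts its value and only ever participates in Bell measurements, its presence is immaterial to the classical simulation.
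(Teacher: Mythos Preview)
Your plan matches the paper's approach closely. Both directions proceed exactly as you outline: for $\text{SPACE}_{Q,x:y}(f)\leq \text{SPACE}_{(2)}(f)$ one feeds $\alpha(x),\beta(y)$ into an interaction unitary that runs $M$ and routes $Q$; for the reverse direction one traces the location of $Q$ through the chain of Bell measurements and observes that this tracing, together with the original control register, fits in $O(s')$ space.

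Two small points where the paper is sharper than your sketch. First, in the easy direction you invoke the garden-hose construction inside the interaction unitary, but this is unnecessary: in the local form of \cref{fig:entangledpart} the interaction unitary has direct access to both output wires, so once it has computed $f(x,y)$ it can simply send $Q$ out on the appropriate wire, with $\mathbf{W}^L,\mathbf{W}^R$ trivial. No Bell-measurement bookkeeping enters the space count here. Second, in the reverse direction the crux is bounding the size of the ``current position of $Q$'' register, and you should make explicit the reason this costs only $O(s')$ bits: the interaction unitary's own classical computation must already be able to name each of its input wires (to decide which Bell measurements to perform), so its memory $s'$ is at least $\log(\text{number of input wires})$, hence the position index fits in $s'$ bits. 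Relatedly, $\alpha(x)$ must carry not only the classical data $\mathbf{V}^L$ sends to the interaction region but also the label of which wire $Q$ sits on after $\mathbf{V}^L$'s Bell measurements; the paper computes this as part of the pre-processing and argues its size is bounded by $s'+O(1)$ for the same indexing reason. With these two clarifications your argument goes through directly, and there is no need to appeal to \cite{buhrman2013garden} as a black box.
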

We show this in \cref{appendix:complexitymeasures}. 

Because of \cref{lemma:2vsESPACE}, \cref{eq:GHbounds} remains true (up to some changed constants) with the pre-processed complexity of $f$ replaced by the interaction-class complexity of the $f$-routing task. 
At least under the restriction to garden-hose protocols then, \cite{buhrman2013garden} establishes polynomially related upper and lower bounds on entanglement cost in terms of interaction-class complexity. 

%%%%%%%%%%%%%%%%%%%%%%%%%%%%%%%%%%%%%%%%%%%%%%%%
\section{A prediction for \texorpdfstring{$f$}{TEXT}-routing}\label{sec:froutingprediction}
%%%%%%%%%%%%%%%%%%%%%%%%%%%%%%%%%%%%%%%%%%%%%%%%

\Cref{conjecture:weakcomplexity-entanglement} states that the complexity of a task controls the entanglement cost needed to implement it non-locally. 
If indeed complexity controls entanglement, it is meaningful to compare complexity-entanglement relationships among different tasks. 
In particular, in this section we point out that the Clifford protocol \ref{protocol:cliffords} efficiently implements computations of higher complexity than does the garden-hose protocol. 
If \cref{conjecture:weakcomplexity-entanglement} is true then, it would follow that improvements to the garden-hose protocol exist. 
We explore this in more detail below. 

Using polynomial entanglement, the garden-hose protocol introduced in the last section can complete the $f$-routing task for functions $f$ that, after local pre-processing, can be computed in log-space. 
We recall the following definition from \cite{buhrman2013garden} that captures the set of such functions. 
\begin{definition}
The class L$_{(2)}$ is the set of function families $\{f_n\}_n$, $f_n:\{0,1\}^{\times n}\rightarrow \{0,1\}$ such that $\spacet (f_n)=O(\log n)$, with $\spacet(\cdot)$ defined as in \cref{def:(2)complexity}. 
\end{definition}
In terms of this definition, the efficiently implementable functions in the garden-hose model are those functions in L$_{(2)}$. 

To characterize the complexity of the Clifford unitaries, consider the following computational problem.

\vspace{0.3cm}
\noindent \textbf{CliffordMeasurement$_d$:}
\begin{itemize}
    \item Input: A description of a Clifford circuit over $n$ qudits and a $Z$ eigenstate $\ket{k}$.
    \item Output: $\bra{0} \mathbf{C}^\dagger \Pi_{k,n} \mathbf{C} \ket{0}$.
\end{itemize}

Here $\Pi_{k,n}$ is the projector $\ketbra{k}{k}$ on the $n$th qudit. 
The next theorem characterizes the complexity of Clifford unitaries. 

\begin{theorem} \label{thm:Cliffordcomplete}
The CliffordMeasurement$_d$ problem is complete under log-space reductions for the class $\moddl$.
\end{theorem}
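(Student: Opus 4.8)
The plan is to prove the two inclusions separately, using the symplectic (Gottesman--Knill) description of the qudit Clifford group as the bridge to linear algebra over $\mathbb{F}_d$, which is the source of $\moddl$-completeness.

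\textbf{Membership in $\moddl$.} First I would observe that $\langle 0|\mathbf{C}^\dagger \Pi_{k,n}\mathbf{C}|0\rangle$ is exactly the probability of obtaining outcome $\omega^k$ when the $n$-th qudit of the pure stabilizer state $\mathbf{C}|0\rangle$ is measured in the $Z$ basis, and that for $d$ prime this probability always lies in $\{0,1/d,1\}$: it equals $1/d$ unless $Z_n$ (up to phase) lies in the stabilizer group $\mathcal{S}(\mathbf{C}|0\rangle)$, in which case the outcome is deterministic and the value is fixed by the eigenvalue phase. Each of these facts is decidable by linear algebra over $\mathbb{F}_d$. Conjugating the stabilizer generators $Z_1,\dots,Z_n$ of $|0\rangle$ through the gate sequence amounts to composing the $O(\mathrm{poly})$ affine symplectic matrices over $\mathbb{F}_d$ attached to the $\mathrm{CNOT}$, $H$, and $S$ gates --- an iterated matrix product over $\mathbb{F}_d$, hence in $\moddl$ --- together with a phase that is a bounded-size sum of quadratic terms in the matrix entries, again $\moddl$-computable (with the familiar $\pm i$ care when $d=2$, still inside $\modtl=\oplus\mathrm{L}$). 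Then deciding whether the symplectic vector of $Z_n$ lies in the span of the resulting $n$ generator vectors, and if so solving for the coefficients and evaluating the induced phase, is solving a linear system over $\mathbb{F}_d$, again in $\moddl$. Since $\moddl$ for prime $d$ is closed under logspace Turing reductions, composing these steps gives $\mathrm{CliffordMeasurement}_d\in\moddl$.

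\textbf{$\moddl$-hardness.} For the reverse direction I would reduce from deciding feasibility of a linear system $Ax=b$ over $\mathbb{F}_d$, which (together with its complement, both available in $\moddl$ since rank and determinant mod $d$ are $\moddl$-computable functions) is $\moddl$-complete under logspace reductions. Given $A\in\mathbb{F}_d^{m\times m}$ and $b$, I would build in logspace a Clifford circuit $\mathbf{C}$ on $\mathrm{poly}(m)$ qudits, using only $H$ gates and generalized $\mathrm{SUM}$ (scaled-$\mathrm{CNOT}$) gates, so that $\mathbf{C}|0\rangle$ is, up to local relabelling, the uniform superposition over the affine solution set of $Ax=b$: the $Z$-type stabilizers of $\mathbf{C}|0\rangle$ encode the row space of $A$ and the affine shift encodes $b$. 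Crucially the circuit is written down from $A$ and $b$ entrywise and does not depend on the unknown answer. By routing the consistency condition into a single designated output qudit --- via a constant-depth layer of $\mathrm{SUM}$ gates plus a Pauli correction read off from $b$ --- I would arrange, for an appropriate $k$, that $\langle 0|\mathbf{C}^\dagger\Pi_{k,n}\mathbf{C}|0\rangle$ equals $1$ exactly when $Ax=b$ is feasible and $0$ otherwise, so that the $\mathrm{CliffordMeasurement}_d$ answer is the feasibility bit.

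\textbf{Main obstacle.} The membership direction I expect to be essentially routine once the stabilizer formalism is in place; the only mild care is the phase bookkeeping for qudit Cliffords. The real work is the hardness reduction, because a single-qudit $Z$-projector only detects whether one qudit is pinned by the stabilizer group, whereas feasibility of $Ax=b$ is a global linear-algebraic predicate. The crux is therefore the ``packing'' step that compresses a multi-qudit consistency check down to one qudit while keeping everything logspace-uniform and answer-independent; I would most likely do this on the dual (left-kernel) side, via Fredholm's alternative, so that infeasibility becomes ``a fixed vector lies in a row space,'' a condition naturally read off from whether a single designated stabilizer generator survives --- and making that clean is where the effort goes.
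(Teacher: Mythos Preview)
The paper does not actually prove this theorem: immediately after the statement it simply says ``This is proven in \cite{aaronson2004improved} for $d=2$ and in \cite{de2011linearized} for prime $d$,'' and moves on. So there is no in-paper argument to compare against; the result is quoted from the literature.

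That said, your outline is essentially the approach taken in those references. Aaronson--Gottesman establish membership by showing that stabilizer-tableau simulation reduces to iterated matrix products and linear-system solving over $\mathbb{F}_2$, placing the problem in $\oplus\mathrm{L}$, and obtain hardness by encoding $\mathbb{F}_2$ linear algebra (concretely, simulating a $\mathrm{CNOT}$-only circuit, which is $\oplus\mathrm{L}$-complete) as a Clifford measurement instance; de~Beaudrap generalizes this to prime $d$ via the affine symplectic representation of the qudit Clifford group. Your membership argument is exactly this, including the correct caveat about phase bookkeeping. For hardness, the cited papers take a somewhat more direct route than your $Ax=b$ feasibility reduction: since a $\mathrm{CNOT}$/$\mathrm{SUM}$-only circuit acting on computational basis states already implements an arbitrary invertible linear map over $\mathbb{F}_d$, and simulating such circuits is $\moddl$-hard, one gets hardness almost for free without needing your ``packing'' step or the Fredholm-alternative trick. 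Your approach would also work, but the obstacle you flag is largely an artifact of choosing feasibility of a general linear system rather than the more structured $\mathrm{CNOT}$-circuit simulation problem as the source.
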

This is proven in \cite{aaronson2004improved} for $d=2$ and in \cite{de2011linearized} for prime $d$. 
To clarify the meaning of complete under log-space reductions, this means that any problem in the class $\moddl$ can be mapped to an instance of Clifford measurement using a Turing machine that has logarithmic memory, and that the Clifford measurement problem is itself in the class. 
It is important that this theorem does \emph{not} say that, given a function $f$ in $\moddl$ and input $x$, that there is a single Clifford circuit $\mathbf{C}_f$ into which we can feed the input to compute $f$. Instead, the choice of Clifford circuit depends on both $f$ and $x$, and the circuit always runs on the all zero's state. 

We can now argue as follows. First, assume the weak complexity-entanglement conjecture, which says that complexity controls entanglement in non-local computation up to polynomial overheads and second, note that the Clifford's have complexity $\moddl$ and can be implemented efficiently. Together, these imply the $f$-routing task should be efficiently implementable for all functions $f$ in $\moddl$. 
Allowing for pre-processing, we should further be able to achieve the class $\moddl_{(2)}$, defined from $\moddl$ in a way analogous to $L_{(2)}$ from $L$. 

The most obvious approach to this problem is to somehow solve the $f$-routing task by using the Clifford protocol \ref{protocol:cliffords} as a subroutine, perhaps by using the log-space reduction from functions in $\moddl$ to Clifford circuits. Because the reduction is really a map from the function plus the given input string to a Clifford circuit, it is not possible (so far as we understand) to do this.
Nonetheless our claim is that it should somehow be possible to improve on the garden-hose protocol and achieve those functions in $\moddl$. 

Recently, we studied the $f$-routing task and were able to achieve exactly the expected improvement \cite{cree2022code}. 
In particular, we showed
\begin{align}\label{eq:coderoutingupper}
    E(f) \leq O(SP_{d,(2)}(f))
\end{align}
where
\begin{align}
    SP_{d,(2)} (f) = \min_{\substack{M,\alpha,\beta:\\f(x,y) = M(\alpha(x),\beta(y))}}SP_d(M)\nonumber,
\end{align}
and $SP_d(M)$ is the minimal size of a span-program \cite{karchmer1993span} over the field $\mathbb{Z}_d$ that computes $M$. 
The complexity class of functions that can be computed with polynomial-sized span programs is $\moddl$, so that here the functions for which we can perform $f$-routing using polynomial entanglement is at least $\moddl_{(2)}$, where again the added subscript accounts for performing local pre-processing of the inputs. 
This aligns exactly with the expectation raised above, which followed from the weak complexity-entanglement conjecture. 

The protocol which achieves this improvement is inspired loosely by the holographic context, where error-correction is understood to play an important role in how bulk information is recorded into the boundary. 
Roughly, the idea of the protocol is to combine quantum error-correction with the teleportation based strategies used in the garden-hose protocol. 
Two simple examples are shown in \cref{fig:CRexamples}. 
Replacing the simple threshold code used in those examples with the general access structure secret sharing schemes constructed in \cite{smith2000quantum}, it is possible to achieve the performance given in \cref{eq:coderoutingupper}. 
This is discussed in detail in \cite{cree2022code}. 

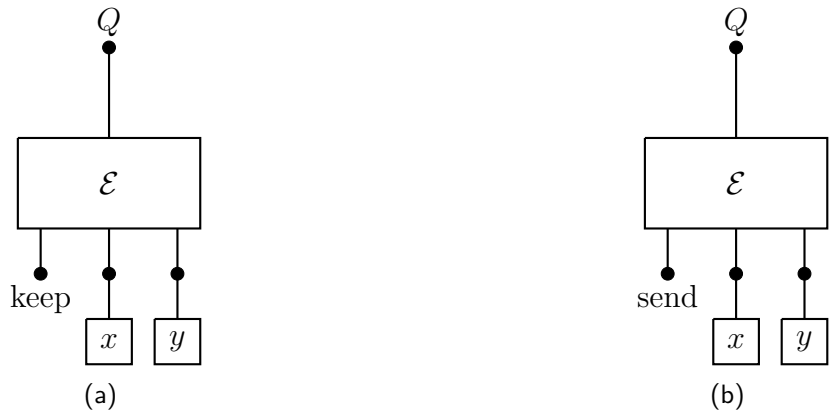
\begin{figure*}
    \centering
    \begin{subfigure}{0.45\textwidth}
    \centering
    \begin{tikzpicture}[scale=1.2]
    
    \draw[black,thick] (-3,3) -- (-1,3) -- (-1,2) -- (-3,2) -- (-3,3);
    \node at (-2,2.5) {$\mathcal{E}$};
    \draw[black,thick] (-2,4) -- (-2,3);
    \draw[black] plot [mark=*, mark size=2] coordinates{(-2,4)};
    
    \draw[black,thick] (-1.25,2) -- (-1.25,1.5);
    \draw[black,thick] (-2,2) -- (-2,1.5);
    \draw[black,thick] (-2.75,2) -- (-2.75,1.5);
    
    \draw[black] plot [mark=*, mark size=2] coordinates{(-1.25,1.5)};
    \draw[black] plot [mark=*, mark size=2]
    coordinates{(-2,1.5)};
    \draw[black] plot [mark=*, mark size=2] coordinates{(-2.75,1.5)};
    \node[below] at (-2.75,1.5) {keep};
    
    \node[above] at (-2,4) {$Q$};
    
    \draw[thick] (-1.5,1) -- (-1,1) -- (-1,0.5) -- (-1.5,0.5) -- (-1.5,1);
    \draw[thick] (-1.25,1.5) -- (-1.25,1);
    \node at (-1.25,0.75) {$y$};
    
    \draw[thick] (-2.25,1) -- (-1.75,1) -- (-1.75,0.5) -- (-2.25,0.5) -- (-2.25,1);
    \draw[thick] (-2,1.5) -- (-2,1);
    \node at (-2,0.75) {$x$};
    
    \end{tikzpicture}
    \caption{}
    \label{fig:AND}
    \end{subfigure}
    \hfill
    \begin{subfigure}{0.45\textwidth}
    \centering
    \begin{tikzpicture}[scale=1.2]
    
    \draw[black,thick] (-3,3) -- (-1,3) -- (-1,2) -- (-3,2) -- (-3,3);
    \node at (-2,2.5) {$\mathcal{E}$};
    \draw[black,thick] (-2,4) -- (-2,3);
    \draw[black] plot [mark=*, mark size=2] coordinates{(-2,4)};
    \node[above] at (-2,4) {$Q$};
    
    \draw[black,thick] (-1.25,2) -- (-1.25,1.5);
    \draw[black,thick] (-2,2) -- (-2,1.5);
    \draw[black,thick] (-2.75,2) -- (-2.75,1.5);
    
    \draw[black] plot [mark=*, mark size=2] coordinates{(-1.25,1.5)};
    \draw[black] plot [mark=*, mark size=2]
    coordinates{(-2,1.5)};
    \draw[black] plot [mark=*, mark size=2] coordinates{(-2.75,1.5)};
    
    \node[below] at (-2.75,1.5) {send};
    
    \draw[thick] (-1.5,1) -- (-1,1) -- (-1,0.5) -- (-1.5,0.5) -- (-1.5,1);
    \draw[thick] (-1.25,1.5) -- (-1.25,1);
    \node at (-1.25,0.75) {$y$};
    
    \draw[thick] (-2.25,1) -- (-1.75,1) -- (-1.75,0.5) -- (-2.25,0.5) -- (-2.25,1);
    \draw[thick] (-2,1.5) -- (-2,1);
    \node at (-2,0.75) {$x$};
    
    \end{tikzpicture}
    \caption{}
    \end{subfigure}
    \caption{Some simple code-routing protocols. The map $\mathcal{E}$ takes in the $Q$ system and records it into a 3 share secret sharing scheme where any 2 shares recover the secret. The lower boxes indicate the share should be brought to the side labelled by the corresponding bit value, which we can achieve using a garden-hose protocol that uses $O(1)$ EPR pairs. a) Code-routing protocol for computing $AND(x,y)$. b) Code-routing protocol for $OR(x,y)$. Figure reproduced from \cite{cree2022code}.}
    \label{fig:CRexamples}
\end{figure*}

Unfortunately, for code-routing protocols we were not able to prove complexity based lower bounds on entanglement cost, as is possible in for example the garden-hose case. 
Some partial results were established in \cite{cree2022code}.
It is interesting to note however that the codes constructed in \cite{smith2000quantum} are stabilizer codes, so the code-routing protocol uses only classical computations plus Clifford unitaries.
It might be possible then to prove lower bounds on code-routing using the circuit manipulation technique of \cref{sec:cliffordmatching}.
The key step would be to relate the pre-processed and interaction class span program complexity. 

%%%%%%%%%%%%%%%%%%%%%%%%%%%%%%%%%%%%%%%%%%%%%%%%
\section{Discussion}\label{sec:discussion}
%%%%%%%%%%%%%%%%%%%%%%%%%%%%%%%%%%%%%%%%%%%%%%%%

Broadly, our strategy in this article has been to try and understand constraints on information processing in the presence of gravity by restricting to the setting of AdS/CFT, and making use of the boundary, non-gravitational description of a given bulk information processing task.
While we have focused on possible constraints on computation within a scattering region, there are other questions we could ask about bulk information processing. 
For example, another interesting setting was discussed in \cite{may2021bulk}. 

An important question is if constraints on computation derived from non-local computation and in the context of AdS/CFT will continue to hold in more general spacetimes. 
We find this plausible, for the following reasons. 
First, the scattering region can be constructed to be much smaller than the AdS length\footnote{This is true only if the stronger claim discussed in section \ref{sec:localtononlocal}, that the standard NLQC scenario without $\mathcal{X}$ regions constrains computation, is true even away from the asymptotic setting where the $\mathcal{X}$ regions shrink.}, suggesting the bulk geometry being asymptotically AdS doesn't play a role in enforcing any constraints. 
Second, computational statements often exhibit independence from the microscopic details of the underlying systems. 
For example, many different physical systems can be used as computers, and can simulate the computations run on each other with only polynomial overheads. 
Consequently, we expect the microscopic details of the bulk theory in any specific realization of AdS/CFT to also be unimportant in enforcing these constraints. 
Given that we expect neither the detailed field content nor the asymptotic structure to play a role, we are lead to think constraints derived from non-local computation should apply in general spacetimes.

In fact, this intuition is correct for the constraints we can currently derive from non-local computation. 
In \cref{sec:linearbounds} we reviewed linear constraints on entanglement in non-local computation, which in the bulk imply no more than an area's worth of inputs can be brought into the scattering region. 
The natural bulk mechanism that enforces these constraints is the covariant entropy bound. 
In an ahistorical world where the role of non-local computation in holography was understood before the covariant entropy bound, we would, according to the reasoning of the last paragraph, be led to conjecture the CEB holds in general spacetimes. 
Indeed, all evidence thus far points to this being a correct generalization.

Returning to the question of whether or not stronger than linear bounds on entanglement cost in non-local computation exist, it is interesting to frame the problem in terms of two options, at least one of which must be true. 

Either:
\begin{enumerate}
    \item There are no constraints on computation in the bulk beyond the covariant entropy bound, or...
    \item ...there is a so far not understood constraint on computation in the bulk.
\end{enumerate}
In the boundary, there are also two mutually exclusive possibilities,
\begin{enumerate}
    \item[a.] All computations can be implemented non-locally with entanglement which is no larger than their inputs plus a description of the unitary to be performed, or...
    \item[b.] ... some unitaries require more entanglement than the size of their inputs plus the size of their description. 
\end{enumerate}
Notice that we have framed the first possibility to account for the size of a description of the unitary, which must be brought into the region and contributes towards the entropy bounded by the CEB.
Notice also that our reasoning shows that, at least for scattering regions constructed with the initial CFT state being vacuum,  $1.\Rightarrow a.$ and $b. \Rightarrow 2.$, both by applying $QG(S_0)\subseteq NLQC(S_0)$.
We find both possibilities a. and b. above exciting: the first is surprising from a quantum information theoretic perspective, and has severe consequences for the cryptographic task of position-verification (see below for comments). The second is exciting from a gravitational perspective, in that it implies 2.

To better understand these possibilities, we have tried to identify plausible properties of a unitary that could control it's entanglement cost. 
In particular, we raised two conjectures which assert a role for the complexity in determining the entanglement cost of a unitary. 
We briefly summarize the evidence for the weak complexity-entanglement conjecture (\cref{conjecture:strongcomplexity-entanglement}), which says roughly that interaction-class complexity controls entanglement cost.
\begin{itemize}
    \item From \cref{sec:upperandlowerbounds}: there exist weak upper and lower bounds on entanglement cost in terms of complexity.
    \item From \cref{sec:cliffordmatching}: restricting to protocols that use Clifford unitaries, entanglement cost is upper and lower bounded by complexity, with the upper and lower bounds being related linearly. 
    \item From \cref{sec:GH}: restricting to protocols that use Bell basis measurements, polynomially related upper and lower bounds on entanglement can be proven in terms of complexity. 
    \item From \cref{sec:froutingprediction}: as a consequence of the weak complexity-entanglement conjecture, we find that $f$-routing should be possible for all functions $f$ in $\moddl_{(2)}$ using polynomial entanglement. After this prediction was noticed, we were able to verify it \cite{cree2022code}.
\end{itemize}

The strong complexity-entanglement conjecture (\cref{conjecture:weakcomplexity-entanglement}) states that interaction-class complexity and entanglement cost should be related polynomially.
We briefly summarize the evidence. 
\begin{itemize}
    \item From \cref{sec:whatisthedual?}: there is a plausible, simple, assumption about bulk physics that would enforce the resulting upper bound on complexity in the scattering region.
    \item From the bound $E = O (2^{8n'})$: the maximal entanglement cost is polynomial in the maximal complexity (they are both exponential in $n'$, the input size). As well, whenever interaction-class complexity is zero the entanglement cost is also zero, and vice versa.
    \item From \cref{sec:cliffordmatching}: restricting to protocols that use Clifford unitaries, entanglement cost and complexity are related polynomially. 
\end{itemize}
We discuss some future directions aimed at better understanding these conjectures in the next section. 

We also note that resolving these conjectures, and better understanding entanglement cost in non-local computation generally, has important implications for position-verification \cite{kent2006tagging,kent2011quantum,malaney2010location,buhrman2014position}. 
In that context, Bob wishes to verify that Alice is able to perform quantum operations within a specified spacetime region. 
To do this, Bob arranges a quantum task of his choosing.
If Alice behaves honestly, she will complete the task locally by entering the given spacetime region and performing the needed operations.  
If Alice behaves dishonestly, she will use a non-local computation to complete the task while remaining outside of the spacetime region. 

Ideally, Bob chooses a task such that the local strategy is much easier in some way than the non-local strategy, so that it is much easier to be honest than to cheat.
An ideal task for the cryptographic setting then is one which has low complexity but high entanglement cost. 
If \cref{conjecture:strongcomplexity-entanglement} is true, this will in one sense never be the case, since entanglement cost and complexity would be related polynomially. 
Of course, one could still consider settings where, for example, the local computation is mostly classical and of polynomial complexity, while the non-local computation requires polynomial entanglement. 
Assuming classical computation is easy and storing (even polynomial) entanglement is hard then still leads to a gap between the difficulty of honest and dishonest strategies. 
For work in this direction see \cite{bluhm2021position}. 

%%%%%%%%%%%%%%%%%%%%%%%%%%%%%%%%%%%%%%%%%%%%%%%%
\section{Future directions}\label{sec:futuredirections}
%%%%%%%%%%%%%%%%%%%%%%%%%%%%%%%%%%%%%%%%%%%%%%%%

There are many open avenues towards better understanding the relationship between complexity, entanglement, and computation in the presence of gravity. We list a few directions below. 

%%%%%%%%%%%%%%%%%%%%%%%%%%%%%%%%%%%%%%%%%%%%%%%%
\vspace{0.3cm}
\noindent \textbf{Commuting algebras vs tensor products}
\vspace{0.3cm}
%%%%%%%%%%%%%%%%%%%%%%%%%%%%%%%%%%%%%%%%%%%%%%%%

While we are interested in AdS/CFT, concretely we have studied non-local quantum computation assuming that the two players Alice$_0$ and Alice$_1$ hold factorized Hilbert spaces. 
In AdS/CFT, the subsystems held by Alice$_0$ and Alice$_1$ (including their resource states) instead are described by commuting subalgebras.
For at least some information processing tasks, in particular non-local games \cite{ji2021mip}, there can be a gap between the best strategies in the tensor product and general von Neumann algebra cases.
One basic question is to understand if a similar improvement by generalizing beyond tensor product structures occurs in the context of non-local computation. Note however that the examples in \cite{ji2021mip} rely on algebras that fail to have the split property. Since the relevant regions here, $\mathcal{V}_0$ and $\mathcal{V}_1$, are separated, we naively don't expect a similar gap.\footnote{We thank Patrick Hayden for making this point to us.}

%%%%%%%%%%%%%%%%%%%%%%%%%%%%%%%%%%%%%%%%%%%%%%%%
\vspace{0.3cm}
\noindent \textbf{Improved lower bounds on entanglement}
\vspace{0.3cm}
%%%%%%%%%%%%%%%%%%%%%%%%%%%%%%%%%%%%%%%%%%%%%%%%

In this work we gave the first lower bound on entanglement from complexity.
Qualitatively, this is different from existing bounds, which are either for specific tasks \cite{tomamichel2013monogamy,beigi2011simplified} or specific families of tasks \cite{bluhm2021position}.
While it is interesting a lower bound of this type exists, the bound we have given is much weaker than suggested by our `strong' complexity-entanglement conjecture.  
An interesting goal is to strengthen the lower bound.
One route towards doing so would be to understand if port-teleportation can be carried out with a low complexity measurement.
This may also be a question of independent interest in quantum information.\footnote{Since this article first appeared, an efficient protocol for port-teleportation was developed \cite{timmerman2022quantum}. Combining this result with our surgery technique leads to a lower bound of $\log \mathscr{C}\leq E$ rather than our $\log \log \mathscr{C} \leq E$.} 

%%%%%%%%%%%%%%%%%%%%%%%%%%%%%%%%%%%%%%%%%%%%%%%%
\vspace{0.3cm}
\noindent \textbf{Improved upper bounds on entanglement}
\vspace{0.3cm}
%%%%%%%%%%%%%%%%%%%%%%%%%%%%%%%%%%%%%%%%%%%%%%%%

The upper bound on entanglement cost from complexity is also weaker than suggested by the strong complexity-entanglement conjecture. 
Recently, we improved on the most-efficient known protocol for performing the $f$-routing task by exploiting error-correcting codes \cite{cree2022code}. 
The use of error-correction was loosely inspired by its role in AdS/CFT. 
It would be interesting to understand if more detailed features of AdS/CFT could be used to inspire other non-local computation protocols, in particular protocols which solve tasks which implement general unitaries. 
One new protocol which efficiently performs computations on unitaries that do not spread operators too much was recently given in \cite{dolev2022non}, although the relationship here to complexity is unclear.

%%%%%%%%%%%%%%%%%%%%%%%%%%%%%%%%%%%%%%%%%%%%%%%%%%%%%%%%%%%%%%%
\vspace{0.3cm}
\noindent \textbf{Relaxing restrictions on protocols}
\vspace{0.3cm}
%%%%%%%%%%%%%%%%%%%%%%%%%%%%%%%%%%%%%%%%%%%%%%%%%%%%%%%%%%%%%%%

We studied protocols restricted to Clifford operations or restricted to Bell basis measurements, and found polynomial relationships between complexity and entanglement in those settings. 
Another approach to establishing the complexity-entanglement conjecture is to continue relaxing the set of allowed protocols, and attempt to establish good upper and lower bounds in the larger setting. 
For example, it might be interesting to restrict the circuit complexity of the protocol. 
This also upper bounds the complexity of the implemented unitary, so the upper bound due to Speelman \cite{speelman2015instantaneous} applies. 
We could look for a similar lower bound, perhaps using our `surgery' strategy combined with the techniques of \cite{speelman2015instantaneous}.

\vspace{0.3cm}
%%%%%%%%%%%%%%%%%%%%%%%%%%%%%%%%%%%%%%%%%%%%%%%%
\noindent \textbf{Acknowledgements}
%%%%%%%%%%%%%%%%%%%%%%%%%%%%%%%%%%%%%%%%%%%%%%%%
\vspace{0.3cm}

This project has grown out of many years of discussion with a large number of people, and I have surely forgotten to attribute some ideas to their original source. 
I thank Sam Cree who was involved in an early version of this project, as well as Jon Sorce, Kfir Dolev, and Patrick Hayden for careful reviews of this manuscript at various stages, as well as invaluable discussions. 
I also thank Adam Bouland, Anirudh Krishna, David P\'erez-Garcia, Aleksander Marcin Kubicki, Geoff Penington and Adam Brown for helpful discussions. 
I am supported by the Simons Foundation It from Qubit collaboration, a PDF fellowship provided by Canada's National Science and Engineering Research council, and by Q-FARM. 

\appendix

%%%%%%%%%%%%%%%%%%%%%%%%%%%%%%%%%%%%%%%%%%%%%%%%
\section{Pre-processed complexity and interaction-class complexity}\label{appendix:complexitymeasures}
%%%%%%%%%%%%%%%%%%%%%%%%%%%%%%%%%%%%%%%%%%%%%%%%

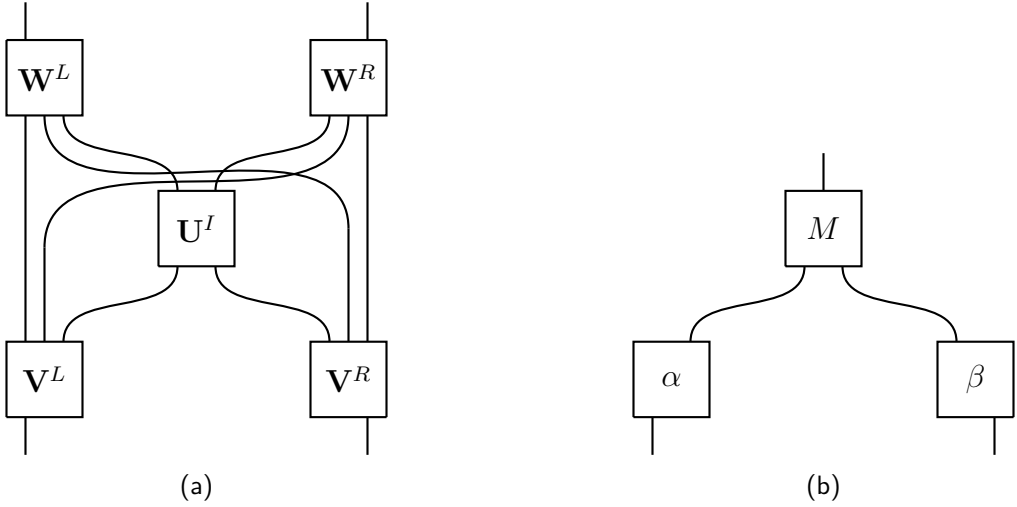
\begin{figure*}
    \centering
    \begin{subfigure}[b]{0.45\textwidth}
    \centering
    \begin{tikzpicture}[scale=0.5]
    
    %lower left box
    \draw[thick] (-5,-5) -- (-5,-3) -- (-3,-3) -- (-3,-5) -- (-5,-5);
    \node at (-4,-4) {$\mathbf{V}^L$};
    
    %lower right box
    \draw[thick] (5,-5) -- (5,-3) -- (3,-3) -- (3,-5) -- (5,-5);
    \node at (4,-4) {$\mathbf{V}^R$};
    
    %top right box
    \draw[thick] (5,5) -- (5,3) -- (3,3) -- (3,5) -- (5,5);
    \node at (4,4) {$\mathbf{W}^R$};
    
    %top left box
    \draw[thick] (-5,5) -- (-5,3) -- (-3,3) -- (-3,5) -- (-5,5);
    \node at (-4,4) {$\mathbf{W}^L$};
    
    %left vertical wire
    \draw[thick] (-4.5,-3) -- (-4.5,3);
    
    %right vertical wire
    \draw[thick] (4.5,-3) -- (4.5,3);
    
    %left to right wire
    \draw[thick] (-4,-0.5) to [out=90,in=-90] (4,3);
    \draw[thick] (-4,-0.5) -- (-4,-3);
    
    %right to left wire
    \draw[thick] (4,0) to [out=90,in=-90] (-4,3);
    \draw[thick] (4,0) -- (4,-3);
    
    %input wires
    \draw[thick] (-4.5,-6) -- (-4.5,-5);
    \draw[thick] (4.5,-6) -- (4.5,-5);
    
    %output wires
    \draw[thick] (4.5,5) -- (4.5,6);
    \draw[thick] (-4.5,5) -- (-4.5,6);
    
    %\draw[thick] (3.5,5) -- (3.5,6);
    %\draw[thick] (-3.5,5) -- (-3.5,6);
    
    %interaction unitary unitary
    \draw[thick] (-1,-1) -- (-1,1) -- (1,1) -- (1,-1) -- (-1,-1);
    
    %wires into interaction unitary
    \draw[thick] (-3.5,-3) to [out=90,in=-90] (-0.5,-1);
    \draw[thick] (3.5,-3) to [out=90,in=-90] (0.5,-1);
    
    %wires out of interaction unitary
    \draw[thick] (0.5,1) to [out=90,in=-90] (3.5,3);
    \draw[thick] (-0.5,1) to [out=90,in=-90] (-3.5,3);
    
    \node at (0,0) {$\mathbf{U}^I$};
    
    \end{tikzpicture}
    \caption{}
    \label{fig:non-localSPACE}
    \end{subfigure}
    \hfill
    \begin{subfigure}[b]{0.45\textwidth}
    \centering
    \begin{tikzpicture}[scale=0.5]
    
    %lower left box
    \draw[thick] (-5,-5) -- (-5,-3) -- (-3,-3) -- (-3,-5) -- (-5,-5);
    \node at (-4,-4) {$\alpha$};
    
    %lower right box
    \draw[thick] (5,-5) -- (5,-3) -- (3,-3) -- (3,-5) -- (5,-5);
    \node at (4,-4) {$\beta$};
    
    %input wires
    \draw[thick] (-4.5,-6) -- (-4.5,-5);
    \draw[thick] (4.5,-6) -- (4.5,-5);
    
    %interaction unitary unitary
    \draw[thick] (-1,-1) -- (-1,1) -- (1,1) -- (1,-1) -- (-1,-1);
    
    %wire into interaction unitary
    \draw[thick] (-3.5,-3) to [out=90,in=-90] (-0.5,-1);
    \draw[thick] (3.5,-3) to [out=90,in=-90] (0.5,-1);
    
    \draw[thick] (0,1) -- (0,2);
    
    \node at (0,0) {$M$};
    
    \end{tikzpicture}
    \caption{}
    \label{fig:preprocessonly}
    \end{subfigure}
    \caption{(a) Completing the routing task with pre- and post-processing. We restrict the protocol to use Bell basis measurements and classical computation. We are interested in the space complexity SPACE$_{Q,x:y}$ of the classical computation performed in the interaction unitary. (b) Computing a function $f$ using pre-processing. In this context we are interested in the minimal complexity of the function $M$, which we label SPACE$_{(2)}(f)$.}
    \label{fig:preprocessvsentangledpart}
\end{figure*}

In this article two notions of complexity appear, which both involve performing local operations on half the inputs or outputs, in an attempt to reduce the complexity of the remaining part that interacts both inputs. 
These are the interaction-class complexity of \cref{def:entangledpartcomplexity}, and the pre-processed complexity of \cref{def:(2)complexity}.
In this section, we prove the following theorem relating interaction-class and pre-processed space complexities, in the garden-hose setting.

\vspace{0.2cm}
\noindent \textbf{\Cref{lemma:2vsESPACE}}
\emph{SPACE$_{Q,x:y}(f) = \Theta(\text{SPACE}_{(2)}(f))$.}
\vspace{0.2cm}

\begin{proof}\,
Suppose we have a protocol for completing an $f$-routing task using only Bell basis measurements and classical computation, with pre-processed space complexity $SPACE_{(2)}(f)$. 
Then, this is already of the interaction-class form of \cref{fig:entangledpart} (with $\mathbf{W}^L, \mathbf{W}^R$ trivial), so 
\begin{align}
    SPACE_{Q,x:y}(f) \leq SPACE_{(2)}(f).
\end{align}

Next we want to prove the opposite inequality, up to constant factors. 
Assume we have a local computation circuit of the form shown in \cref{fig:entangledpart}, which defines a protocol $P$.
By assumption, this protocol consists only of Bell basis measurements and classical computations, and the classical computations performed in the interaction unitary can be performed in space SPACE$_{Q,x:y}(f)$

Now consider a new protocol $P'$, which includes the same Bell measurements and classical computations as in $P$, but adds a classical memory register $R$ which tracks the location of the $Q$ system through successive Bell measurements.
In particular, the register initially stores $`Q'$, the label for the system $Q$. 
Then, we run the classical computation associated with $P$ until it specifies that $QE_i$ should be measured in the Bell basis, where $E_iE_{j}$ is in a maximally entangled state. 
Then $E_j$ is recorded into the register, and the computation is restarted. 
We then look for a measurement involving $E_jE_k$, and record $E_k$ into the register, and so on. 
Note that initially $R$ should be held on the left, since $Q$ is on the left.  

If the system holding $Q$ is sent from the left directly to either $\mathcal{R}_0$ or $\mathcal{R}_1$, we erase the label of the system in the register and just record $0$ or $1$. 
Otherwise, $Q$ will be recorded into some system $E_i$ that is sent into the interaction unitary. 
In this case send the memory register along with the inputs to the interaction unitary.
Note that the memory added to the interaction unitary need not be bigger than SPACE$_{Q,x:y}(f)+2$. 
This is because the original computation must have a memory large enough to index the input systems, so the size of $R$ is less than SPACE$_{Q,x:y}(f)$.
The additional two bits are added to record the possibility of a direct sending of $Q$ to $\mathcal{R}_0$ or $\mathcal{R}_1$.\footnote{Use the first extra bit as a flag to denote that a direct send has occurred, and the second bit to record where it was sent.} 

In the interaction unitary, continue to use the memory register to track the location of the $Q$ system. 
Then, if this system is sent towards $\mathcal{R}_0$ output $0$, and if this system is sent towards $\mathcal{R}_1$ output $1$. 
This is now a computation in the pre-processed form, and the added memory register uses at most SPACE$_{Q,x:y}(f) + 2$, so
\begin{align}
    SPACE_{(2)}(f) \leq 2 SPACE_{Q,x:y}(f) + 2
\end{align}
so we have that SPACE$_{Q,x:y}(f) = \Theta(\text{SPACE}_{(2)}(f))$, as needed. 
\end{proof}

\bibliographystyle{unsrtnat}
\bibliography{biblio}

\end{document}